\documentclass[11pt]{article}
\usepackage[margin=1in]{geometry}
\usepackage[T1]{fontenc}
\usepackage[utf8]{inputenc}
\usepackage{lmodern}
\usepackage{microtype}
\usepackage{amsmath,amssymb,amsthm,mathtools}
\usepackage{algorithm}
\usepackage{algpseudocode}
\usepackage{caption}
\usepackage{enumitem}
\usepackage{booktabs,tabularx,array}
\usepackage{xcolor}
\usepackage{tikz}
\usetikzlibrary{arrows.meta,backgrounds,fit,positioning}
\usepackage{cite}
\usepackage{hyperref}
\hypersetup{
  colorlinks=true,
  linkcolor=blue!60!black,
  citecolor=blue!60!black,
  urlcolor=blue!60!black
}
\algrenewcommand\algorithmicrequire{\textbf{Input:}}
\algrenewcommand\algorithmicensure{\textbf{Output:}}

\newtheorem{theorem}{Theorem}[section]
\newtheorem{proposition}[theorem]{Proposition}
\newtheorem{lemma}[theorem]{Lemma}
\newtheorem{corollary}[theorem]{Corollary}
\theoremstyle{definition}
\newtheorem{definition}[theorem]{Definition}
\newtheorem{remark}[theorem]{Remark}

\newcommand{\R}{\mathbb{R}}
\newcommand{\Z}{\mathbb{Z}}
\newcommand{\1}{\mathbf 1}
\newcommand{\ee}{\mathbf e}
\newcommand{\ip}[2]{\left\langle #1,#2\right\rangle}
\newcommand{\norm}[1]{\left\lVert #1\right\rVert}
\newcommand{\normD}[1]{\left\lVert #1\right\rVert_D}

\newcommand{\Xzero}{\mathcal X_0}
\DeclareMathOperator{\supp}{supp}
\DeclareMathOperator{\argmin}{argmin}

\title{Solving Hypergraph Laplacian Systems in Almost-Linear Time}
\author{Yuichi Yoshida\thanks{Supported by JSPS KAKENHI Grant Number 22H05001 and 25K24465.}\\
National Institute of Informatics\\
\texttt{yyoshida@nii.ac.jp}}

\begin{document}
\hypersetup{pageanchor=false}
\pagenumbering{gobble}
\maketitle

\begin{abstract}
For a connected weighted hypergraph, we give a randomized almost-linear-time solver for the Poisson problem for the cut-based hypergraph Laplacian in the natural input size \(P=\sum_{e\in E}|e|\), the sum of hyperedge sizes.
For every fixed constant \(C>0\), our randomized algorithm runs in \(P^{1+o(1)}\) time and, with high probability over its internal randomness, returns a primal point and a dual certificate, with additive optimality gap at most \(\exp(-\log^C P)\).

A key step is to rewrite the Fenchel dual as a convex-flow problem on an auxiliary \(O(P)\)-arc graph, yielding a near-optimal dual flow.
The main difficulty is primal recovery, because this flow does not by itself determine a primal potential.
Our main new ingredient is a recovery theorem showing that, for primal recovery, the detailed routing of the dual flow inside each hyperedge gadget can be discarded: one nonnegative scalar per hyperedge is enough.
After the necessary finite-precision rounding, these scalars define a linear-cost min-cost-flow instance on the auxiliary graph, and solving it exactly recovers a primal potential.
Finally, a ground-vertex reduction from regularized objectives to the Poisson solver gives randomized almost-linear-time resolvent/proximal primitives for the same cut-based hypergraph Laplacian.
\end{abstract}
\pagestyle{empty}

\newpage

\tableofcontents
\pagestyle{empty}
\newpage
\hypersetup{pageanchor=true}
\pagenumbering{arabic}
\pagestyle{plain}

\section{Introduction}

Solving graph Laplacian systems is one of the basic algorithmic primitives of modern theoretical computer science. 
It underlies flow, diffusion, random-walk, learning, and network-analysis routines, and in graphs it admits almost-linear-time algorithms \cite{Vishnoi2013Lxb,KyngSachdeva2016}.
A natural question is whether one can obtain an analogous primitive for higher-order network models themselves. 
Hypergraphs capture multiway interactions directly, but the passage from graphs to hypergraphs quickly destroys the linear structure that makes graph Laplacian systems so tractable. 

There are several inequivalent notions of hypergraph Laplacian. We study the cut-based / Lov\'asz-extension model of Hein et al.\ \cite{Hein2013TVH} and Chan et al.\ \cite{ChanLouisTangZhang2018HypergraphLaplacian}, a standard nonlinear model that is also a special case of Yoshida's submodular-Laplacian framework \cite{Yoshida2019CheegerSubmodular}.
For this framework, Fujii, Soma, and Yoshida gave polynomial-time algorithms for submodular Laplacian systems, with more efficient formulations in the hypergraph case \cite{FujiiSomaYoshida2018}.
Other work studies tasks built from the same cut-based operator or its regularized variants, including pairwise potential-difference queries, diffusion, PageRank, localized ratio cut, resolvents, and regularized quadratic decomposable submodular function minimization (QDSFM) \cite{VeldtBK20,TakaiMiyauchiIkedaYoshida2020,liu2021strongly,AmeranisEtAl2023Resolvents,LiHeMilenkovic2020}.
These results motivate studying Poisson and regularized solvers for this operator, but they do not provide an almost-linear-time solver in the incidence size $P$ for the Poisson problem itself.
That is the gap we close here.

\subsection{Problem setting}

For a weighted hypergraph $H=(V,E,w)$, define for $x\in\R^V$ and $e\in E$
\[
R_e(x)=\max_{u\in e} x_u-\min_{v\in e} x_v,
\]
and let
\[
\mathcal E_H(x):=\frac12\sum_{e\in E} w_e R_e(x)^2,
\qquad
L_H:=\partial \mathcal E_H.
\]
Here \(\partial \mathcal E_H\) denotes the subdifferential of \(\mathcal E_H\). Thus $L_H$ is a set-valued nonlinear Laplacian, and the associated Poisson problem asks for a potential $x$ satisfying the inclusion
\[
L_H(x)\ni s
\]
for a prescribed demand vector $s\in\R^V$. Throughout, this is the only notion of hypergraph Laplacian considered here.

We measure the input size by
\[
P:=\sum_{e\in E} |e|,
\]
the incidence size of the hypergraph.

Let
\[
D=\operatorname{diag}(d_v),
\qquad
 d_v=\sum_{e\ni v} w_e,
\qquad
\Xzero:=\{x\in\R^V:\ip{Dx}{\1}=0\}.
\]

Except where explicitly stated otherwise, assume henceforth that $H$ is connected. Adding the same constant to every coordinate leaves both $\mathcal E_H$ and $L_H$ unchanged, so solutions to the Poisson problem are defined only up to an additive constant. We remove this ambiguity by restricting $x$ to $\Xzero$.
The corresponding variational formulation is the convex optimization problem
\begin{equation}
\label{eq:poisson-primal}
\mathcal P(x):=\mathcal E_H(x)-\ip{s}{x},
\qquad
\mathrm{OPT}:=\min_{x\in\Xzero} \mathcal P(x).
\end{equation}
This normalization removes only the additive ambiguity: even on \(\Xzero\), \(\mathcal E_H\) need not be strictly convex, so the minimizer need not be unique.
A point \(x^\star\in\Xzero\) is optimal for \eqref{eq:poisson-primal} if and only if
\[
s\in L_H(x^\star)+\operatorname{span}(D\1).
\]
The extra term \(\operatorname{span}(D\1)\) is the normal space of the constraint \(x\in\Xzero\).
Moreover, since \(\ip{D\1}{x}=0\) for every \(x\in\Xzero\), replacing \(s\) by \(s+cD\1\) does not change \eqref{eq:poisson-primal} on \(\Xzero\). Thus, at the conceptual level, one may choose the representative in this affine class whose coordinate sum is zero. In the algorithmic statements below, however, we do not use this as a finite-precision preprocessing step: the supplied demand vector is assumed to satisfy \(\1^\top s=0\) as part of the input.

When every hyperedge has size two, $R_{\{u,v\}}(x)=|x_u-x_v|$ and the problem reduces to the usual graph Poisson equation. The hypergraph case is substantially harder algorithmically: the operator is nonlinear, the equation is a subdifferential inclusion rather than a linear system, and standard graph-based escape routes such as clique expansion or star expansion change the model. 
Our goal is to solve this hypergraph Poisson problem itself in time almost linear in $P$.

All algorithmic statements below assume that the input lists, for each hyperedge \(e\), the vertices contained in \(e\); thus reading the hypergraph takes \(\Theta(P)\) time. We state all bit-complexity bounds under the following dyadic input model.\footnote{The dyadic assumption is used only to keep the finite-precision analysis explicit: sums of \(O(P)\) input numbers and the integer scalings used in exact min-cost-flow then keep a common denominator with exponent \(\log^{O(1)}P\). The same arguments would extend to rational inputs with a common denominator \(Q\) satisfying \(\log Q=\log^{O(1)}P\), whereas arbitrary polylog-bit denominators can have a least common multiple with \(P\log^{O(1)}P\) bits.}

\begin{definition}[Dyadic input model]
\label{def:dyadic-input}
A scalar $z$ is \emph{$L$-dyadic} if
\[
z=a2^{-q}
\]
for integers $a$ and $q\ge 0$ with
\[
q\le L,
\qquad
\log(|a|+1)\le L.
\]
Throughout the bit-complexity statements, for some $L=\log^{O(1)}P$, all input weights $w_e$ and demands $s_v$ are $L$-dyadic. Equivalently, every such input scalar has a binary representation of length $\log^{O(1)}P$, and the denominator exponent is explicitly bounded by $\log^{O(1)}P$.
\end{definition}

In addition, the phrase ``polynomially bounded'' means that for some constant $K_0\ge 1$,
\[
\norm{s}_\infty \le P^{K_0},
\qquad
P^{-K_0}\le w_e\le P^{K_0}\quad(\forall e\in E).
\]
\subsection{Our results}

Our main result is an almost-linear-time randomized solver for the hypergraph Poisson problem in the natural input size $P$. Informally, under the dyadic input model and polynomial-boundedness assumptions, the algorithm returns a rational primal point together with an explicit dyadic dual certificate, and both outputs have $\log^{O(1)}P$-bit coordinates.

Writing $\mathcal D$ for the dual objective and $\mathcal D^\star$ for its optimum, we use the convention in which the dual is also a minimization problem, so strong duality reads $\mathrm{OPT}+\mathcal D^\star=0$. The first main guarantee is then the following.

\begin{theorem}[Informal main theorem]
\label{thm:intro-main}
Assume that the hypergraph is connected, that $\1^\top s=0$, and that the input data satisfy the dyadic input model and are polynomially bounded in $P$. Then for every fixed constant $C>0$, there is a randomized algorithm that runs in $P^{1+o(1)}$ time and, with high probability over its internal randomness, returns a rational primal point $x\in\Xzero$ and an explicit dyadic feasible dual certificate $\widehat\eta$, with every coordinate of $x$ and $\widehat\eta$ having bit length $\log^{O(1)}P$, such that
\[
\mathcal D(\widehat\eta)\le \mathcal D^\star + \exp(-\log^C P),
\qquad
\mathcal P(x)\le \mathrm{OPT} + \exp(-\log^C P).
\]
Using strong duality in the form $\mathrm{OPT}+\mathcal D^\star=0$, we therefore obtain
\[
0\le \mathcal P(x)+\mathcal D(\widehat\eta)\le 2\exp(-\log^C P).
\]
This statement is proved formally as Theorem~\ref{thm:primal-recovery} in Section~\ref{sec:primal-recovery}.
\end{theorem}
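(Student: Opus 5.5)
We follow the route sketched in the abstract, organized around the Fenchel dual. First we write the dual explicitly. Each squared range term satisfies
\[
\tfrac12 w_e R_e(x)^2=\max_{y_e}\bigl(\ip{y_e}{x}-\tfrac{1}{2w_e}\,\rho_e(y_e)^2\bigr),
\]
where \(y_e\) ranges over vectors supported on \(e\) with \(\1^\top y_e=0\). Here \(\rho_e(y_e)\) is the dual norm of \(R_e\), which equals half the \(\ell_1\)-norm of \(y_e\), i.e.\ the total positive mass. Summing over \(e\) and using the constraint \(x\in\Xzero\) gives a dual minimization problem over \(\eta=(y_e)_e\):
\[
\text{minimize } \sum_e \frac{1}{2w_e}\rho_e(y_e)^2 \quad\text{subject to}\quad \sum_e y_e = s \pmod{\operatorname{span}(D\1)}.
\]
Since \(\1^\top s=0\), the modular slack can be set to zero. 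We then encode each \(y_e\) as a flow in a star gadget: an auxiliary node \(a_e\), an arc \(v\to a_e\) for every \(v\in e\), and an arc \(a_e\to v\) for every \(v\in e\). Arcs entering \(a_e\) are uncapacitated and cost-free apart from a shared throughput variable \(t_e\), and the cost \(t_e^2/(2w_e)\) is charged on one designated arc. Splitting \(a_e\) into \(a_e^{\mathrm{in}}\to a_e^{\mathrm{out}}\) gives that arc. The resulting graph has \(O(P)\) arcs, and the dual becomes a convex-flow problem with demands \(s\). Strong duality \(\mathrm{OPT}+\mathcal D^\star=0\) follows because the primal is finite and convex on a subspace. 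We invoke the almost-linear-time convex-flow solver (Chen et al.\ style), which applies under the dyadic and polynomial-boundedness assumptions. It yields a feasible flow \(\widehat\eta\) with additive gap \(\exp(-\log^{C'}P)\) for any fixed \(C'\). We round it to a dyadic feasible flow with \(\log^{O(1)}P\) bits, repairing feasibility by routing the tiny residual along a spanning tree, and the gap stays controlled.

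The main obstacle is primal recovery, and it rests on the recovery theorem. Fix the scalars \(\bar t_e\), the rounded throughputs of \(\widehat\eta\). At the true optimum, KKT says that \(x^\star\) is the vector of node potentials, with \(R_e(x^\star)=t_e^\star/w_e\). Moreover, the flow inside gadget \(e\) only moves mass from \(\arg\min\) to \(\arg\max\) vertices of \(x^\star\) on \(e\). Equivalently, \(x^\star\) is an optimal potential for the linear min-cost-flow instance whose middle arc \(a_e^{\mathrm{in}}\to a_e^{\mathrm{out}}\) has cost \(t_e^\star/w_e\), with the other gadget arcs having cost zero, and with demand \(s\). The recovery theorem asserts that only the scalars matter. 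Any optimal dual potential of the linear min-cost-flow with costs \(c_e=\bar t_e/w_e\), shifted into \(\Xzero\), is primal-near-optimal.

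To prove this, take potentials \(\pi\) optimal for the linear problem. Complementary slackness gives \(\pi(a^{\mathrm{out}}_e)-\pi(a^{\mathrm{in}}_e)\le c_e\), together with \(\pi(a^{\mathrm{in}}_e)\ge\max_{v\in e}\pi_v\) and \(\pi(a^{\mathrm{out}}_e)\le\min_{v\in e}\pi_v\) up to orientation. Together these force \(R_e(\pi)\le c_e\). We then bound \(\mathcal P(\pi)\) by comparing \(\ip{s}{\pi}\) with the linear optimum \(\sum_e c_e f_e\) evaluated at the near-optimal flow. This gives
\[
\mathcal P(\pi)+\mathcal D(\widehat\eta)\le \sum_e \tfrac{w_e}{2}\bigl(R_e(\pi)-\bar t_e/w_e\bigr)^2+(\text{terms controlled by the convex-flow gap}).
\]
The remaining difficulty is bounding the squared deviation: we need \(R_e(\pi)\) to be close to \(\bar t_e/w_e\), not merely at most it. I would attack this through a strong-convexity-in-\(t\) argument. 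Near-optimality of \(\widehat\eta\) pins \(\bar t\) to within \(\exp(-\log^{\Omega(C')}P)\) of the unique optimal throughput \(t^\star\), which is unique because the cost is strictly convex in \(t_e\). We then use stability of linear min-cost-flow potentials under cost perturbation, applied to the integer-scaled instance. Choosing \(C'\) large relative to \(C\) absorbs the polynomial blowups.

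Finally, we solve the linear instance exactly. The costs are rounded dyadically to \(\log^{O(1)}P\) bits and then scaled to polynomially-bounded-bit integers, after which the exact almost-linear min-cost-flow algorithm returns integral optimal potentials. Rescaling and subtracting \(\ip{D\pi}{\1}/\ip{D\1}{\1}\) places the result in \(\Xzero\) as a rational point with \(\log^{O(1)}P\)-bit coordinates. The total running time is \(P^{1+o(1)}\), and the success probability is inherited from the two randomized flow solvers.
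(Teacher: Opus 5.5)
Your pipeline (the Fenchel dual, the gadget with middle arc $a_e^{\mathrm{in}}\to a_e^{\mathrm{out}}$, the convex-flow solve, linear costs $c_e=\bar t_e/w_e$ on the middle arcs, extraction of a potential $\pi$, and spanning-tree repair of the dual) is the paper's. However, the step that carries the whole theorem is not proved, and the route you sketch for it is misdirected. That step is showing that $\pi$ is near-optimal for $\mathcal P$.

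Expanding your display gives the exact identity $\mathcal P(\pi)+q(\bar t)=\sum_e\frac{w_e}{2}(R_e(\pi)-c_e)^2+\bigl(\sum_e\bar t_eR_e(\pi)-\ip{s}{\pi}\bigr)$.
\begin{itemize}
\item The bracket is (up to rounding) nonnegative by weak duality. It is bounded only by the support gap $2q(\bar t)-L_s(c)$, where $L_s(c)=\ip{s}{\pi}$ is the linear optimum. It is not controlled by the convex-flow gap in any direct way.
\item The comparison you describe points the wrong way: $\ip{s}{\pi}=\min_{\nu\in\mathcal M(s)}\sum_ec_e\nu_e\le\sum_ec_e\bar t_e$ only lower-bounds $\mathcal P(\pi)$.
\item The squared-deviation term is a red herring. $R_e(\pi)\le c_e$ already gives $\mathcal E_H(\pi)\le q(\bar t)$, so $\mathcal P(\pi)-\mathrm{OPT}\le\bigl(2q(\bar t)-L_s(c)\bigr)-\delta$ with $\delta=q(\bar t)-q(\mu^\star)$. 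Bounding the support gap is all that is needed.
\item Your plan to force $R_e(\pi)\approx c_e$ via stability of min-cost-flow potentials under cost perturbation has two unfilled pieces. It needs a quantitative Hoffman-type bound placing $\pi$ near the optimal potential set for the exact costs $\mu^\star/w$, which you do not supply. It also needs the unstated fact that every such optimal potential restricts to a primal minimizer.
\end{itemize}

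The missing idea is that the optimal value $L_s$, not the potential, is what must be stable, and this is elementary (Lemma~\ref{lem:support-gap}).
\begin{itemize}
\item Work with the exact mass $\mu\in\mathcal M(s)$ of the returned flow, $r=\mu/w$, $\delta=q(\mu)-q(\mu^\star)$, and an acyclic minimizer $\bar\mu$ of $\sum_er_e\nu_e$ over $\mathcal M(s)$, so $\bar\mu_e\le\norm{s}_1/2$.
\item The variational inequality $\sum_e\mu^\star_e(\nu_e-\mu^\star_e)/w_e\ge0$ for $\nu\in\mathcal M(s)$ gives $\norm{\mu-\mu^\star}_{w^{-1}}\le\sqrt{2\delta}$.
\item Splitting $r=\mu^\star/w+(\mu-\mu^\star)/w$ in $2q(\mu)-L_s(r)=\sum_er_e(\mu_e-\bar\mu_e)$, the same inequality gives $2q(\mu)-L_s(r)\le\delta+\sqrt{2\delta}\,\norm{\mu-\bar\mu}_{w^{-1}}$.
\item Hence every support maximizer $x_r$ satisfies $\mathcal P(x_r)-\mathrm{OPT}\le\sqrt{2\delta}\,\norm{\mu-\bar\mu}_{w^{-1}}=P^{O(1)}\sqrt\delta$. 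No closeness of $R_e(x_r)$ to $r_e$ is required.
\item Rounding is handled by rounding budgets upward, $\hat r\ge r$, so an exact-budget maximizer stays feasible and only $P^{O(1)}\rho$ is lost.
\end{itemize}

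A secondary omission concerns capacities. The exact min-cost-flow routine needs finite capacities, and an optimal capacitated dual can carry nonzero upper-bound multipliers. In that case $(A^\uparrow)^\top\pi\le c$, and with it $R_e(\pi)\le c_e$, can fail. The paper bounds an acyclic optimal flow by $\norm{s}_1/2$ and imposes strictly slack capacities, so that every optimal capacitated dual has zero upper-bound multipliers.
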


Theorem~\ref{thm:intro-main} should be compared with the polynomial-time algorithms of Fujii, Soma, and Yoshida for submodular Laplacian systems \cite{FujiiSomaYoshida2018}. 
Their framework already establishes broad tractability, and for hypergraphs it gives a flow-like formulation with variables for ordered vertex pairs inside each hyperedge.
Thus, if \(M=\sum_{e\in E}|e|(|e|-1)\), their hypergraph specialization can be solved by a polynomial-time quadratic-cost flow algorithm, for example in \(O(M^4\log M)\) time by the strongly polynomial algorithm of V\'egh \cite{Vegh2016QuadraticFlow}.
Our result is more specialized but substantially faster in the cut-based Poisson setting studied here: it gives an almost-linear-time randomized solver in the incidence size $P$ together with explicit primal and dual certificates.

A natural extension adds the diagonal quadratic term $\frac{\lambda}{2}\langle Dx,x\rangle$ to the Poisson objective.
This regularized problem is the resolvent, or proximal, primitive for the nonlinear Laplacian.
It asks for a point $x\in\R^V$ satisfying
\[
s\in L_H(x)+\lambda Dx.
\]
Equivalently, it asks to minimize
\[
\mathcal P_\lambda(x)=\frac12\sum_{e\in E} w_e R_e(x)^2+\frac{\lambda}{2}\langle Dx,x\rangle-\langle s,x\rangle.
\]
This regularized problem admits a particularly simple reduction: add a ground vertex and solve one augmented Poisson instance. This gives the following companion theorem.

\begin{theorem}[Informal regularized Poisson theorem]
\label{thm:intro-shifted}
Assume that every vertex has positive weighted degree \(d_v>0\). Under the dyadic input model and polynomial boundedness assumptions on the hypergraph weights and demand vector, and additionally assuming that $\lambda>0$ is $L_\lambda$-dyadic for some $L_\lambda=\log^{O(1)}P$ and satisfies
\[
P^{-K_\lambda}\le \lambda\le P^{K_\lambda}
\]
for some constant $K_\lambda\ge 1$, the regularized objective
\[
\mathcal P_\lambda(x)=\frac12\sum_{e\in E} w_e R_e(x)^2+\frac{\lambda}{2}\langle Dx,x\rangle-\langle s,x\rangle
\]
admits, for every fixed constant $C>0$, a randomized algorithm that runs in $P^{1+o(1)}$ time and, with high probability over its internal randomness, returns a rational primal point $x\in\R^V$ and an explicit dyadic regularized dual point $\widehat\eta$, with every coordinate of $x$ and $\widehat\eta$ having bit length $\log^{O(1)}P$, such that
\[
\mathcal D_\lambda(\widehat\eta)\le \mathcal D_\lambda^\star+\exp(-\log^C P),
\qquad
\mathcal P_\lambda(x)\le \mathrm{OPT}_\lambda+\exp(-\log^C P).
\]
This companion result also does not require \(H\) to be connected. It is proved formally as Theorem~\ref{thm:shifted-cut-based-solver} in Section~\ref{sec:applications-limitations}.
\end{theorem}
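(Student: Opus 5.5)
We reduce the regularized problem to a single instance of the Poisson problem and invoke Theorem~\ref{thm:intro-main}. Add a ground vertex \(g\). For every original vertex \(v\), add a two-element hyperedge \(\{v,g\}\) of weight \(\lambda d_v\). Call the augmented hypergraph \(H'\). Since \(R_{\{v,g\}}(x)^2=(x_v-x_g)^2\), restricting to \(x_g=0\) gives
\[
\mathcal E_{H'}(x,0)=\mathcal E_H(x)+\tfrac{\lambda}{2}\ip{Dx}{x}.
\]
Set \(s'_v=s_v\) and \(s'_g=-\1^\top s\), so that \(\1^\top s'=0\). Then \(\mathcal P'(x,x_g)=\mathcal P_\lambda(x-x_g\1)\), because both \(\mathcal E_{H'}\) and \(\ip{s'}{\cdot}\) are shift-invariant. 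Consequently
\[
\mathrm{OPT}'=\min_{\Xzero'}\mathcal P'=\mathrm{OPT}_\lambda,
\]
and any \((x,x_g)\in\Xzero'\) maps to \(x-x_g\1\) with the same objective value. \(H'\) is connected because every \(v\) has \(d_v>0\) and is therefore joined to \(g\); connectivity of \(H\) is not needed. The incidence size is \(P'=P+2|V|\le 3P\).

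Next we verify the input hypotheses of Theorem~\ref{thm:intro-main}. The new weights \(\lambda d_v\) are products of an \(L_\lambda\)-dyadic number with a sum of at most \(P\) \(L\)-dyadic numbers, so they are \(\log^{O(1)}P\)-dyadic. They are also polynomially bounded, since \(P^{-K_0-K_\lambda}\le\lambda d_v\le P^{1+K_0+K_\lambda}\). The demand \(s'_g\) is a sum of at most \(P\) dyadic numbers, so it is dyadic with \(\log^{O(1)}P\) bits and polynomially bounded. Running the Poisson solver on \(H'\) with a slightly larger constant \(C'\) takes \((3P)^{1+o(1)}=P^{1+o(1)}\) time. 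Absorbing constant factors, it returns \((x,x_g)\) and \(\widehat\eta'\) with additive gaps at most \(\exp(-\log^{C'}P')\le\exp(-\log^{C}P)\). The output \(x-x_g\1\) is rational, its coordinates have polylog bit length, and its primal gap equals the gap of \((x,x_g)\).

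The dual side is the main obstacle. We must define \(\mathcal D_\lambda\), presumably the Fenchel dual of \(\mathcal P_\lambda\), in which the quadratic regularizer contributes a term like \(\tfrac{1}{2\lambda}\ip{D^{-1}r}{r}\) for a residual \(r\). We then need a map \(\widehat\eta'\mapsto\widehat\eta\) satisfying \(\mathcal D_\lambda(\widehat\eta)\le\mathcal D'(\widehat\eta')\) and \(\mathcal D_\lambda^\star=\mathcal D'^\star\). My plan is to keep the hyperedge components of \(\widehat\eta'\) for \(e\in E\) unchanged. The flows on the ground edges \(\{v,g\}\) then become the regularizer's dual variables, and the edge cost \(\tfrac{1}{2\lambda d_v}(\text{flow})^2\) matches the conjugate of \(\tfrac{\lambda d_v}{2}x_v^2\). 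The point to check is the \(\Xzero'\) normal-space freedom in the dual, i.e.\ the \(\mathrm{span}(D'\1)\) slack. In the augmented dual, feasibility is exact flow conservation at every vertex including \(g\), so the slack is absorbed. The map is therefore an exact objective-preserving bijection between feasible points. Strong duality for \(\mathcal P_\lambda\) follows because the objective is strongly convex and finite everywhere. Finally, the map only copies entries of \(\widehat\eta'\), so dyadicity and polylog bit lengths are preserved.
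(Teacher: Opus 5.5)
Your proposal is correct and follows the paper's proof of Theorem~\ref{thm:shifted-cut-based-solver} essentially step for step: the same ground vertex with edges $\{v,g\}$ of weight $\lambda d_v$ and demand $\bar s_g=-\1^\top s$, the same dyadic and polynomial-bound checks, the shift $\bar x-\bar x_g\1$, and the same dual map that keeps the original-edge components and sets $\widehat\alpha_v=s_v-(B\widehat\eta)_v$ on the ground edges. The only difference is cosmetic: you get $\bar{\mathcal D}^\star=\mathcal D_\lambda^\star$ directly from the objective-preserving bijection of feasible sets, which the paper records only as a remark, whereas the paper's proof goes through $\overline{\mathrm{OPT}}=\mathrm{OPT}_\lambda$ and the strong-duality statements of Theorems~\ref{thm:poisson-fenchel-dual} and~\ref{thm:shifted-dual}.
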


The polynomial-time framework of Fujii, Soma, and Yoshida \cite{FujiiSomaYoshida2018} applies here as well, since the ground-vertex reduction used in this work turns this regularized cut-hypergraph problem into another submodular Laplacian system.
By contrast, Theorem~\ref{thm:intro-shifted} gives, for the cut-based model studied here, a global almost-linear-time solver for the regularized Poisson problem with a certified primal--dual objective gap; unlike existing diffusion and resolvent guarantees \cite{liu2021strongly,AmeranisEtAl2023Resolvents}, whose running time depends on localization parameters or a spectral gap, its \(P^{1+o(1)}\) running time has no such dependence.

\subsection{Technical overview}

The proof has two stages: a dual solve and a recovery stage. The chain is
\[
\begin{aligned}
\text{First stage:}\quad
&\text{Hypergraph Poisson problem}
\Longrightarrow
\text{Fenchel dual }\eta
\Longrightarrow
\text{sparse directed-graph convex flow}
\\
&\Longrightarrow
\text{near-optimal feasible flow }f
\Longrightarrow
\text{edge masses }\mu .
\\
\text{Second stage:}\quad
&\text{range-budget vector }r=\mu/w
\Longrightarrow
\text{support problem }L_s(r)
\Longrightarrow
\text{node potential }\pi|_V
\\
&\Longrightarrow
\text{primal potential }x .
\end{aligned}
\]
The first stage solves the dual side of the problem by returning a feasible flow $f$ on the auxiliary directed graph; from this flow we read both the edgewise masses $\mu$ and the induced dual vector $\eta$, but this does not yet canonically determine a primal potential $x$.  This gap is genuine: the operator is nonlinear and set-valued, and the energy need not be strongly convex on $\Xzero$.  The recovery stage therefore does not read $x$ from the detailed dual routing.  Instead, it forms the marginal-cost, or range-budget, vector \(r=\mu/w\), the gradient of the quadratic mass objective \(q\) defined below, and solves the support problem $L_s(r)$.  The dual potentials of this support query, restricted to the original vertex set and shifted by a constant so that they lie in \(\Xzero\), yield the primal potential \(x\).  For ordinary graphs this is the analogue of Ohm's law: the optimal flow divided by the edge weight gives a voltage drop.  For hyperedges there is no canonical signed pairwise drop inside the edge, so the scalar $\mu_e/w_e$ plays the role of the allowed voltage range across the whole hyperedge, and the support query is the global consistency step that turns these edgewise budgets into a vertex potential.

The next paragraphs unpack this roadmap one step at a time.

\paragraph{Step 1: Fenchel duality exposes transport inside each hyperedge.}
For a fixed hyperedge $e$, the range term has the support-function representation
\[
\begin{aligned}
R_e(x)
&=\max\{\langle \eta,x\rangle:\eta\in U_e,\ \|\eta\|_1\le 2\},
\\
U_e
&:=\{\eta\in\R^V:\operatorname{supp}(\eta)\subseteq e,\ \1^\top \eta=0\}.
\end{aligned}
\]
Consequently,
\[
\frac{w_e}{2}R_e(x)^2
=
\sup_{\eta_e\in U_e}
\left\{\langle \eta_e,x\rangle-\frac{\|\eta_e\|_1^2}{8w_e}\right\}.
\]
After summing over the hyperedges and using the constraint $x\in\Xzero$, the Poisson problem has the dual minimization form
\[
\mathcal D^\star
=
\min\left\{
\sum_{e\in E}\frac{\|\eta_e\|_1^2}{8w_e}
:
\eta_e\in U_e,\ \sum_{e\in E}\eta_e=s
\right\}.
\]
This is the first structural simplification: the nonlinear max--min coupling has become a separable edge cost plus a single conservation law.
Section~\ref{sec:fenchel-dual} formalizes this derivation in Theorem~\ref{thm:poisson-fenchel-dual}.

\paragraph{Step 2: the dual is an $O(P)$-arc convex flow.}
The dual variables $\eta_e$ are signed vectors on hyperedges. We split each one into positive mass, negative mass, and its common total:
\[
p_v^e:=((\eta_e)_v)^+,
\qquad
n_v^e:=((\eta_e)_v)^-,
\qquad
\mu_e:=\frac12\|\eta_e\|_1.
\]
This replaces each signed vector on one hyperedge by nonnegative flow data on a small auxiliary directed gadget for that hyperedge; we call the resulting auxiliary-graph formulation the lifted flow formulation.
Then the edge cost depends only on the scalar mass,
\[
\frac{\|\eta_e\|_1^2}{8w_e}=\frac{\mu_e^2}{2w_e}.
\]
Thus each hyperedge is replaced by a directed gadget with one arc $(e^-,e^+)$ whose flow is $\mu_e$ and whose cost is $\mu_e^2/(2w_e)$; we call this the \emph{quadratic arc} of \(e\).
The gadget also has zero-cost arcs $(e^+,v)$ and $(v,e^-)$ for $v\in e$ that distribute the positive and negative parts of \(\eta_e\) among the vertices of \(e\); we call these \emph{transport arcs}.
Gluing these gadgets over all hyperedges yields a lifted graph $G^\uparrow$ with $O(P)$ arcs, and the nonnegative flow on this auxiliary graph is what we call the lifted flow. The construction is not a clique or star expansion of the primal hypergraph objective; it is an equivalent lifted formulation of the Fenchel dual.
Figure~\ref{fig:hyperedge-gadget} shows the local gadget for one hyperedge.

To use the theorem of Chen et al. rigorously, we still have to express the lifted problem in their finite-precision convex-flow model. Concretely, we pass to epigraph form, assign explicit self-concordant barriers to the transport and quadratic arcs, prove polynomial conditioning, and construct a strictly positive circulation that interiorizes any feasible flow without changing node imbalances.
After these checks, the randomized convex-flow stage runs in $P^{1+o(1)}$ time and, with high probability over its internal randomness, returns a queryable high-accuracy feasible lifted flow on $G^\uparrow$.
From it we read off the quadratic-arc mass vector $\mu$ and a corresponding dual point $\eta$.
The first-stage output is only queryable, not a finite certificate; the final solver queries the needed coordinates to high precision and repairs them into an explicit dyadic dual certificate satisfying the dual feasibility equations exactly, as formalized in Lemma~\ref{lem:dual-certificate-repair}.
The underlying real quantities satisfy
\[
q(\mu)\le \mathcal D^\star+\exp(-\log^C P),
\qquad
\mathcal D(\eta)\le q(\mu),
\]
where
\[
q(\mu):=\frac12\sum_{e\in E}\frac{\mu_e^2}{w_e}.
\]
At this point we have a near-optimal oracle-represented dual point. We still have to turn the mass vector $\mu$ into a primal potential $x$.
Section~\ref{sec:dual-solver} gives the algorithmic version of the dual-to-lifted-flow reformulation; Theorem~\ref{thm:black-box-dual-solver} states the first-stage guarantee.

\paragraph{Step 3: mass-set characterization.}
The lifted objective depends only on the quadratic-arc masses, not on the detailed routing inside each hyperedge gadget. We therefore define $\mathcal M(s)$ to be the set of mass vectors that arise from feasible lifted flows with demand $s$. In these variables, the dual optimum is exactly
\[
\mathcal D^\star=\min_{\mu\in\mathcal M(s)} q(\mu).
\]
This characterization is the first step in primal recovery.
It isolates the edge-mass data that matter for the quadratic objective $q(\mu)$; the detailed routing is a feasibility witness showing that $\mu\in\mathcal M(s)$.
Geometrically, once the dual has been written as minimization of $q$ over $\mathcal M(s)$, an exact minimizer is characterized by a supporting hyperplane to $\mathcal M(s)$.
The normal vector of this hyperplane will become the range-budget vector used in the support problem.
This is also where the proof goes beyond a direct application of a general convex-flow theorem: that theorem returns an approximate point in the lifted flow polytope, whereas we project to the feasible set $\mathcal M(s)$ and then query $\mathcal M(s)$ to recover primal information.
Section~\ref{sec:mass-support} records this projection as part of Theorem~\ref{thm:mass-support}.

\paragraph{Step 4: support queries convert mass gradients into vertex potentials.}
For a budget vector $r\in\R^E_{\ge 0}$, define the support problem
\[
L_s(r):=\max\{\langle s,x\rangle:x\in\Xzero,\ R_e(x)\le r_e\ \forall e\in E\}.
\]
Before discussing how to solve this problem, let us explain why it is the right recovery problem.
If $x_r$ is a maximizer of $L_s(r)$, then $R_e(x_r)\le r_e$ for every $e$, and therefore
\begin{equation}
\label{eq:intro-support-upper-bound}
\mathcal P(x_r)
=
\frac12\sum_{e\in E}w_eR_e(x_r)^2-\langle s,x_r\rangle
\le
\frac12\sum_{e\in E}w_er_e^2-L_s(r).
\end{equation}
Thus, once a candidate range profile $r$ has been chosen, the support problem is precisely the problem of choosing, among all potentials with those edge ranges, the one that makes the linear term in the primal objective as large as possible.

The dual mass vector supplies the correct range profile. To see this, first ignore approximation and suppose that $\mu^\star$ minimizes $q$ over $\mathcal M(s)$.
Set
\[
r_e^\star:=\frac{\mu_e^\star}{w_e}.
\]
Since $r^\star=\nabla q(\mu^\star)$, first-order optimality over $\mathcal M(s)$ gives
\[
\mu^\star\in
\arg\min_{\nu\in\mathcal M(s)}
\sum_{e\in E}r_e^\star\nu_e.
\]
The support identity, proved in Section~\ref{sec:mass-support}, is
\[
L_s(r)=\min_{\nu\in\mathcal M(s)}\sum_{e\in E}r_e\nu_e.
\]
Applying it to $r^\star$ yields
\[
L_s(r^\star)
=
\sum_{e\in E}r_e^\star\mu_e^\star
=
\sum_{e\in E}\frac{(\mu_e^\star)^2}{w_e}
=
2q(\mu^\star),
\]
where the last equality is the two-homogeneity relation
$\langle \nabla q(\mu^\star),\mu^\star\rangle=2q(\mu^\star)$.
On the other hand, \eqref{eq:intro-support-upper-bound} with $r=r^\star$ gives
\[
\mathcal P(x_{r^\star})
\le
\frac12\sum_{e\in E}w_e(r_e^\star)^2-L_s(r^\star)
=
q(\mu^\star)-2q(\mu^\star)
=
-q(\mu^\star).
\]
By strong duality, $\mathrm{OPT}=-\mathcal D^\star=-q(\mu^\star)$, so $x_{r^\star}$ is a primal minimizer.
This exact calculation is the reason for solving a support problem: the gradient of the dual mass objective gives a supporting hyperplane to the feasible mass set, and the support identity turns that hyperplane into a vertex potential.

Algorithmically, the support identity is realized as a linear-cost min-cost-flow problem on the same lifted graph $G^\uparrow$: put cost $r_e$ on the quadratic arc of hyperedge $e$ and zero cost on the transport arcs.
The dual variables of this min-cost flow are node potentials; the transport arcs and the quadratic arc impose potential inequalities that make the restriction to $V$ satisfy $R_e(x)\le r_e$ for every hyperedge.
Thus we use the same $O(P)$-arc graph twice: first for a nonlinear convex-flow solve that returns $\mu$, and then for a linear min-cost-flow solve whose dual potentials give the primal certificate.
Section~\ref{sec:mass-support} proves this support reduction in Theorem~\ref{thm:mass-support}.

\paragraph{Step 5: approximate recovery via the support gap.}
We recover the primal point with the budget
\[
r_e:=\frac{\mu_e}{w_e},
\]
which is the gradient of the quadratic mass objective $q$ at the mass vector returned by the first stage.
If $\mu$ were an exact minimizer, the calculation in Step~4 would prove exact primal optimality.
In the algorithm, $\mu$ is only near-optimal, so the supporting-hyperplane identity is no longer exact.
The relevant error quantity is the support gap
\[
2q(\mu)-L_s(\mu/w).
\]
It vanishes in the exact calculation above.
Moreover, if $x_r$ maximizes $L_s(r)$ for $r=\mu/w$, then \eqref{eq:intro-support-upper-bound} gives
\[
\mathcal P(x_r)\le q(\mu)-L_s(r).
\]
Writing $\delta=q(\mu)-q(\mu^\star)$, we obtain
\[
\mathcal P(x_r)-\mathrm{OPT}
\le
q(\mu)-L_s(r)+q(\mu^\star)
=
2q(\mu)-L_s(r)-\delta.
\]
Thus primal recovery reduces to proving that the support gap is small.
The quantitative recovery lemma shows precisely this stability: when $\mu$ is near-optimal for $q$ over $\mathcal M(s)$, a maximizer of the support problem with budget $r=\mu/w$ is nearly optimal for the original Poisson objective.
Consequently the min-cost-flow solve for the support problem produces $x\in\Xzero$ with
\[
\mathcal P(x)\le \mathrm{OPT}+\exp(-\log^C P),
\]
while the convex-flow stage already supplies the matching dual bound.

\paragraph{Finite-precision certification.}
The conceptual recovery above is stated over real numbers. The implementation has to produce finite certificates.
The algorithm queries the quadratic-arc masses to high absolute precision, rounds the budgets upward to dyadic numbers, rounds and scales the demand vector to obtain a capacitated integral min-cost-flow instance of magnitude $\exp(\log^{O(1)}P)$, solves that instance exactly, rescales the returned node potentials back to the original units, and then transfers them to the support maximizer.
Acyclicity of an optimal support flow gives the needed capacity bound, so this rounding step does not introduce a hidden superlinear bottleneck.
This introductory overview is the reader's map; the full finite-precision certified procedure is stated in Section~\ref{sec:primal-recovery} as Algorithm~\ref{alg:certified-poisson}. The formal solver guarantee, Theorem~\ref{thm:primal-recovery}, states that this algorithm runs in \(P^{1+o(1)}\) time and returns an explicit primal point and an explicit dyadic dual certificate with exact dual feasibility and the additive bounds above; it is proved by a line-by-line analysis of the algorithm.

\paragraph{Interpreting the primal objective gap.}
The theorem is stated as an additive objective guarantee because $\mathcal E_H$ is generally not strongly convex on $\Xzero$, and hence there need not be a unique solution whose distance can be measured in a canonical norm. Section~\ref{sec:primal-recovery} shows that this additive gap is the natural nonlinear analogue of the graph Laplacian energy-norm error: it is the Bregman divergence to the optimal set induced by $\mathcal E_H$.

\paragraph{Regularized problems.}
Once the Poisson solver is available, regularized objectives require no new flow machinery. The regularization term $\frac{\lambda}{2}\langle Dx,x\rangle$ is represented by adding a ground vertex $g$ and two-vertex edges $\{v,g\}$ of weight $\lambda d_v$. Solving the resulting augmented Poisson instance and shifting the returned potential so that the ground coordinate is zero gives Theorem~\ref{thm:intro-shifted}.

\subsection{Related work}

\paragraph{Graph Laplacian solvers.}
For ordinary graphs, the Poisson equation is a linear Laplacian system, and a long line of work established almost-linear-time solvers for Laplacian and more general SDD systems, beginning with Spielman and Teng \cite{SpielmanTeng2014SDD} and followed by major refinements and simplifications due to Koutis, Miller, and Peng \cite{KoutisMillerPeng2014SDD}, Kelner, Orecchia, Sidford, and Zhu \cite{KelnerOrecchiaSidfordZhu2013}, and Kyng and Sachdeva \cite{KyngSachdeva2016}. This literature is the baseline complexity benchmark for the present work. 
The present hypergraph problem is a higher-order nonlinear analogue of that graph primitive. What makes the extension nontrivial is precisely the loss of linearity: matrix inversion must be replaced by a dual transport formulation, and a primal potential must then be recovered from the resulting nonlinear certificate.

\paragraph{Cut-based hypergraph Laplacians.}
The operator $L_H=\partial \mathcal E_H$ studied here belongs to the cut-based / Lov\'asz-extension line of work on hypergraph Laplacians. The edge term $R_e(x)$ is the Lov\'asz-extension contribution of the cut hyperedge $e$, and $\mathcal E_H$ is the cut-based hypergraph energy associated with that construction \cite{Hein2013TVH}. In this line of work, Hein et al.\ gave first-order algorithms for related regularized learning problems, including a primal--dual scheme in which the primal variable is updated throughout the computation rather than recovered afterward from a high-accuracy dual certificate \cite{Hein2013TVH}. Nonlinear hypergraph Laplacian operators of this max--min diffusion type were introduced and studied algorithmically by Chan et al.\ \cite{ChanLouisTangZhang2018HypergraphLaplacian}. The focus here remains entirely on this cut-based setting and on an almost-linear-time solver for the Poisson problem on $\Xzero$ together with an explicit primal certificate.

\paragraph{Hypergraph sparsification.}
Hypergraph sparsification, and in particular spectral sparsification, is another standard algorithmic line for hypergraph Laplacians and cut objectives. 
Soma and Yoshida introduced spectral sparsification for undirected and directed hypergraphs and gave polynomial-time constructions preserving the corresponding cut-based energy \(\mathcal E_H\) \cite{SomaYoshida2019}. 
Bansal, Svensson, and Trevisan then gave an intermediate improvement in sparsifier size together with alternative additive sparsification notions for graphs and hypergraphs \cite{BansalSvenssonTrevisan2019}. 
Kapralov, Krauthgamer, Tardos, and Yoshida next gave $O^*(nr)$-size spectral sparsifiers, where \(r=\max_{e\in E}|e|\) is the maximum hyperedge size, together with lower bounds on hypergraph compression \cite{KapralovKrauthgamerTardosYoshida2021TightBounds}. 
They later improved this to nearly-linear-size spectral sparsifiers \cite{KapralovKrauthgamerTardosYoshida2021}. Independently, Lee obtained the same asymptotic sparsifier-size bound by a chaining-based argument \cite{Lee2023SpectralHypergraphSparsificationViaChaining}, while Jambulapati, Liu, and Sidford obtained similar bounds together with a fast construction of spectral hypergraph sparsifiers \cite{JambulapatiLiuSidford2023}. 
These works are complementary to the present goal: they compress a hypergraph while approximately preserving its global quadratic form and cut structure, whereas our goal is to solve the original hypergraph Poisson problem itself and recover an explicit primal witness on the given instance.

\paragraph{Submodular Laplacians.}
The cut-based Laplacian is a special case of the broader submodular-Laplacian framework introduced by Yoshida \cite{Yoshida2019CheegerSubmodular}. Fujii, Soma, and Yoshida gave polynomial-time algorithms for general submodular Laplacian systems and related regression problems \cite{FujiiSomaYoshida2018}. At the model level, our operator is therefore contained in that framework, and their results already solve both the cut-based Poisson and regularized Poisson problems considered here. The main difference is complexity and certification: by specializing to the cut-based setting, the present result gives a randomized solver running in $P^{1+o(1)}$ time together with explicit primal recovery and certified primal--dual objective gaps. 
This improvement comes from the special structure of cut hyperedges: arbitrary submodular transformations retain the geometry of general base polytopes, whereas a cut hyperedge has a conjugate depending only on $\|\eta_e\|_1$. This scalarization is exactly what permits the $O(P)$-arc lifted flow, the mass-set characterization, and the support-flow recovery of an explicit primal potential.

\paragraph{Hypergraph diffusions, resolvents, and learning.}
Within the same cut-based / Lov\'asz-extension line, Takai, Miyauchi, Ikeda, and Yoshida study hypergraph PageRank and develop local and global clustering algorithms with conductance guarantees \cite{TakaiMiyauchiIkedaYoshida2020}. Ikeda, Miyauchi, Takai, and Yoshida also study hypergraph heat flow and use it to find Cheeger cuts in hypergraphs \cite{IkedaMiyauchiTakaiYoshida2022}. These works start from the same operator family studied here, but use it for downstream clustering via PageRank vectors or heat-flow trajectories. By contrast, Theorem~\ref{thm:intro-shifted} gives a global almost-linear-time solver with a certified primal--dual objective gap for the corresponding regularized Poisson / resolvent problem.

The distinction from regularized QDSFM is also important. Li, He, and Milenkovic study quadratic decomposable submodular function minimization (QDSFM), that is, a regularized framework built from sums of squared Lov\'asz-extension terms, where the regularizer makes the primal solution unique and gives an explicit recovery map from dual variables \cite{LiHeMilenkovic2020}. The present Poisson objective on $\Xzero$ has no such strong convexity: the optimal set may be non-singleton, and a near-optimal dual flow does not by itself identify a primal potential. In the cut-based Poisson setting, the support-query stage precisely replaces the regularized recovery map.

Beyond those cut-based diffusion works, recent work has also studied related diffusion, resolvent, and local clustering objectives for broader hypergraph Laplacians built from local edge norms rather than the cut-based square-range energy used here \cite{AmeranisEtAl2023Resolvents,liu2021strongly}. Ameranis et al.\ give a nearly-linear-time resolvent algorithm with an explicit inverse dependence on the hypergraph spectral gap, while Liu et al.\ give strongly local algorithms whose runtime depends on localization parameters and output size. These results are therefore related in task type, but differ both in model and in output: they treat broader Laplacians and return diffusion/resolvent vectors or local clustering guarantees, whereas Theorem~\ref{thm:intro-shifted} gives a global certified solver for the cut-based regularized Poisson problem. On the learning side, hypergraph-regularized diffusion and energy-minimization procedures are also used in semi-supervised learning and representation learning \cite{FujiiSomaYoshida2018,prokopchik2022nonlinear,wang2023energy}. No downstream analysis of those pipelines is attempted here, but the result does provide a faster certified inner solver when that particular cut-based regularized objective is the one of interest.

\subsection{Organization}
Section~\ref{sec:fenchel-dual} carries out Step~1 by fixing the edge-local notation, proving the support-function form of the edge range, and deriving the Fenchel dual of the Poisson problem restricted to \(\Xzero\). Section~\ref{sec:dual-solver} carries out Step~2 by rewriting that dual as a lifted $O(P)$-arc flow problem and applying the framework of Chen et al.\ \cite{ChenKyngLiuPengProbstGutenbergSachdeva2025} to obtain the solver's first-stage queryable high-accuracy lifted flow. Section~\ref{sec:mass-support} carries out Steps~3--4 by identifying the feasible edge-mass set and reducing the needed support problems to linear min-cost flow on the same graph. Section~\ref{sec:primal-recovery} carries out Step~5 by turning these ingredients into a finite-precision certified solver, including the repair that turns queryable first-stage coordinates into explicit dyadic dual certificates and the interpretation of the objective gap as a nonlinear energy-distance. Section~\ref{sec:applications-limitations} reduces regularized objectives to the Poisson solver by adding a ground vertex and records direct consequences such as pairwise responses and resolvents, together with the corresponding certification limitations. Appendix~\ref{app:chen-black-box} collects the notation, epigraph, and incidence-convention details behind the Chen et al.\ black-box invocation.
\section{From Edge Ranges to the Fenchel Dual}
\label{sec:preliminaries}
\label{sec:fenchel-dual}

This section carries out Step~1 of the roadmap. Starting from the support-function form of the edge range, it derives the Fenchel dual of the Poisson problem restricted to $\Xzero$. The output is a dual with edge-local variables $\eta_e\in U_e$ and the global balance constraint $B\eta=s$.

\subsection{Fixing the Additive Constant and Edge-local Notation}

Except where explicitly stated otherwise, we assume throughout that $H=(V,E,w)$ is a connected weighted hypergraph with strictly positive edge weights.
The connectedness assumption is made only to keep notation light. 
Every statement in the remainder of this paper extends componentwise to disconnected hypergraphs by fixing the additive constant separately on each connected component and requiring the demand vector to have zero sum on each component.
We recall the weighted-degree matrix and the weighted-mean-zero subspace
\[
d_v:=\sum_{e\ni v} w_e,
\qquad
D=\operatorname{diag}(d_v),
\qquad
\Xzero:=\{x\in\R^V:\ip{Dx}{\1}=0\}.
\]

\begin{definition}[The hypergraph Laplacian]
\label{def:hypergraph-laplacian}
The hypergraph Laplacian is the cut-based nonlinear operator induced by the energy
\[
\mathcal E_H(x):=\frac12\sum_{e\in E} w_e R_e(x)^2.
\]
We write
\[
L_H:=\partial \mathcal E_H
\]
for the subdifferential of $\mathcal E_H$, viewed as this set-valued operator.
\end{definition}

\begin{remark}[The Poisson problem on $\Xzero$]
\label{rem:xzero-poisson}
The optimization problem \eqref{eq:poisson-primal} with $x$ restricted to $\Xzero$ is the corresponding hypergraph Laplacian / Poisson problem after fixing the additive constant: its optimal solutions are the points $x\in\Xzero$ satisfying
\[
s\in L_H(x)+\operatorname{span}(D\1).
\]
\end{remark}

For each edge $e\in E$, define the edge-local zero-sum subspace
\[
U_e:=\{\eta\in\R^V:\supp(\eta)\subseteq e,\ \1^\top \eta=0\}.
\]
Also define
\[
K_e:=\operatorname{conv}\{\ee_u-\ee_v:\ u,v\in e,\ \text{not necessarily distinct}\}\subseteq\R^V.
\]

\subsection{Edge range as a support function}

Recall that for each edge $e\in E$ and vector $x\in\R^V$,
\[
R_e(x):=\max_{u\in e} x_u-\min_{v\in e} x_v
\]
is the edge range of $x$ on $e$. The key point is that $R_e$ is the support function of $K_e$.
\begin{lemma}
\label{lem:edge-support}
For every edge $e\in E$,
\[
K_e=\{\eta\in U_e:\norm{\eta}_1\le 2\}.
\]
Consequently,
\[
R_e(x)=\max_{\eta\in K_e}\ip{\eta}{x}
=\max_{\eta\in U_e,\;\norm{\eta}_1\le 2}\ip{\eta}{x}.
\]
\end{lemma}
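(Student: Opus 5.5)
I would prove the set identity \(K_e=\{\eta\in U_e:\norm{\eta}_1\le 2\}\) by two inclusions. Afterwards I would read off the support-function formula from the fact that a linear function attains its maximum over a polytope at a vertex.

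For \(\subseteq\), I would note that each generator \(\ee_u-\ee_v\) with \(u,v\in e\) has support in \(e\), coordinate sum \(0\), and \(\ell_1\)-norm \(2\) if \(u\ne v\) and \(0\) if \(u=v\). The right-hand set is the intersection of the linear subspace \(U_e\) with an \(\ell_1\)-ball, so it is convex. It therefore contains the convex hull \(K_e\).

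For \(\supseteq\), which is the only step with content, I would take \(\eta\in U_e\) with \(\norm{\eta}_1\le 2\) and \(\eta\ne 0\) (note \(0=\ee_u-\ee_u\in K_e\)). Write \(\eta=p-n\) with \(p=\eta^+\) and \(n=\eta^-\). Zero sum gives \(\1^\top p=\1^\top n=:m=\tfrac12\norm{\eta}_1\in(0,1]\), and both \(p\) and \(n\) are supported in \(e\). Then the decomposition
\[
\eta=\sum_{u,v\in e}\frac{p_u n_v}{m}\,(\ee_u-\ee_v)
\]
holds, because summing over \(v\) recovers \(p_u\) and summing over \(u\) recovers \(n_v\). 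Its coefficients are nonnegative with total \(m\le 1\). I would put the remaining weight \(1-m\) on the generator \(\ee_{u_0}-\ee_{u_0}=0\) for any \(u_0\in e\). This writes \(\eta\) as a genuine convex combination of generators, so \(\eta\in K_e\). This transport-style decomposition, an outer-product coupling of the positive and negative parts, is the main step. The only subtlety is padding with the zero generator to reach total weight exactly \(1\), which is why the generators allow \(u=v\).

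Finally, \(\max_{\eta\in K_e}\ip{\eta}{x}\) is attained at a generator, where \(\ip{\ee_u-\ee_v}{x}=x_u-x_v\). Maximizing over \(u,v\in e\) gives \(\max_{u\in e}x_u-\min_{v\in e}x_v=R_e(x)\). The second expression for the maximum then follows from the set identity.
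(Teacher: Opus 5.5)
Your proposal is correct and matches the paper's proof step for step: the same two inclusions, a transport decomposition of $\eta^+$ onto $\eta^-$ padded with the zero generator, and the support formula read off from the generators. The only difference is that you instantiate the paper's ``any transport plan'' with the explicit product coupling $\alpha_{uv}=p_u n_v/m$, which is a valid choice.
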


\begin{proof}
Every generator $\ee_u-\ee_v$ lies in $U_e$ and has $\ell_1$ norm at most $2$, so
\[
K_e\subseteq \{\eta\in U_e:\norm{\eta}_1\le 2\}.
\]
For the reverse inclusion, take $\eta\in U_e$ with $\norm{\eta}_1\le 2$ and write
\[
\eta=\eta^+-\eta^-,
\qquad
\eta^+,\eta^-\ge 0.
\]
Since $\1^\top \eta=0$,
\[
\1^\top \eta^+=\1^\top \eta^-=:m=\frac12\norm{\eta}_1\le 1.
\]
Choose any transport plan $\alpha_{uv}\ge 0$ from $\eta^+$ to $\eta^-$ so that
\[
\sum_{v\in e}\alpha_{uv}=\eta_u^+,
\qquad
\sum_{u\in e}\alpha_{uv}=\eta_v^-.
\]
Then
\[
\eta=\sum_{u,v\in e}\alpha_{uv}(\ee_u-\ee_v).
\]
If $m<1$, pad with weight $1-m$ on the zero vector, which belongs to $K_e$ by the not-necessarily-distinct convention. This shows $\eta\in K_e$.

For the support-function formula, let $M:=\max_{u\in e} x_u$ and $m:=\min_{v\in e} x_v$. Every $\eta\in K_e$ is a convex combination of vectors $\ee_u-\ee_v$, so
\[
\ip{\eta}{x}\le M-m=R_e(x).
\]
Equality is attained by choosing $\eta=\ee_{u^\star}-\ee_{v^\star}$ for any maximizer $u^\star$ and minimizer $v^\star$. Hence $R_e=\sigma_{K_e}$.
\end{proof}
\subsection{One-edge conjugate and aggregate feasibility}

With the support-function description of $R_e$ in hand, we begin by computing the conjugate of a single edge term $\psi_e$ and by recording which right-hand sides can be obtained by summing edge-local zero-sum flows.

For each edge $e\in E$, define
\[
\psi_e(x):=\frac{w_e}{2}R_e(x)^2.
\]
We begin with the one-edge conjugate formula that underlies the whole paper.

\begin{proposition}[One-edge conjugate]
\label{prop:edge-conjugate}
For every edge $e\in E$ and every $x\in\R^V$,
\[
\psi_e(x)
=
\sup_{\eta_e\in U_e}
\left\{
\ip{\eta_e}{x}-\frac{\norm{\eta_e}_1^2}{8w_e}
\right\}.
\]
Equivalently,
\[
\psi_e^\ast(\eta_e)
:=
\sup_{x\in\R^V}
\left\{
\ip{\eta_e}{x}-\psi_e(x)
\right\}
=
\begin{cases}
\dfrac{\norm{\eta_e}_1^2}{8w_e}, & \eta_e\in U_e,\\[1ex]
+\infty, & \text{otherwise.}
\end{cases}
\]
\end{proposition}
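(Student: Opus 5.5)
The plan is to compute the conjugate $\psi_e^\ast$ directly and then obtain the first display from Fenchel--Moreau. The latter applies because $\psi_e$ is convex, finite, and continuous: it is the square of the nonnegative convex function $R_e=\sigma_{K_e}$ from Lemma~\ref{lem:edge-support}, times $w_e/2>0$. So $\psi_e=\psi_e^{\ast\ast}$. Once $\psi_e^\ast$ is identified, the supremum over $\eta_e\in\R^V$ collapses to a supremum over $U_e$, since the conjugate is $+\infty$ off $U_e$.

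To compute $\psi_e^\ast$, the first step is the case $\eta\notin U_e$. Then either $\eta$ has a coordinate $v\notin e$, or $\1^\top\eta\neq0$.
\begin{itemize}
\item In the first case, move $x$ along $t\ee_v$. This leaves $R_e$ unchanged and sends $\ip{\eta}{x}\to\infty$.
\item In the second case, move along $t\1$. This again leaves $R_e$ unchanged and gives $t\,\1^\top\eta\to\infty$.
\end{itemize}
In both cases $\psi_e^\ast(\eta)=+\infty$.

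Now take $\eta\in U_e$ and set $m=\frac12\norm{\eta}_1$. If $m=0$, then $\psi_e^\ast(0)=\sup_x(-\psi_e(x))=0$, as claimed. If $m>0$, then $\eta/m\in K_e$ by Lemma~\ref{lem:edge-support}, so $\ip{\eta}{x}\le m R_e(x)$. Hence
\[
\ip{\eta}{x}-\tfrac{w_e}{2}R_e(x)^2\le mR_e(x)-\tfrac{w_e}{2}R_e(x)^2\le \sup_{\rho\ge0}\Bigl(m\rho-\tfrac{w_e}{2}\rho^2\Bigr)=\frac{m^2}{2w_e}=\frac{\norm{\eta}_1^2}{8w_e}.
\]
For the matching lower bound, I would use that $\ip{\cdot}{x}$ and $R_e$ depend only on differences, and choose $x$ with $R_e(x)=\rho^\star:=m/w_e$ and $\ip{\eta}{x}=m\rho^\star$. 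Concretely, set $x_v=\rho^\star$ on $\supp(\eta^+)$ and $x_v=0$ elsewhere. Then:
\begin{itemize}
\item $\ip{\eta}{x}=\rho^\star\,\1^\top\eta^+=m\rho^\star$;
\item $R_e(x)=\rho^\star$, because $\eta\neq0$ and has zero sum, so both $\supp(\eta^+)\subseteq e$ and $\supp(\eta^-)\subseteq e$ are nonempty, and the latter gets value $0$.
\end{itemize}
Plugging in gives exactly $m^2/(2w_e)$.

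The one step needing care is the attainment argument: checking that the maximum of $R_e$ is achieved on the positive support and the minimum on the negative support, which is the case $m>0$. Everything else is routine.

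Alternatively, the first display can be verified directly, avoiding Fenchel--Moreau. The inequality $\le$ in the conjugate bound shows that the supremum over $U_e$ is at most $\psi_e(x)$, because the bound is symmetric in the roles of $x$ and $\eta$. Equality is attained at $\eta=w_eR_e(x)(\ee_{u^\star}-\ee_{v^\star})$, where $u^\star$ maximizes and $v^\star$ minimizes $x$ on $e$. Here $\norm{\eta}_1=2w_eR_e(x)$, so the value is $w_eR_e^2-\frac{w_e}{2}R_e^2=\psi_e(x)$. I would present this direct check, since it is self-contained.
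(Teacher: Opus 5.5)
Your proof is correct. It uses the same two ingredients as the paper: the description $K_e=\{\eta\in U_e:\norm{\eta}_1\le 2\}$ from Lemma~\ref{lem:edge-support}, and the scalar identity $\sup_{\rho\ge0}(m\rho-\frac{w_e}{2}\rho^2)=m^2/(2w_e)$. You organize them in the opposite order, however.

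The paper proves the first display directly as a chain of suprema:
\begin{itemize}
\item it writes $\frac{w_e}{2}R_e(x)^2=\sup_{\rho\ge0}\{\rho R_e(x)-\rho^2/(2w_e)\}$;
\item it replaces $R_e(x)$ by $\max_{\kappa\in K_e}\ip{\kappa}{x}$;
\item it changes variables to $\eta=\rho\kappa$, noting that $\eta\in\rho K_e$ if and only if $\eta\in U_e$ and $\norm{\eta}_1\le 2\rho$, so the cheapest admissible $\rho$ is $\norm{\eta}_1/2$.
\end{itemize}
The conjugate formula is then simply declared equivalent. That step tacitly uses biconjugation of the closed convex function $\eta\mapsto\norm{\eta}_1^2/(8w_e)+\delta_{U_e}(\eta)$.

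You instead compute $\psi_e^\ast$ first. You give an explicit argument for the value $+\infty$ off $U_e$, via translations along $\ee_v$ with $v\notin e$ and along $\1$. You also give an explicit maximizing $x$ on $U_e$. You then recover the first display in one of two ways: by Fenchel--Moreau, or by Fenchel--Young combined with the explicit maximizer $\eta=w_eR_e(x)(\ee_{u^\star}-\ee_{v^\star})$.

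Each route has its advantages. The paper's is shorter, needs no case analysis, and does not need convexity of $\psi_e$ as an input, since convexity falls out of the supremum representation. Yours is more self-contained on the second display, which the paper only asserts. It also exhibits maximizers on both sides, showing that the supremum in the first display is attained by a single scaled pairwise transport $w_eR_e(x)(\ee_{u^\star}-\ee_{v^\star})$.
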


\begin{proof}
By Lemma~\ref{lem:edge-support},
\[
R_e(x)=\max_{\kappa\in K_e}\ip{\kappa}{x}.
\]
For every scalar $\rho\ge 0$,
\[
\frac{w_e}{2}R_e(x)^2
=
\sup_{\rho\ge 0}
\left\{
\rho R_e(x)-\frac{\rho^2}{2w_e}
\right\}
=
\sup_{\rho\ge 0,\;\kappa\in K_e}
\left\{
\ip{\rho\kappa}{x}-\frac{\rho^2}{2w_e}
\right\}.
\]
Now $\eta=\rho\kappa$ with $\rho\ge 0$ and $\kappa\in K_e$ if and only if $\eta\in U_e$ and
\[
\norm{\eta}_1\le 2\rho.
\]
For fixed $\eta\in U_e$, the smallest admissible $\rho$ is $\norm{\eta}_1/2$, and this minimizes the penalty $\rho^2/(2w_e)$. Therefore
\[
\frac{w_e}{2}R_e(x)^2
=
\sup_{\eta_e\in U_e}
\left\{
\ip{\eta_e}{x}-\frac{\norm{\eta_e}_1^2}{8w_e}
\right\},
\]
which is the claimed conjugate formula.
\end{proof}

Let
\[
\mathcal Y:=\prod_{e\in E} U_e
\]
and define the aggregation map
\[
B:\mathcal Y\to\R^V,
\qquad
B\eta:=\sum_{e\in E} \eta_e.
\]

\begin{lemma}[Range of the aggregation map]
\label{lem:B-range}
If $H$ is connected, then
\[
B(\mathcal Y)=\{z\in\R^V:\1^\top z=0\}.
\]
In particular, for every $s\in\R^V$ with $\1^\top s=0$, the affine set
\[
\{\eta\in\mathcal Y:\ B\eta=s\}
\]
is nonempty.
\end{lemma}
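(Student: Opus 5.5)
We prove the two inclusions separately; the second assertion of Lemma~\ref{lem:B-range} is then immediate.

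\emph{Forward inclusion.} For every $\eta=(\eta_e)_{e\in E}\in\mathcal Y$ we have $\1^\top B\eta=\sum_{e}\1^\top\eta_e=0$, since each $\eta_e\in U_e$ has zero sum. Hence $B(\mathcal Y)\subseteq\{z:\1^\top z=0\}$.

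\emph{Reverse inclusion.} The space $\{z:\1^\top z=0\}$ is spanned by the vectors $\ee_a-\ee_b$ with $a,b\in V$, and $B(\mathcal Y)$ is a linear subspace because $B$ is linear and $\mathcal Y$ is a product of subspaces. So it suffices to show that every $\ee_a-\ee_b$ lies in $B(\mathcal Y)$.

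Since $H$ is connected, there is a sequence $a=v_0,v_1,\dots,v_k=b$ and hyperedges $e_1,\dots,e_k$ with $v_{i-1},v_i\in e_i$. We then write
\[
\ee_a-\ee_b=\sum_{i=1}^k(\ee_{v_{i-1}}-\ee_{v_i}).
\]
Each summand $\ee_{v_{i-1}}-\ee_{v_i}$ lies in $U_{e_i}$: it is supported in $e_i$ and has zero sum. We define $\eta_e$ as the sum of those summands with $e_i=e$ (and $\eta_e=0$ if no $e_i$ equals $e$). Then $\eta\in\mathcal Y$, because each $U_e$ is closed under addition, and $B\eta=\ee_a-\ee_b$. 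This proves equality of the two sets.

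Now let $s$ satisfy $\1^\top s=0$. By the equality just proved, $s\in B(\mathcal Y)$, so some $\eta\in\mathcal Y$ has $B\eta=s$, and the affine set $\{\eta\in\mathcal Y:B\eta=s\}$ is nonempty.

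The only step that needs care is the reverse inclusion, and even there the difficulty is only in making precise what connectivity of a hypergraph means: consecutive vertices on the path must share a hyperedge.
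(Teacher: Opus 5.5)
Your proof is correct and follows essentially the same route as the paper: zero-sum for the forward inclusion, then spanning by differences $\ee_u-\ee_v$ and telescoping along a hypergraph path for the reverse inclusion. The only cosmetic difference is that you group the path summands by hyperedge into a single $\eta$, whereas the paper sums separate elements $\xi^{(i)}\in\mathcal Y$; you also make explicit that $B(\mathcal Y)$ is a subspace, which the paper uses implicitly.
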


\begin{proof}
Every $\eta\in\mathcal Y$ satisfies
\[
\1^\top B\eta=\sum_{e\in E}\1^\top \eta_e=0,
\]
so
\[
B(\mathcal Y)\subseteq \{z\in\R^V:\1^\top z=0\}.
\]
For the reverse inclusion, it suffices to show that
\[
\ee_u-\ee_v\in B(\mathcal Y)
\qquad
\text{for all }u,v\in V,
\]
because the zero-sum subspace is spanned by these differences. Fix $u,v\in V$. Since $H$ is connected, there exist vertices
\[
p_0=u,\ p_1,\ \dots,\ p_k=v
\]
and edges $e_1,\dots,e_k\in E$ such that
\[
p_{i-1},p_i\in e_i
\qquad
(1\le i\le k).
\]
For each $i=1,\dots,k$, let $\xi^{(i)}\in\mathcal Y$ be the element whose $e_i$-coordinate is
\[
\ee_{p_{i-1}}-\ee_{p_i}\in U_{e_i},
\]
and whose other coordinates are zero. Then
\[
B\Bigl(\sum_{i=1}^k \xi^{(i)}\Bigr)
=
\sum_{i=1}^k (\ee_{p_{i-1}}-\ee_{p_i})
=
\ee_u-\ee_v.
\]
Hence $\ee_u-\ee_v\in B(\mathcal Y)$, proving the reverse inclusion.
\end{proof}

\subsection{Fenchel dual of the Poisson problem}

We now combine the one-edge conjugate formula and the aggregation lemma into a Fenchel dual of the Poisson problem with $x$ restricted to $\Xzero$.

\begin{theorem}[Fenchel dual of the hypergraph Poisson problem with $x$ restricted to $\Xzero$]
\label{thm:poisson-fenchel-dual}
Assume $\1^\top s=0$. Then the affine set
\[
\{\eta\in\mathcal Y:\ B\eta=s\}
\]
is nonempty, and
\begin{equation}
\label{eq:poisson-duality}
\mathrm{OPT}
=
\min_{x\in\Xzero} \mathcal P(x)
=
-\min_{\eta\in\mathcal Y,\;B\eta=s}
\sum_{e\in E}\frac{\norm{\eta_e}_1^2}{8w_e}.
\end{equation}
In particular, the dual problem is
\begin{equation}
\label{eq:dual-problem}
\mathcal D^\star
:=
\min_{\eta\in\mathcal Y,\;B\eta=s}
\mathcal D(\eta),
\qquad
\mathcal D(\eta):=\sum_{e\in E}\frac{\norm{\eta_e}_1^2}{8w_e},
\end{equation}
and $\mathrm{OPT}=-\mathcal D^\star$.
\end{theorem}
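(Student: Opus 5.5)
Nonemptiness of the dual feasible set is Lemma~\ref{lem:B-range}, so the work is the duality identity. The plan is to prove weak duality directly from Proposition~\ref{prop:edge-conjugate}. Strong duality and attainment of both minima then follow from finite-dimensional polyhedral convex duality; the attainment is what needs care.

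\emph{Weak duality.} Fix $x\in\Xzero$ and $\eta\in\mathcal Y$ with $B\eta=s$. By Proposition~\ref{prop:edge-conjugate}, for each $e$ we have $\psi_e(x)\ge\ip{\eta_e}{x}-\norm{\eta_e}_1^2/(8w_e)$. Summing over $e$ gives $\mathcal E_H(x)\ge\ip{B\eta}{x}-\mathcal D(\eta)=\ip{s}{x}-\mathcal D(\eta)$, so $\mathcal P(x)\ge-\mathcal D(\eta)$. Hence $\inf_{\Xzero}\mathcal P\ge-\inf\{\mathcal D(\eta):B\eta=s\}$.

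\emph{Attainment.} On the dual side, $\mathcal D$ is continuous and coercive on the nonempty closed affine set $\{\eta\in\mathcal Y: B\eta=s\}$, because it dominates a positive multiple of $\sum_e\norm{\eta_e}_1^2$. So a dual minimizer exists. On the primal side, $\mathcal P$ is bounded below on $\Xzero$ by weak duality. Connectivity implies that $\mathcal E_H(x)=0$ with $x\in\Xzero$ forces $x=0$, since $x$ must be constant along edges, hence constant on $V$, and then zero because it lies in $\Xzero$. Combined with $2$-homogeneity, this gives $\mathcal E_H(x)\ge c\norm{x}^2$ on $\Xzero$ for some $c>0$, by compactness of the unit sphere in $\Xzero$. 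So $\mathcal P$ is coercive on $\Xzero$ and attains its minimum. (This also shows $\mathcal E_H$ is positive definite in the homogeneous sense, which is consistent with the paper's remark that it need not be strictly convex.)

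\emph{Strong duality.} I would write $\mathrm{OPT}=\min_x\{F(x)+G(x)\}$ with $F=\mathcal E_H-\ip{s}{\cdot}$ and $G=\iota_{\Xzero}$. Since $F$ is finite everywhere, the Fenchel--Rockafellar qualification condition holds trivially, and $\mathrm{OPT}=-\min_y\{F^*(y)+G^*(-y)\}$. Here $G^*(-y)=\iota_{\operatorname{span}(D\1)}(y)$, so $y=tD\1$.

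It remains to compute $F^*(y)=\mathcal E_H^*(y+s)$. Because $\mathcal E_H=\sum_e\psi_e$ with each $\psi_e$ finite and convex, the conjugate of the sum is the infimal convolution of the conjugates, and the infimum is attained. By Proposition~\ref{prop:edge-conjugate} this gives
\[
\mathcal E_H^*(z)=\min\{\mathcal D(\eta):\eta\in\mathcal Y,\ B\eta=z\}
\]
(with value $+\infty$ if $\1^\top z\neq0$). Therefore
\[
\mathrm{OPT}=-\min_{t,\eta}\{\mathcal D(\eta): B\eta=s+tD\1\}.
\]
Summing coordinates, and using $\1^\top B\eta=0$, $\1^\top s=0$ and $\1^\top D\1>0$, forces $t=0$. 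This yields \eqref{eq:poisson-duality}.

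\emph{Main obstacle.} The delicate step is justifying the infimal-convolution formula for $\mathcal E_H^*$ and the exchange of $\sup$ and $\inf$. The planned justification is the standard theorem for finite convex functions on $\R^V$, whose domains all have full relative interior. A more self-contained alternative, if needed, is a minimax argument: the Lagrangian $\ip{B\eta}{x}-\mathcal D(\eta)-\ip{s}{x}$ is convex in $x$ and concave in $\eta$, and $-\mathcal D$ is coercive in $\eta$ on the relevant affine slice, so Sion's theorem applies.
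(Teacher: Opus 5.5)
Your proof is correct, and it reaches the identity through a slightly different Fenchel--Rockafellar splitting from the one in the paper.

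The paper lifts to the product space $Z=\prod_{e\in E}\R^V$ via the diagonal map $Ax=(x)_{e\in E}$. Then $\Phi(z)=\sum_e\psi_e(z_e)$ is separable, $\Phi^\ast$ is read off coordinatewise from Proposition~\ref{prop:edge-conjugate}, and the sum over hyperedges is carried by the adjoint $A^\ast\eta=\sum_e\eta_e$. No formula for the conjugate of a sum is needed.

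You stay in $\R^V$ with the two-function splitting $F+\iota_{\Xzero}$. You pay for the sum instead through the rule that $(\sum_e\psi_e)^\ast$ is the infimal convolution of the $\psi_e^\ast$, with the infimum attained. This is Rockafellar's Theorem~16.4 \cite{Rockafellar1970ConvexAnalysis}, and it applies because every $\psi_e$ is finite on all of $\R^V$. The difference is largely bookkeeping, since that formula is itself usually proved by duality through the same diagonal map.

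The concluding step is the same in both proofs: the coordinate sum forces the $\operatorname{span}(D\1)$ multiplier to vanish. Dual attainment by coercivity is also shared.

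What your version adds is an explicit proof of primal attainment. The statement writes $\min_{x\in\Xzero}\mathcal P(x)$, but the paper's argument only works with the infimum. Your bound $\mathcal E_H(x)\ge c\norm{x}^2$ on $\Xzero$, from connectivity, $2$-homogeneity and compactness of the unit sphere, justifies the minimum.

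Two smaller points. Your direct weak-duality inequality is redundant once Fenchel--Rockafellar is invoked, but it is harmless. The Sion fallback is not needed, and as stated it is incomplete: Sion's theorem needs one of the two sets to be compact, not coercivity, so you would first have to truncate to a compact set and argue that the truncation changes nothing.
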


\begin{proof}
Set
\[
Z:=\prod_{e\in E}\R^V
\]
with the product pairing
\[
\ip{\eta}{\zeta}_Z:=\sum_{e\in E}\ip{\eta_e}{\zeta_e},
\qquad
\eta,\zeta\in Z.
\]
We view $\mathcal Y$ as a linear subspace of $Z$ by zero-padding each edge-local vector.
Define the linear map
\[
A:\R^V\to Z,
\qquad
Ax:=(x)_{e\in E}.
\]
Its adjoint is
\[
A^\ast \eta=\sum_{e\in E}\eta_e,
\]
so $A^\ast \eta=B\eta$ for every $\eta\in\mathcal Y$.

Define
\[
f(x):=\delta_{\Xzero}(x)-\ip{s}{x},
\qquad
\Phi(z):=\sum_{e\in E}\psi_e(z_e)
\qquad
(z=(z_e)_{e\in E}\in Z).
\]
Then
\[
\mathrm{OPT}
=
\inf_{x\in\R^V}\{f(x)+\Phi(Ax)\}.
\]
By Proposition~\ref{prop:edge-conjugate},
\[
\Phi^\ast(\eta)
=
\begin{cases}
\sum_{e\in E}\dfrac{\norm{\eta_e}_1^2}{8w_e}, & \eta\in\mathcal Y,\\[1ex]
+\infty, & \text{otherwise.}
\end{cases}
\]
Since $0\in\Xzero$ and $\Phi$ is finite and continuous on $Z$, the Fenchel--Rockafellar qualification condition holds, so the standard duality theorem applies; see, for example, \cite{Rockafellar1966FenchelExtension,Rockafellar1970ConvexAnalysis}. Therefore
\[
\mathrm{OPT}
=
\sup_{\eta\in Z}
\left\{
-f^\ast(-A^\ast\eta)-\Phi^\ast(\eta)
\right\}.
\]
Moreover, by the standard identities for conjugates of indicator functions and support functions, again see \cite{Rockafellar1970ConvexAnalysis},
\[
f^\ast(y)
=
\sup_{x\in\Xzero}\ip{y+s}{x}
=
\begin{cases}
0, & y+s\in\Xzero^\perp=\operatorname{span}(D\1),\\
+\infty, & \text{otherwise.}
\end{cases}
\]
Hence
\[
\mathrm{OPT}
=
\sup_{\eta\in\mathcal Y,\;B\eta-s\in\operatorname{span}(D\1)}
\bigl(-\mathcal D(\eta)\bigr).
\]

Now every $\eta\in\mathcal Y$ satisfies
\[
\1^\top B\eta=\sum_{e\in E}\1^\top \eta_e=0,
\]
and by assumption $\1^\top s=0$. Thus every dual-feasible $\eta$ satisfies
\[
\1^\top(B\eta-s)=0.
\]
If also $B\eta-s\in\operatorname{span}(D\1)$, then
\[
B\eta-s=cD\1
\]
for some scalar $c$. Taking the coordinate sum gives
\[
0=\1^\top(B\eta-s)=c\,\1^\top D\1.
\]
Since $D\1$ has strictly positive entries, we have $\1^\top D\1>0$, so $c=0$. Therefore the compatibility condition reduces to
\[
B\eta=s.
\]
Thus
\[
\mathrm{OPT}
=
\sup_{\eta\in\mathcal Y,\;B\eta=s}
\bigl(-\mathcal D(\eta)\bigr).
\]

By Lemma~\ref{lem:B-range}, the feasible set $\{\eta\in\mathcal Y:\ B\eta=s\}$ is nonempty. Also,
\[
\mathcal D(\eta)
\ge
\frac{1}{8\max_{e\in E} w_e}\sum_{e\in E}\norm{\eta_e}_1^2
\ge
\frac{1}{8|E|\max_{e\in E} w_e}\left(\sum_{e\in E}\norm{\eta_e}_1\right)^2,
\]
so $\mathcal D$ is coercive on the finite-dimensional space $\mathcal Y$. Hence $\mathcal D$ attains a minimum on the closed affine set $\{\eta\in\mathcal Y:\ B\eta=s\}$. Consequently,
\[
\mathrm{OPT}
=
-\min_{\eta\in\mathcal Y,\;B\eta=s}\mathcal D(\eta)
=
-\mathcal D^\star,
\]
which yields \eqref{eq:poisson-duality} and \eqref{eq:dual-problem}.
\end{proof}
\section{A Lifted Flow Formulation and the Solver}
\label{sec:dual-solver}

This section carries out Step~2 of the roadmap. Its input is the edge-local Fenchel dual from Section~\ref{sec:fenchel-dual}; it replaces the signed variables $\eta_e$ by nonnegative arc flows on a lifted directed graph $G^\uparrow$ with $O(P)$ arcs. The output, stated in Theorem~\ref{thm:black-box-dual-solver}, is a queryable feasible lifted flow $f$, its mass vector $\mu_e=f_{(e^-,e^+)}$, and the induced dual point.
Subsection~\ref{sec:lifted-flow} gives the exact lift, Subsection~\ref{sec:convex-flow} verifies the hypotheses of the convex-flow framework of Chen et al.\ \cite{ChenKyngLiuPengProbstGutenbergSachdeva2025}, and Subsection~\ref{sec:main-theorem} states the black-box first-stage guarantee.

\subsection{A lifted flow formulation}
\label{sec:lifted-flow}

Section~\ref{sec:fenchel-dual} gave a dual with one signed zero-sum vector $\eta_e$ on each hyperedge. That formulation is still not in a form that a fast flow algorithm can use directly, because its variables are signed edge-local vectors rather than nonnegative arc flows on a sparse graph.

The key step of this section is an equivalent lifted flow formulation of that dual.
For each hyperedge $e$, we split the signed vector $\eta_e$ into its positive part, its negative part, and their common total mass:
\[
p_{ev}:=((\eta_e)_v)_+,
\qquad
n_{ev}:=((\eta_e)_v)_-,
\qquad
\mu_e:=\frac12\norm{\eta_e}_1.
\]
Because $\eta_e\in U_e$ has zero total mass, its positive and negative parts have the same total mass, namely $\mu_e$. The proposition below shows that these variables rewrite the dual using only nonnegative variables and conservation-type constraints. Throughout this section, when we focus on a single edge $e$ we freely identify $\eta_e\in U_e$ with its local coordinate vector in $\R^e$.

For each edge $e$, the variables $p_{ev}$ and $n_{ev}$ below record how much positive and negative mass of $\eta_e$ is assigned to the vertex $v$, while $\mu_e$ records their common total mass.

\begin{proposition}[Equivalent lifted flow formulation]
\label{prop:lifted-flow-equivalence}
The dual \eqref{eq:dual-problem} is equivalent to
\begin{equation}
\label{eq:lifted-flow}
\min_{p,n,\mu}
\left\{
\sum_{e\in E}\frac{\mu_e^2}{2w_e}
:
\begin{array}{l}
p_{ev}\ge 0,\ n_{ev}\ge 0,\ \mu_e\ge 0,\\[0.3em]
\sum_{e\ni v}(p_{ev}-n_{ev})=s_v \qquad (\forall v\in V),\\[0.3em]
\sum_{v\in e} p_{ev}=\mu_e \qquad (\forall e\in E),\\[0.3em]
\sum_{v\in e} n_{ev}=\mu_e \qquad (\forall e\in E).
\end{array}
\right\}.
\end{equation}
More precisely:
\begin{enumerate}[label=(\roman*),leftmargin=1.5em]
\item
Given any feasible $(p,n,\mu)$ for \eqref{eq:lifted-flow}, define
\[
(\eta_e)_v:=p_{ev}-n_{ev}\qquad (v\in e),
\]
and zero-pad outside $e$. Then $\eta_e\in U_e$ for every $e$, $B\eta=s$, and
\[
\mathcal D(\eta)\le \sum_{e\in E}\frac{\mu_e^2}{2w_e}.
\]

\item
Given any feasible $\eta$ for \eqref{eq:dual-problem}, define
\[
p_{ev}:=((\eta_e)_v)_+,
\qquad
n_{ev}:=((\eta_e)_v)_-,
\qquad
\mu_e:=\frac12\norm{\eta_e}_1.
\]
Then $(p,n,\mu)$ is feasible for \eqref{eq:lifted-flow}, and
\[
\sum_{e\in E}\frac{\mu_e^2}{2w_e}
=
\mathcal D(\eta).
\]
\end{enumerate}
Consequently, \eqref{eq:lifted-flow} and \eqref{eq:dual-problem} have the same optimal value.
\end{proposition}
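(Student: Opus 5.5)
The plan is to verify the two maps (i) and (ii) directly and then read off equality of optimal values from them.

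For (i), take a feasible $(p,n,\mu)$ for \eqref{eq:lifted-flow} and define $\eta_e$ as stated. By construction $\supp(\eta_e)\subseteq e$. The two per-edge constraints give
\[
\1^\top\eta_e=\sum_{v\in e}p_{ev}-\sum_{v\in e}n_{ev}=\mu_e-\mu_e=0,
\]
so $\eta_e\in U_e$. Summing over edges, $(B\eta)_v=\sum_{e\ni v}(p_{ev}-n_{ev})=s_v$ by the vertex constraint. For the cost, nonnegativity of $p,n$ and the triangle inequality give
\[
\norm{\eta_e}_1\le\sum_{v\in e}(p_{ev}+n_{ev})=2\mu_e.
\]
Since $\mu_e\ge0$, squaring is monotone, so $\norm{\eta_e}_1^2/(8w_e)\le \mu_e^2/(2w_e)$. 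Summing over $e$ yields the inequality in (i).

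For (ii), take $\eta$ feasible for \eqref{eq:dual-problem} and define $p,n,\mu$ as stated. Nonnegativity is immediate, and $p_{ev}-n_{ev}=(\eta_e)_v$. Since $\eta_e$ vanishes outside $e$, the vertex constraint is exactly $(B\eta)_v=s_v$. The zero-sum condition $\1^\top\eta_e=0$ gives $\sum_v p_{ev}=\sum_v n_{ev}$, and these two sums add up to $\norm{\eta_e}_1$. Hence each equals $\frac12\norm{\eta_e}_1=\mu_e$, which is the same computation already used in the proof of Lemma~\ref{lem:edge-support}. Finally, $\mu_e^2/(2w_e)=\norm{\eta_e}_1^2/(8w_e)$ holds identically, so the costs agree.

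For the final claim, (ii) shows that the optimal value of \eqref{eq:lifted-flow} is at most $\mathcal D^\star$, and (i) shows it is at least $\mathcal D^\star$. The minimum of \eqref{eq:dual-problem} is attained by Theorem~\ref{thm:poisson-fenchel-dual}, so applying (ii) to a dual minimizer shows that \eqref{eq:lifted-flow} attains the same value; writing ``min'' there is therefore justified.

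I do not expect any step to be a real obstacle. The only point needing care is the direction of the inequality in (i). When $p_{ev}$ and $n_{ev}$ are both positive at the same vertex, the lifted cost strictly exceeds the dual cost, so (i) gives only an inequality and not an equality. This is harmless, because the value comparison needs only that inequality together with the exact equality in (ii).
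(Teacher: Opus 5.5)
Your proposal is correct and follows essentially the same route as the paper. You verify the two maps directly, using the bound $\|\eta_e\|_1\le\sum_{v\in e}(p_{ev}+n_{ev})=2\mu_e$ in (i) and the equal split of positive and negative mass in (ii), and your explicit value comparison (including attainment via a dual minimizer from Theorem~\ref{thm:poisson-fenchel-dual}) simply spells out the ``consequently'' step that the paper leaves implicit.
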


\begin{proof}
For part (i), define $\eta$ from $(p,n,\mu)$ as above. For every edge,
\[
\sum_{v\in e} (\eta_e)_v
=
\sum_{v\in e} p_{ev}-\sum_{v\in e} n_{ev}
=
\mu_e-\mu_e
=
0,
\]
so $\eta_e\in U_e$. Also,
\[
(B\eta)_v
=
\sum_{e\ni v} (\eta_e)_v
=
\sum_{e\ni v}(p_{ev}-n_{ev})
=
s_v.
\]
Finally,
\[
\norm{\eta_e}_1
=
\sum_{v\in e}|p_{ev}-n_{ev}|
\le
\sum_{v\in e}(p_{ev}+n_{ev})
=
2\mu_e,
\]
which implies
\[
\frac{\norm{\eta_e}_1^2}{8w_e}
\le
\frac{\mu_e^2}{2w_e}.
\]
Summing over $e$ proves the claim.

For part (ii), since $\eta_e\in U_e$, its positive and negative parts have the same total mass:
\[
\sum_{v\in e} ((\eta_e)_v)_+
=
\sum_{v\in e} ((\eta_e)_v)_-
=
\frac12\norm{\eta_e}_1.
\]
Therefore
\[
\sum_{v\in e} p_{ev}=\mu_e,
\qquad
\sum_{v\in e} n_{ev}=\mu_e.
\]
Also,
\[
\sum_{e\ni v}(p_{ev}-n_{ev})
=
\sum_{e\ni v} (\eta_e)_v
=
(B\eta)_v
=
s_v.
\]
Thus $(p,n,\mu)$ is feasible. The identity in the objective is immediate from $\mu_e=\norm{\eta_e}_1/2$.
\end{proof}

The proposition leaves us with three kinds of nonnegative variables and only conservation-type constraints. We now package them as arc flows on a single directed graph with $O(P)$ arcs.

\paragraph{Lifted directed graph.}
For each hyperedge $e$, we create two auxiliary edge nodes $e^+$ and $e^-$; these are nodes of the lifted graph, not hyperedges.
Define the directed graph
\[
G^\uparrow=(V^\uparrow,A^\uparrow)
\]
with node set
\[
V^\uparrow
:=
V\cup \{e^+,e^-:e\in E\}
\]
and arc set
\[
A^\uparrow
:=
\{(e^+,v):v\in e\}
\cup
\{(v,e^-):v\in e\}
\cup
\{(e^-,e^+):e\in E\}.
\]

\begin{definition}[Transport and quadratic arcs]
\label{def:transport-quadratic-arcs}
For the lifted graph \(G^\uparrow\), define
\[
A^\uparrow_{\mathrm{tr}}
:=
\{(e^+,v):v\in e\}
\cup
\{(v,e^-):v\in e\},
\qquad
A^\uparrow_{\mathrm{quad}}
:=
\{(e^-,e^+):e\in E\}.
\]
We call arcs in \(A^\uparrow_{\mathrm{tr}}\) \emph{transport arcs} and arcs in \(A^\uparrow_{\mathrm{quad}}\) \emph{quadratic arcs}.
\end{definition}

We interpret
\[
p_{ev}=f_{(e^+,v)},
\qquad
n_{ev}=f_{(v,e^-)},
\qquad
\mu_e=f_{(e^-,e^+)}.
\]
Let $A^\uparrow$ also denote the signed node-arc incidence matrix with convention
\[
(A^\uparrow f)_u
:=
\sum_{a\in\delta^-(u)} f_a-\sum_{a\in\delta^+(u)} f_a.
\]
Define the demand vector
\[
b^\uparrow_u
:=
\begin{cases}
s_u, & u\in V,\\
0, & u=e^+\text{ or }u=e^-.
\end{cases}
\]
Then \eqref{eq:lifted-flow} can be written as
\begin{equation}
\label{eq:lifted-graph-flow}
\min_{f\in\R^{A^\uparrow}}
\left\{
\mathcal H^\uparrow(f)
:=
\sum_{e\in E}\frac{f_{(e^-,e^+)}^2}{2w_e}
\;\middle|\;
A^\uparrow f=b^\uparrow,\ f_a\ge 0\ \forall a\in A^\uparrow
\right\}.
\end{equation}
The lifted graph has
\[
m^\uparrow:=|A^\uparrow|=2P+|E|=O(P)
\]
arcs and
\[
|V^\uparrow|=|V|+2|E|
\]
vertices.

\begin{figure}[t]
  \centering
  \begin{tikzpicture}[
      >=Stealth,
      scale=0.88,
      transform shape,
      every node/.style={font=\small},
      vtx/.style={circle, draw, thick, fill=white, minimum size=6.7mm, inner sep=0pt},
      aux/.style={circle, draw, thick, fill=black!7, minimum size=8.0mm, inner sep=0pt},
      transport/.style={-{Stealth[length=2.0mm]}, line width=0.55pt},
      quadratic/.style={-{Stealth[length=2.4mm]}, line width=1.05pt},
      note/.style={font=\scriptsize, fill=none, inner sep=1.1pt},
      box/.style={draw, rounded corners=2pt, densely dashed, inner sep=3.5pt}
    ]

    \node[aux] (ep) at (0, 1.95) {$e^+$};
    \node[aux] (em) at (0,-1.95) {$e^-$};

    \node[vtx] (v1) at (-2.55,0) {$v_1$};
    \node[vtx] (v2) at (-1.25,0) {$v_2$};
    \node      (vd) at ( 0.00,0) {$\cdots$};
    \node[vtx] (vk1) at ( 1.25,0) {$v_{k-1}$};
    \node[vtx] (vk) at ( 2.55,0) {$v_k$};

    \begin{scope}[on background layer]
      \node[box, fit=(v1)(v2)(vd)(vk1)(vk)] (ebox) {};
    \end{scope}
    \node[note, anchor=east, align=right] at (-3.15, 0.42) {vertices of\\the hyperedge $e$};

    \foreach \v in {v1,v2,vk1,vk} {
      \draw[transport] (ep) -- (\v);
      \draw[transport] (\v) -- (em);
    }

    \draw[quadratic] (em) to[out=-18,in=18,looseness=1.15] (ep);

    \node[note, anchor=east] at (-3.15, 1.04) {$p^e_v=f_{(e^+,v)}$};
    \node[note, anchor=east] at (-3.15,-1.04) {$n^e_v=f_{(v,e^-)}$};
    \node[note, anchor=west, align=left] at (3.20, 0.00) {$f_{(e^-,e^+)}=\mu_e$\\cost $\mu_e^2/(2w_e)$};

    \node[note] at (0, 2.65) {$\displaystyle \sum_{v\in e} p^e_v=\mu_e$};
    \node[note] at (0,-2.82) {$\displaystyle \sum_{v\in e} n^e_v=\mu_e$};
    \node[note, anchor=west, align=left] at (1.35,-3.16) {vertex contribution:\\$(\eta_e)_v=p^e_v-n^e_v$};
  \end{tikzpicture}
  \caption{A hyperedge gadget in the lifted graph $G^\uparrow$.  For each
  hyperedge $e$, the transport arcs $e^+\to v$ and $v\to e^-$ distribute the
  positive and negative parts of the edge-local dual vector, while the single
  quadratic arc $e^-\to e^+$ carries the total mass $\mu_e$ and has cost
  $\mu_e^2/(2w_e)$.}
  \label{fig:hyperedge-gadget}
\end{figure}
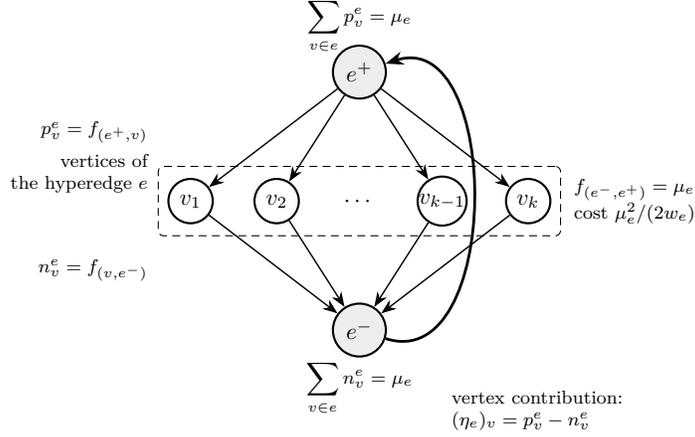
 
For each hyperedge $e$, the arcs incident to $e^+$ and $e^-$ form a small directed gadget: the two transport stars encode the positive and negative parts of the edge-local dual vector, and the single quadratic arc $(e^-,e^+)$ stores the total mass $\mu_e$. Gluing these gadgets together yields an $O(P)$-arc directed graph whose global coupling appears only through incidence constraints. Figure~\ref{fig:hyperedge-gadget} shows the local structure of one gadget.

\subsection{Compatibility with the general convex-flow framework}
\label{sec:convex-flow}

The goal of this subsection is to invoke Theorem~8.13 of Chen et al.\
\cite{ChenKyngLiuPengProbstGutenbergSachdeva2025} as a black box, with all
notational conversions and all items of their Assumption~8.2 made explicit.
We use Chen et al.'s incidence convention only inside the following theorem:
if \(B_{\rm Ch}\in\R^{A\times V}\) has \(+1\) at the tail and \(-1\) at the
head of an arc, their conservation constraint is \(B_{\rm Ch}^{\top}f=d\).
Our lifted incidence matrix \(A^\uparrow\) uses the opposite sign convention,
namely incoming minus outgoing.  Thus, when applying Chen et al. to the lifted
instance below, we set
\[
    B_{\rm Ch}^{\top}=-A^\uparrow,
    \qquad
    d_{\rm Ch}=-b^\uparrow .
\]

We also state the theorem in the open-epigraph orientation \(y>h(f)\).  This
is the orientation in which minimizing \(\sum_a y_a\) represents minimizing
\(\sum_a h_a(f_a)\), and it is the orientation used by the explicit \(p\)-norm
and entropy barriers in Chen et al.'s applications.  The precise relation
between this black-box statement and Chen et al.'s theorem, including the
epigraph orientation, incidence convention, oracle model, and the specialized
finite-precision output convention used here, is given in
Appendix~\ref{app:chen-black-box}.

\begin{theorem}[Chen et al., high-accuracy convex-flow black box, epigraph form]
\label{thm:chen-convex-flow}
Let $G=(V,A)$ be a directed graph with $m$ arcs and demand vector $d\in\R^V$. For every arc $a\in A$, let $h_a:\R\to\R\cup\{+\infty\}$ be convex, and let
\[
X_a:=\{(x,y):x\in\operatorname{int}(\operatorname{dom}h_a),\ y>h_a(x)\}
\]
be the corresponding open epigraph over the interior of the finite domain. Suppose that each $X_a$ is equipped with a $\nu$-self-concordant barrier $\psi_a:X_a\to\R$, and that the following six conditions hold for the oracle model and parameters of Chen et al., for some $K=\log^{O(1)}m$.
\begin{enumerate}[label=(\arabic*),leftmargin=1.5em]
\item One has $\widetilde O(1)$-time oracle access to $\nabla\psi_a(x,y)$ and $\nabla^2\psi_a(x,y)$.
\item The finite capacity range, demands, and finite costs are polynomially bounded: $|x|\le m^K$ whenever $h_a(x)<+\infty$, $\|d\|_\infty\le m^K$, and $|h_a(x)|\le O(m^K+|x|^K)$ on the finite part of the domain.
\item Each barrier is shifted by an additive constant so that
\[
\inf\{\psi_a(x,y):(x,y)\in X_a,\ |x|,|y|\le m^K\}=0.
\]
\item There are feasible variables $(f^{(0)},y^{(0)})$ with $B_{\rm Ch}^{\top}f^{(0)}=d$, $(f^{(0)}_a,y^{(0)}_a)\in X_a$, $|y^{(0)}_a|\le m^K$, and
\[
m^{-K}I\preceq \nabla^2\psi_a(f^{(0)}_a,y^{(0)}_a)\preceq m^K I,
\qquad
\psi_a(f^{(0)}_a,y^{(0)}_a)\le K
\]
for every arc $a$.
\item The algorithmic parameters $\alpha,\varepsilon,\kappa$ used by Chen et al. are all smaller than $1/(1000\nu)$.
\item On the polynomial box $|x|,|y|\le m^K$, every point with $\psi_a(x,y)\le \widetilde O(1)$ satisfies
\[
\nabla^2\psi_a(x,y)\preceq \exp(\log^{O(1)}m)I .
\]
\end{enumerate}
Then, for every fixed constant $C>0$, the high-accuracy implementation of
Chen et al.\ can be used as follows.  It runs in $m^{1+o(1)}$ time and, with
high probability, returns a finite string $S$ together with a query routine
$Q_S$.  There is an underlying real flow $f$ satisfying
$B_{\rm Ch}^{\top}f=d$ such that
\[
\sum_{a\in A} h_a(f_a)
\le
\min_{B_{\rm Ch}^{\top}f'=d}\sum_{a\in A} h_a(f'_a)
+
\exp(-\log^C m).
\]
For every arc $a\in A$ and every requested absolute precision $\tau$ with
$\log(1/\tau)=\log^{O(1)}m$, the query routine returns a dyadic number
$Q_S(a,\tau)=\widetilde f_a$ satisfying
\[
|\widetilde f_a-f_a|\le \tau.
\]
The work for $O(m)$ such coordinate queries at these precisions is included in
the $m^{1+o(1)}$ running time in the applications below.  Individual dyadic
coordinate values read from $S$ are approximate and need not themselves satisfy
the demand equation exactly.
\end{theorem}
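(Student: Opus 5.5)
This statement is a black-box restatement of Theorem~8.13 of Chen et al.\ \cite{ChenKyngLiuPengProbstGutenbergSachdeva2025}. The plan is therefore a careful translation argument rather than a new algorithmic analysis: each hypothesis and each conclusion will be matched against their Assumption~8.2 and their theorem, with the bookkeeping deferred to Appendix~\ref{app:chen-black-box}.

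\textbf{Step 1: translate the hypotheses.} I will check that the objects here coincide with Chen et al.'s. Their conservation constraint is $B_{\rm Ch}^\top f=d$ with tail $+1$ and head $-1$, so I will adopt exactly that convention inside the statement; the sign flip to our $A^\uparrow$ is applied only at invocation time. Their convex costs are modeled through epigraphs with $\nu$-self-concordant barriers. I will verify that the open epigraph $\{y>h_a(x)\}$ over $\operatorname{int}\operatorname{dom}h_a$ is the set on which their barriers live, and that minimizing $\sum_a y_a$ over it has the same infimum as minimizing $\sum_a h_a(f_a)$. This holds because $y_a$ can be taken arbitrarily close to $h_a(f_a)$, and boundary points of the domain are approximated by closedness and continuity of convex $h_a$ on its domain. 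Items (1)--(6) will be matched one-to-one with the oracle, polynomial-boundedness, barrier-normalization, initial-point, parameter, and Hessian-boundedness items of their Assumption~8.2; items (3) and (5) are normalizations that can always be arranged by additive shifts and parameter choice.

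\textbf{Step 2: translate the conclusion.} Their theorem yields, in $m^{1+o(1)}$ time and with high probability, a flow that is feasible with additive error at most $\exp(-\log^C m)$, for every fixed $C$. The subtle point is the output representation: their algorithm maintains the flow implicitly through dynamic data structures rather than writing it out. I will argue that the final state of these data structures serves as the string $S$, and that the query routine $Q_S$ evaluates a coordinate to precision $\tau$ with $\log(1/\tau)=\log^{O(1)}m$ in $m^{o(1)}$ amortized time, since their arithmetic already uses $\log^{O(1)}m$-bit words. Consequently, $O(m)$ queries fit in the stated budget. I will emphasize that exact feasibility $B_{\rm Ch}^\top f=d$ is a property of the underlying real flow, not of the rounded coordinates.

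\textbf{Main obstacle.} I expect the hardest step to be reconciling the finite-precision output convention with what Chen et al.\ literally state. Their guarantee is phrased for an exactly feasible flow produced via rounding or cycle-cancellation on a near-optimal iterate, and one must check that this final repair preserves the additive cost bound $\exp(-\log^C m)$ while remaining queryable. My first approach would be to invoke their high-accuracy theorem with accuracy parameter $\exp(-\log^{C+1}m)$, and then absorb the repair loss using the polynomial bounds of item (2) together with the Lipschitz control on $h_a$ that item (6) implies on the relevant box.
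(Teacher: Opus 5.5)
Your translation steps are the paper's own derivation in Appendix~\ref{app:chen-black-box}:
\begin{itemize}
\item keep Chen et al.'s tail-minus-head convention inside the statement, and apply $B_{\rm Ch}^{\top}=-A^\uparrow$, $d_{\rm Ch}=-b^\uparrow$ only when invoking the theorem;
\item read items (1)--(6) as Assumption~8.2;
\item note that the open-epigraph formulation has the same infimum as the closed separable problem.
\end{itemize}
One correction in Step~1. Convexity does not make $h_a$ continuous at the endpoints of its domain. Also, pushing coordinates into the interior one at a time would break $B_{\rm Ch}^{\top}f=d$. The right argument moves along $(1-\theta)f+\theta f^{(0)}$ toward the strictly feasible point of item~(4). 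There convexity alone gives $h\bigl((1-\theta)f+\theta f^{(0)}\bigr)\le(1-\theta)h(f)+\theta h(f^{(0)})$, which is what Lemma~\ref{lem:lift-interiorization} does with $f^{\rm int}$.

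The genuine problem is your ``main obstacle'' paragraph. It treats as an obstacle something the statement does not require, and it proposes a step that would fail. The statement asserts exact feasibility and the $\exp(-\log^C m)$ bound only for the \emph{underlying real} flow. The paper takes both directly from Chen et al.'s Theorem~8.13. The pair $S,Q_S$ is only an access convention to that same flow: answers must be $\tau$-close and need not satisfy the demand equation. So no repair happens inside this theorem, and there is no repair loss to absorb.

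Your proposed absorption would not work anyway. Item~(6) bounds $\nabla^2\psi_a$ only on the sublevel set $\{\psi_a\le\widetilde O(1)\}$, which concerns points quasi-polynomially far from $\partial X_a$. It gives no Lipschitz control on $h_a$ near $\partial\operatorname{dom}h_a$, and that is exactly where near-optimal flows live (unused transport arcs carry $f_a\approx 0$). In the intended instance $h_a=+\infty$ just outside $[0,\Lambda]$, so any perturbation that pushes such a coordinate below $0$ has infinite cost.

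The paper handles finite precision later, and in a different space. Lemma~\ref{lem:dual-certificate-repair} repairs the queried signed vector $\eta\in\mathcal Y$: first an edge-local zero-sum correction, then routing the residual imbalance along a spanning tree of the hypergraph. There $\mathcal D$ has no domain constraint, and the loss is bounded using only $\mathcal D(\eta)=P^{O(1)}$, $w_e^{-1}\le P^{O(1)}$, and the query precision.

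So drop the repair step and the appeal to item~(6). The conclusion then follows from the source theorem plus the sign and epigraph conversions, with the string/query routine adopted as the output convention. Your attempt to build $S$ from Chen et al.'s internal data-structure state goes beyond what the paper does and is not needed.
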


We now put the lifted problem into exactly this form. The only technical addition is a redundant polynomial upper capacity on every lifted arc. This cap is useful because item~(2) of Assumption~8.2 is a bounded-capacity assumption. Fix a constant $K_{\rm cap}$, to be chosen in Lemma~\ref{lem:lift-interiorization} as a function of the input-bound exponent $K_0$, and set
\begin{equation}
\label{eq:lift-cap}
\Lambda:=P^{K_{\rm cap}} .
\end{equation}
The proof below shows that, under the assumptions of Theorem~\ref{thm:black-box-dual-solver}, this cap does not change the optimum value of the lifted problem.
Recall the arc classes \(A^\uparrow_{\mathrm{tr}}\) and \(A^\uparrow_{\mathrm{quad}}\) from Definition~\ref{def:transport-quadratic-arcs}.
For each transport arc $a\in A^\uparrow_{\mathrm{tr}}$, define the capped transport cost
\[
h_a^{\mathrm{tr}}(f)
:=
\begin{cases}
0, & 0\le f\le \Lambda,\\
+\infty, & \text{otherwise}.
\end{cases}
\]
For each quadratic arc $(e^-,e^+)\in A^\uparrow_{\mathrm{quad}}$, define
\[
h_e^{\mathrm{quad}}(f)
:=
\begin{cases}
\dfrac{f^2}{2w_e}, & 0\le f\le \Lambda,\\
+\infty, & \text{otherwise}.
\end{cases}
\]
The capped lifted convex-flow instance is
\begin{equation}
\label{eq:chen-lifted-flow}
\min_{A^\uparrow f=b^\uparrow}
\left\{
\sum_{a\in A^\uparrow_{\mathrm{tr}}} h_a^{\mathrm{tr}}(f_a)
+
\sum_{e\in E} h_e^{\mathrm{quad}}(f_{(e^-,e^+)})
\right\}.
\end{equation}

\paragraph{Transport arcs.}
For a transport arc $a=(e^+,v)$ or $a=(v,e^-)$, the open epigraph domain is
\[
X_a^{\mathrm{tr}}
=
\{(f,y):0<f<\Lambda,\ y>0\}.
\]
We use the barrier
\begin{equation}
\label{eq:transport-barrier}
\psi_a^{\mathrm{tr}}(f,y):=-\log f-\log(\Lambda-f)-\log y .
\end{equation}

\paragraph{Quadratic arcs.}
For the quadratic arc $a=(e^-,e^+)$ associated with hyperedge $e$, the open epigraph domain is
\[
X_e^{\mathrm{quad}}
=
\left\{
(f,y):0<f<\Lambda,\ y>\frac{f^2}{2w_e}
\right\}.
\]
We use the barrier
\begin{equation}
\label{eq:quadratic-barrier}
\psi_e^{\mathrm{quad}}(f,y)
:=
-\log f-\log(\Lambda-f)-2\log y
-\log\!\left(y-\frac{f^2}{2w_e}\right).
\end{equation}

\begin{lemma}[Uniform barriers for the lifted arc epigraphs]
\label{lem:lifted-arc-barriers}
For every transport arc \(a\in A^\uparrow_{\mathrm{tr}}\), the function
\[
\psi_a^{\mathrm{tr}}(f,y)
=
-\log f-\log(\Lambda-f)-\log y
\]
is a \(3\)-self-concordant barrier for
\[
X_a^{\mathrm{tr}}=\{(f,y):0<f<\Lambda,\ y>0\}.
\]
For every quadratic arc \(a=(e^-,e^+)\in A^\uparrow_{\mathrm{quad}}\), the function
\[
\psi_e^{\mathrm{quad}}(f,y)
=
-\log f-\log(\Lambda-f)-2\log y
-\log\!\left(y-\frac{f^2}{2w_e}\right)
\]
is a \(6\)-self-concordant barrier for
\[
X_e^{\mathrm{quad}}
=
\left\{(f,y):0<f<\Lambda,\ y>\frac{f^2}{2w_e}\right\}.
\]
The constants are independent of \(e\), \(w_e\), and \(\Lambda\).  Hence all lifted arcs can be handled with the uniform parameter
\[
\nu:=6.
\]
\end{lemma}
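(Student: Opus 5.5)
The plan is to build both barriers from standard self-concordant pieces, using two closure facts. First, a sum of \(\nu_i\)-self-concordant barriers for convex sets \(X_i\) is a \((\sum_i\nu_i)\)-self-concordant barrier for \(\bigcap_i X_i\), as long as the intersection is nonempty. Second, self-concordance and the barrier parameter are preserved under affine changes of variables and under pulling back a barrier from a single coordinate to the plane. Since \(-\log t\) is a \(1\)-self-concordant barrier for \(\{t>0\}\), each logarithmic term \(-\log\ell(f,y)\) with \(\ell\) affine is a \(1\)-self-concordant barrier for the half-plane \(\{\ell>0\}\).

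For a transport arc, \(X_a^{\mathrm{tr}}\) is the intersection of three half-planes: \(\{f>0\}\), \(\{\Lambda-f>0\}\), and \(\{y>0\}\). Each of \(-\log f\), \(-\log(\Lambda-f)\), and \(-\log y\) is a \(1\)-self-concordant barrier for its half-plane. Their sum is therefore a \(3\)-self-concordant barrier for the intersection. The intersection is nonempty because \(\Lambda>0\).

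For a quadratic arc, the central ingredient is the classical fact (Nesterov--Nemirovski) that \(-\log(y-g(f))\) with \(g\) convex quadratic is a \(1\)-self-concordant barrier for the epigraph \(\{y>g(f)\}\). I would first treat the model case \(g(f)=f^2/2\). One route is to verify directly that \(\phi=-\log(y-f^2/2)\) satisfies \(|D^3\phi[h]|\le 2(D^2\phi[h])^{3/2}\) and \(\langle\nabla\phi,h\rangle^2\le D^2\phi[h]\). Write \(t=y-f^2/2\) and \(\sigma=hy-f\,hf\). Then \(D\phi[h]=-\sigma/t\) and \(D^2\phi[h]=\sigma^2/t^2+(hf)^2/t\), and both inequalities follow by elementary estimates. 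The alternative route is to cite the rotated Lorentz-cone barrier \(-\log(2yz-f^2)\) restricted to the affine slice \(z=1\).

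The general weight is then handled by the linear rescaling \(f\mapsto f/\sqrt{w_e}\), which maps \(y>f^2/(2w_e)\) onto the model set. This rescaling is the reason the constant is independent of \(w_e\).

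With that piece in hand, I decompose
\[
X_e^{\mathrm{quad}}=\{f>0\}\cap\{\Lambda-f>0\}\cap\{y>0\}\cap\{y>f^2/(2w_e)\}.
\]
The half-plane \(\{y>0\}\) is in fact redundant given the epigraph constraint. The barrier \(\psi_e^{\mathrm{quad}}\) is a sum of terms, each a barrier for one of these sets. The terms \(-\log f\) and \(-\log(\Lambda-f)\) contribute \(1\) each. The term \(-2\log y=2\cdot(-\log y)\) is a positive multiple \(\beta\ge1\) of a \(1\)-barrier; such a multiple is a \(\beta\)-self-concordant barrier, so it contributes \(2\). The epigraph term contributes \(1\).

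The total parameter is \(1+1+2+1=5\le6\). Since a \(\nu\)-self-concordant barrier is also \(\nu'\)-self-concordant for every \(\nu'\ge\nu\), the claimed constant \(6\) holds. The same monotonicity lifts the transport bound \(3\) to the uniform value \(\nu=6\).

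Two points need a word of care. Redundant constraints are harmless: the interior of \(X_e^{\mathrm{quad}}\) equals the intersection, and self-concordance requires only that the function tend to \(+\infty\) at the boundary of the domain, which each sum does. The only step I expect to need real checking is the \(1\)-self-concordance of the epigraph term. I will cite it as a standard result and indicate the direct calculation above as a backup.
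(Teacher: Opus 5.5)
Your proof is correct, but the quadratic-arc step takes a different route from the paper. The paper treats the $p=2$ case of the Nemirovski $p$-norm epigraph barrier, $\Phi_2(x,y)=-2\log y-\log(y-x^2)$, as a single black-box $4$-self-concordant barrier for $\{y>x^2\}$, as used in the proof of Chen et al.'s Theorem~8.14. It pulls this back through $T_e(f,y)=(f/\sqrt{2w_e},y)$ and adds the two half-plane barriers, for $4+1+1=6$.

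You instead split $\Phi_2$ into two pieces:
\begin{enumerate}
\item the classical fact that $-\log(y-f^2/(2w_e))$ is a $1$-self-concordant barrier for the epigraph of a convex quadratic;
\item the scaling fact that $\beta(-\log y)$ is a $\beta$-barrier for $\beta\ge 1$, with $\{y>0\}$ a redundant constraint.
\end{enumerate}
Your direct check of the first piece works. With $a=(\sigma/t)^2$ and $b=h_f^2/t\ge0$, the third-derivative bound reduces to $4(a+b)^3-a(2a+3b)^2=3ab^2+4b^3\ge 0$. The barrier inequality is $(\sigma/t)^2\le a+b$.

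This gives the sharper total $5$, and in fact shows $\Phi_2$ is a $3$-barrier. Monotonicity in $\nu$ then yields the stated $6$.

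The paper's route matches the exact barrier template from Chen et al.'s applications, which is convenient for the black-box invocation. Yours is self-contained and elementary. It also shows that the $-2\log y$ term is present only because it belongs to that general-$p$ template.

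One small caveat concerns your "alternative route" via the rotated Lorentz-cone barrier $-\log(2yz-f^2)$ restricted to $z=1$. Restriction to an affine slice preserves, but does not lower, the cone barrier's parameter, so this route certifies only parameter $2$, not $1$. No harm results, since $1+1+2+2=6$ still gives the claimed bound. Still, the parameter-$1$ claim should rest on your direct calculation or on the Nesterov--Nemirovski concave-quadratic example.
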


\begin{proof}
For a transport arc, the domain is cut out by the three affine inequalities
\(f>0\), \(\Lambda-f>0\), and \(y>0\).  The displayed barrier is the sum of the
three corresponding logarithmic barriers, so its self-concordance parameter is
\(3\).

For a quadratic arc, let
\[
Q_2:=\{(x,y):y>x^2\}.
\]
The \(p=2\) case of the \(p\)-norm epigraph barrier gives that
\[
\Phi_2(x,y):=-2\log y-\log(y-x^2)
\]
is a \(4\)-self-concordant barrier for \(Q_2\)
\cite[proof of Theorem~8.14]{ChenKyngLiuPengProbstGutenbergSachdeva2025};
see also Nemirovski \cite[Example~9.2.1]{Nemirovski2004IPM}.  Define the
nonsingular affine scaling
\[
T_e(f,y):=\left(\frac{f}{\sqrt{2w_e}},y\right).
\]
Since \(w_e>0\),
\[
T_e^{-1}(Q_2)
=
\left\{(f,y):y>\frac{f^2}{2w_e}\right\},
\]
and the pullback barrier is
\[
\Phi_2(T_e(f,y))
=
-2\log y-\log\!\left(y-\frac{f^2}{2w_e}\right).
\]
Self-concordance and the barrier parameter are invariant under nonsingular
affine changes of variables.  The capped quadratic epigraph is exactly
\[
X_e^{\mathrm{quad}}
=
T_e^{-1}(Q_2)\cap\{f>0\}\cap\{\Lambda-f>0\},
\]
which is
\[
\left\{(f,y):0<f<\Lambda,\ y>\frac{f^2}{2w_e}\right\}.
\]
Adding the logarithmic barriers for the two affine halfspaces gives
\[
-\log f-\log(\Lambda-f)-2\log y
-\log\!\left(y-\frac{f^2}{2w_e}\right),
\]
with parameter at most \(4+1+1=6\), uniformly over all quadratic arcs and
independently of \(w_e\) and \(\Lambda\).
\end{proof}

\paragraph{Open epigraph formulation.}
Introducing epigraph variables $y_a$, the instance supplied to Theorem~\ref{thm:chen-convex-flow} is
\begin{equation}
\label{eq:open-epigraph-lift}
\min_{f,y}
\left\{
\sum_{a\in A^\uparrow} y_a
\;\middle|\;
A^\uparrow f=b^\uparrow,
\ (f_a,y_a)\in X_a\ \forall a\in A^\uparrow
\right\},
\end{equation}
where $X_a=X_a^{\mathrm{tr}}$ for transport arcs and $X_{(e^-,e^+)}=X_e^{\mathrm{quad}}$ for quadratic arcs. In Chen et al.'s notation this is the same as using
\[
G=G^\uparrow,
\qquad
m=m^\uparrow,
\qquad
B_{\rm Ch}^{\top}=-A^\uparrow,
\qquad
 d_{\rm Ch}=-b^\uparrow .
\]
We emphasize that the epigraph coordinates are local edge variables, not
additional network-flow variables.  In the instance below, Chen et al.'s
conserved variable is only the arc-flow vector \(f\).  The coordinates \(y_a\)
belong to the constant-dimensional local epigraph domain of arc \(a\), and no
incidence or conservation constraint is imposed on them.

\begin{lemma}[Exact compatibility with Chen et al.'s epigraph model]
\label{lem:chen-epigraph-compatibility}
Let \(h_a=h_a^{\mathrm{tr}}\) for \(a\in A^\uparrow_{\mathrm{tr}}\), and let
\(h_{(e^-,e^+)}=h_e^{\mathrm{quad}}\) for \(e\in E\).  Then
\eqref{eq:open-epigraph-lift}, together with the domains
\(X_a^{\mathrm{tr}}\) and \(X_e^{\mathrm{quad}}\), is exactly the open-epigraph
instance of Theorem~\ref{thm:chen-convex-flow} for the edge-separable cost
\[
\sum_{a\in A^\uparrow_{\mathrm{tr}}}h_a^{\mathrm{tr}}(f_a)
+
\sum_{e\in E}h_e^{\mathrm{quad}}(f_{(e^-,e^+)})
\]
on the graph \(G^\uparrow\), with
\[
B_{\rm Ch}^{\top}=-A^\uparrow,
\qquad
d_{\rm Ch}=-b^\uparrow .
\]
In particular, the variables \(y_a\) are not flow variables and are not subject
to any conservation law.
\end{lemma}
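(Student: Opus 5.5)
The lemma is a bookkeeping statement, so the plan is to unfold definitions and match them item by item with Theorem~\ref{thm:chen-convex-flow}. There are three things to check: the graph and conservation law, the per-arc domains, and the objective.

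\emph{Graph and conservation.} Take \(G=G^\uparrow\), so the arc set is \(A^\uparrow\) and \(m=m^\uparrow\). By the definition of \(A^\uparrow\) as incoming minus outgoing, \((A^\uparrow f)_u=\sum_{a\in\delta^-(u)}f_a-\sum_{a\in\delta^+(u)}f_a\). Chen et al.'s matrix \(B_{\rm Ch}\) has \(+1\) at the tail and \(-1\) at the head, so \((B_{\rm Ch}^\top f)_u\) is outgoing minus incoming. Checking one arc \(a=(t,h)\) shows its column contributes \(-1\) at \(t\) and \(+1\) at \(h\) in \(A^\uparrow\), and the opposite signs in \(B_{\rm Ch}^\top\). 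Hence \(B_{\rm Ch}^\top=-A^\uparrow\), and \(A^\uparrow f=b^\uparrow\) holds if and only if \(B_{\rm Ch}^\top f=-b^\uparrow=d_{\rm Ch}\).

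\emph{Domains.} For each arc \(a\), compute \(\operatorname{int}(\operatorname{dom}h_a)\) and the open epigraph \(\{(x,y):x\in\operatorname{int}\operatorname{dom}h_a,\ y>h_a(x)\}\).
\begin{itemize}
\item For a transport arc, \(\operatorname{dom}h_a^{\mathrm{tr}}=[0,\Lambda]\), whose interior is \((0,\Lambda)\). Since \(h_a^{\mathrm{tr}}=0\) there, the open epigraph is \(\{0<f<\Lambda,\ y>0\}=X_a^{\mathrm{tr}}\).
\item For a quadratic arc, the interior is again \((0,\Lambda)\), and the condition \(y>f^2/(2w_e)\) gives exactly \(X_e^{\mathrm{quad}}\).
\end{itemize}
So the constraints \((f_a,y_a)\in X_a\) in \eqref{eq:open-epigraph-lift} are precisely Chen et al.'s local epigraph constraints. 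Each pair \((f_a,y_a)\) lives in its own two-dimensional local domain. The only linear coupling in \eqref{eq:open-epigraph-lift} is \(A^\uparrow f=b^\uparrow\), which involves \(f\) alone. Hence no conservation law is imposed on \(y\), which is the final claim.

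\emph{Objective.} Minimizing \(\sum_a y_a\) over the open epigraphs has the same infimum as \(\sum_a h_a(f_a)\), restricted to flows with \(f_a\in(0,\Lambda)\). For fixed feasible \(f\), each \(y_a\) can be taken arbitrarily close to \(h_a(f_a)\) from above, and every admissible \(y_a\) exceeds it. This is exactly the correspondence that Theorem~\ref{thm:chen-convex-flow} uses, so no additional argument is needed beyond citing the epigraph orientation described before that theorem and in Appendix~\ref{app:chen-black-box}.

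\emph{Expected obstacle.} The only delicate point is that the open domains exclude the boundary \(f_a=0\). Zero flow on an arc is allowed in \eqref{eq:chen-lifted-flow} but not in \eqref{eq:open-epigraph-lift}. This does not affect the compatibility claim itself, since the lemma asserts a formal identification of instances, not equality of optimal values. I would remark that equality of optimal values follows from the interiorizing circulation of Lemma~\ref{lem:lift-interiorization} together with continuity of the costs on \([0,\Lambda]\). That argument is deferred to that lemma.
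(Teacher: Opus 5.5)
Your proposal is correct and follows essentially the same route as the paper. Both arguments identify the open epigraphs \(X_a^{\mathrm{tr}}\) and \(X_e^{\mathrm{quad}}\) from the interior \((0,\Lambda)\) of the finite domain, drive the \(y\)-slack to zero so that \(\sum_a y_a\) represents the separable cost, and convert the incidence sign convention via \(B_{\rm Ch}^{\top}=-A^\uparrow\), \(d_{\rm Ch}=-b^\uparrow\). Your remark that boundary flows \(f_a=0\) are excluded from the open domains, so that equality of optimal values must be deferred to Lemma~\ref{lem:lift-interiorization}, is slightly more careful than the paper's own proof of this lemma, which asserts the value equality using only the \(y\)-slack argument.
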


\begin{proof}
	For a transport arc, the interior of the finite domain of
	\(h_a^{\mathrm{tr}}\) is \(0<f<\Lambda\), and the inequality
	\(y>h_a^{\mathrm{tr}}(f)\) is simply \(y>0\).  Hence the open epigraph is
	\(X_a^{\mathrm{tr}}\).  For a quadratic arc \(a=(e^-,e^+)\), the interior of
the finite domain of \(h_e^{\mathrm{quad}}\) is again \(0<f<\Lambda\), and
\(y>h_e^{\mathrm{quad}}(f)\) is precisely
	\[
	y>\frac{f^2}{2w_e},
	\]
	so the open epigraph is \(X_e^{\mathrm{quad}}\).

	The objective of the epigraph formulation is \(\sum_a y_a\).  Taking the
	infimum over strict epigraph variables therefore gives the same value as the
	closed separable-cost problem, because every finite-cost point can be
	approached from the strict epigraph by decreasing the slack to zero.  The
	only incidence conversion is the sign convention
	\(B_{\rm Ch}^{\top}=-A^\uparrow\), which turns
	\(B_{\rm Ch}^{\top}f=d_{\rm Ch}\) into \(A^\uparrow f=b^\uparrow\).
\end{proof}

The next two lemmas show that the cap is harmless and provide the interior
point used in Chen et al.'s assumptions.  Proposition~\ref{prop:convex-flow-assumptions}
then checks the six items of Assumption~8.2 one by one, using
Lemmas~\ref{lem:chen-epigraph-compatibility} and
\ref{lem:lifted-arc-barriers}.

\begin{lemma}[A strictly positive circulation in the lift]
\label{lem:positive-circulation}
Define $c\in\R^{A^\uparrow}$ by
\[
c_a:=1
\qquad\text{for all }a\in A^\uparrow_{\mathrm{tr}},
\]
and
\[
c_{(e^-,e^+)}:=|e|
\qquad (\forall e\in E).
\]
Then $c_a>0$ for every arc $a\in A^\uparrow$, and
\[
A^\uparrow c=0.
\]
\end{lemma}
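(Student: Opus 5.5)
Positivity is immediate: every transport arc carries $c_a=1>0$, and every quadratic arc carries $c_{(e^-,e^+)}=|e|\ge 1>0$, since hyperedges are nonempty. The substantive claim is the circulation identity $A^\uparrow c=0$. We verify it node by node, using the convention $(A^\uparrow f)_u=\sum_{a\in\delta^-(u)}f_a-\sum_{a\in\delta^+(u)}f_a$. The node set $V^\uparrow$ splits into the three classes $V$, $\{e^+\}$ and $\{e^-\}$, so we treat each class in turn.

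\emph{Auxiliary nodes.}
\begin{itemize}
\item At $e^+$, the only incoming arc is the quadratic arc $(e^-,e^+)$, which carries $|e|$. The outgoing arcs are the $|e|$ transport arcs $(e^+,v)$ with $v\in e$, each carrying $1$. So inflow minus outflow is $|e|-|e|=0$.
\item Symmetrically, $e^-$ has $|e|$ incoming arcs $(v,e^-)$ with $v\in e$, each carrying $1$, and a single outgoing arc $(e^-,e^+)$ carrying $|e|$. Its balance is also $0$.
\end{itemize}

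\emph{Original vertices.} For $v\in V$, the incident arcs are exactly the pairs $(e^+,v)$ and $(v,e^-)$, one of each for every hyperedge $e\ni v$; no quadratic arc touches $v$. Each such hyperedge therefore contributes $+1$ inflow and $-1$ outflow. Summing over $e\ni v$ gives a net balance of $0$.

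All three node classes are balanced, so $A^\uparrow c=0$. Geometrically, $c$ is the sum, over hyperedges $e$ and vertices $v\in e$, of the directed $3$-cycles $e^+\to v\to e^-\to e^+$, each carrying unit flow; it is therefore a circulation. There is no real obstacle; the only care needed is the arc bookkeeping, namely that each transport arc belongs to exactly one hyperedge gadget and that quadratic arcs are not incident to $V$.
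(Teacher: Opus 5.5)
Your proof is correct and follows the paper's own argument: positivity from $|e|\ge 1$, then a node-by-node balance check at each $v\in V$, at $e^+$, and at $e^-$. Your closing remark, that $c$ is the sum of the unit $3$-cycles $e^+\to v\to e^-\to e^+$ over all incidences $v\in e$, is a valid one-line alternative justification, since each such cycle is trivially a circulation.
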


\begin{proof}
For an original vertex $v\in V$, there is one incoming transport arc $(e^+,v)$ and one outgoing transport arc $(v,e^-)$ for each incident hyperedge $e\ni v$, both carrying value $1$, so $(A^\uparrow c)_v=0$. For an edge node $e^+$, the incoming quadratic arc $(e^-,e^+)$ carries $|e|$, while the outgoing transport arcs $(e^+,v)$ with $v\in e$ contribute total $|e|$, so $(A^\uparrow c)_{e^+}=0$. Similarly, the incoming transport arcs $(v,e^-)$ with $v\in e$ contribute total $|e|$, which matches the outgoing quadratic arc $(e^-,e^+)$, so $(A^\uparrow c)_{e^-}=0$. Hence $A^\uparrow c=0$.
\end{proof}

\begin{lemma}[The polynomial cap and open epigraph do not change the value]
\label{lem:lift-interiorization}
Assume that $H$ is connected, $\1^\top s=0$, and
\[
\norm{s}_\infty\le P^{K_0},
\qquad
P^{-K_0}\le w_e\le P^{K_0}\qquad(\forall e\in E).
\]
If $K_{\rm cap}$ is chosen sufficiently large as a function of $K_0$, then the capped problem \eqref{eq:chen-lifted-flow}, the uncapped lifted problem \eqref{eq:lifted-graph-flow}, and the open epigraph problem \eqref{eq:open-epigraph-lift} all have the same infimum. Moreover, the capped epigraph problem has a strictly feasible point whose coordinates are polynomially bounded in $P$.
\end{lemma}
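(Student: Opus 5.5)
The plan is to show that some optimal (or every near-optimal) uncapped lifted flow has all arc values bounded by a fixed polynomial in $P$. Then we pick $K_{\rm cap}$ so that $\Lambda$ exceeds that bound by a margin, and use the circulation $c$ of Lemma~\ref{lem:positive-circulation} to push the flow strictly into the interior.

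\textbf{Step 1: a polynomially bounded optimal flow.} By Theorem~\ref{thm:poisson-fenchel-dual} and Proposition~\ref{prop:lifted-flow-equivalence}, the uncapped lifted problem \eqref{eq:lifted-graph-flow} attains its optimum, with value $\mathcal D^\star$. We bound $\mathcal D^\star$ by exhibiting an explicit feasible $\eta$. Decompose $s=\sum_v s_v(\ee_v-\ee_{v_0})$ for a fixed root $v_0$, and route each difference along a path as in Lemma~\ref{lem:B-range}. Each edge then receives $\norm{\eta_e}_1\le 2\sum_v|s_v|\le 2|V|P^{K_0}$, so
\[
\mathcal D^\star\le |E|\,(|V|P^{K_0})^2/(2P^{-K_0})\le P^{O(K_0)}.
\]
For an optimal flow, $\mu_e^2/(2w_e)\le\mathcal D^\star$ gives $\mu_e\le\sqrt{2w_e\mathcal D^\star}\le P^{O(K_0)}$. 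The remaining flow values satisfy $p_{ev},n_{ev}\le\mu_e$ by the gadget constraints $\sum_{v\in e}p_{ev}=\mu_e$ and $\sum_{v\in e}n_{ev}=\mu_e$. Hence every arc of the optimal flow $f^\star$ is at most some $M=P^{K_1}$ with $K_1=O(K_0)$.

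\textbf{Step 2: equality of infima.} The capped problem \eqref{eq:chen-lifted-flow} is a restriction of the uncapped one with the same objective on its domain, so its infimum is at least $\mathcal D^\star$. Choosing $\Lambda>M$ makes $f^\star$ feasible for the capped problem, so the two values coincide. For the open-epigraph problem \eqref{eq:open-epigraph-lift}, the issue is that $f^\star$ may have zero arcs, so it does not lie in the open domain. We use $f_t:=f^\star+t c$ for $t\in(0,1]$. Since $A^\uparrow c=0$, this preserves the demands. It is strictly positive, and all its arcs are at most $M+P<\Lambda$ once $K_{\rm cap}>K_1+1$. Take $y_a=t$ on transport arcs and $y_e=(f_t)_e^2/(2w_e)+t$ on quadratic arcs. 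The objective is then continuous in $t$ and tends to $\mathcal D^\star$ as $t\to 0$. The open epigraph infimum can never be below the closed value, so the three infima agree. This argument is the one Lemma~\ref{lem:chen-epigraph-compatibility} sketches, made concrete.

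\textbf{Step 3: a polynomially bounded strictly feasible point.} Set $t=1$ in the construction above. Then $f^{(0)}=f^\star+c$ has every coordinate in $[1,M+P]$, so it stays at distance at least $1$ from $0$ and at least $\Lambda-M-P\ge\Lambda/2$ from $\Lambda$. The epigraph coordinate $y^{(0)}$ is at most $(M+P)^2P^{K_0}+1$, which is polynomial. To avoid needing an exact optimizer, $f^\star$ can be replaced by any feasible flow obeying the Step~1 bound, for example the explicit path-routed $\eta$ lifted via Proposition~\ref{prop:lifted-flow-equivalence}(ii). This flow is directly constructible and polynomially bounded, which makes the strictly feasible point explicit.

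The main obstacle is Step~1: getting a clean polynomial bound on $\mathcal D^\star$ and hence on all arc flows, uniformly in the exponent $K_0$. The remaining steps are bookkeeping with the circulation $c$.
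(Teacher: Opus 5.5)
Your proof is correct and follows essentially the same route as the paper. A path-routed, polynomially bounded feasible flow bounds the (near-)optimal quadratic masses, and hence all arc flows via gadget conservation; the positive circulation $c$ from Lemma~\ref{lem:positive-circulation} then supplies the strictly interior point. The differences are cosmetic: you use attainment of the optimum and the additive perturbation $f^\star+tc$, whereas the paper works with $\delta$-optimal flows and convex combinations with $f^{\rm path}+c$.
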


\begin{proof}
We first construct a polynomially bounded feasible flow for the uncapped closed lifted problem. Because $H$ is connected, for any two original vertices $u,v\in V$ there is a path in the lifted directed graph from $u$ to $v$: if $u$ and $v$ lie in a common hyperedge $e$, then
\[
u\to e^-\to e^+\to v
\]
is a directed path, and these $3$-arc gadgets concatenate along a hypergraph path. Fix a root vertex $r\in V$. For every $v\in V$, choose a directed path $Q_v^{\rm in}$ from $v$ to $r$ and a directed path $Q_v^{\rm out}$ from $r$ to $v$. If $s_v<0$, send $-s_v$ units along $Q_v^{\rm in}$; if $s_v>0$, send $s_v$ units along $Q_v^{\rm out}$. Since $\1^\top s=0$, the imbalance at $r$ cancels. The resulting flow $f^{\rm path}$ satisfies
\[
A^\uparrow f^{\rm path}=b^\uparrow,
\qquad
f_a^{\rm path}\ge 0,
\qquad
\|f^{\rm path}\|_\infty\le \|s\|_1\le P^{K_0+1}.
\]
Its objective value is at most $P^{O_{K_0}(1)}$.

Now take any $\delta$-optimal feasible flow $f$ for the uncapped lifted problem. Since the objective is nonnegative and $f^{\rm path}$ is feasible, we may assume
\[
\mathcal H^\uparrow(f)\le P^{O_{K_0}(1)}.
\]
For a quadratic arc corresponding to $e$, this implies
\[
f_{(e^-,e^+)}\le \sqrt{2w_e\mathcal H^\uparrow(f)}\le P^{O_{K_0}(1)}.
\]
At the nodes $e^+$ and $e^-$, conservation gives
\[
\sum_{v\in e} f_{(e^+,v)}=f_{(e^-,e^+)},
\qquad
\sum_{v\in e} f_{(v,e^-)}=f_{(e^-,e^+)},
\]
so every transport arc flow is also at most $P^{O_{K_0}(1)}$. Choosing $K_{\rm cap}$ larger than the exponent in this bound shows that every $\delta$-optimal solution of the uncapped problem can be taken inside the cap $0\le f_a\le\Lambda$. Letting $\delta\downarrow0$, the capped and uncapped closed problems have the same infimum.

It remains to compare the capped closed problem to the open epigraph formulation. Let $c$ be the positive circulation from Lemma~\ref{lem:positive-circulation}. For the above choice of $K_{\rm cap}$,
\begin{equation}
\label{eq:lift-strict-start-flow}
f^{\rm int}:=f^{\rm path}+c
\end{equation}
By increasing $K_{\rm cap}$ if necessary, we may also ensure
\[
\Lambda-f_a^{\rm int}\ge 1
\qquad(\forall a\in A^\uparrow).
\]
Thus $A^\uparrow f^{\rm int}=b^\uparrow$ and all denominators
$f_a^{\rm int}$ and $\Lambda-f_a^{\rm int}$ are bounded below by
$P^{-O_{K_0}(1)}$; in particular $0<f_a^{\rm int}<\Lambda$ for every arc.
Choose
\begin{equation}
\label{eq:lift-strict-start-epi}
y_a^{\rm int}:=1 \quad(a\in A^\uparrow_{\mathrm{tr}}),
\qquad
 y_{(e^-,e^+)}^{\rm int}:=\frac{(f^{\rm int}_{(e^-,e^+)})^2}{2w_e}+1.
\end{equation}
Then $(f^{\rm int},y^{\rm int})$ is strictly feasible for \eqref{eq:open-epigraph-lift}, and all its coordinates are $P^{O_{K_0}(1)}$.

For any feasible $f$ of the capped closed problem and any $\theta\in(0,1)$, the convex combination
\[
f^{(\theta)}:=(1-\theta)f+\theta f^{\rm int}
\]
satisfies $A^\uparrow f^{(\theta)}=b^\uparrow$ and $0<f^{(\theta)}_a<\Lambda$ for all arcs. By choosing epigraph variables $y^{(\theta)}$ with arbitrarily small positive slack above the arc costs, one obtains feasible points of \eqref{eq:open-epigraph-lift} whose objectives converge to the capped closed objective of $f$ as $\theta\downarrow0$. This proves that the open epigraph infimum is at most the capped closed optimum. The reverse inequality is immediate because every open-epigraph feasible point has $0<f_a<\Lambda$ and $y_a>h_a(f_a)$, hence its objective is at least the capped closed objective of the same flow. Thus the two infima are equal.
\end{proof}

\begin{proposition}[The lifted instance satisfies Chen et al.'s Assumption~8.2]
\label{prop:convex-flow-assumptions}
Assume that $H$ is connected, that $\1^\top s=0$, and that for some constant $K_0\ge 1$,
\[
\norm{s}_\infty\le P^{K_0},
\qquad
P^{-K_0}\le w_e\le P^{K_0}\qquad (\forall e\in E).
\]
Choose the cap $\Lambda=P^{K_{\rm cap}}$ as in
Lemma~\ref{lem:lift-interiorization}. Then the following hold:
\[
P\le m^\uparrow=2P+|E|\le 3P.
\]
By Lemma~\ref{lem:chen-epigraph-compatibility},
\eqref{eq:open-epigraph-lift} is a direct instance of
Theorem~\ref{thm:chen-convex-flow}. By Lemma~\ref{lem:lifted-arc-barriers},
the uniform barrier parameter is $\nu=6$. The capped extended-real lifted
instance \eqref{eq:chen-lifted-flow}, together with the open epigraph domains
and barriers above, satisfies all six items of Assumption~8.2 of Chen et al.
Therefore Theorem~\ref{thm:chen-convex-flow} applies to the lifted instance.
\end{proposition}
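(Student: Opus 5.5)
The proof checks the six items of Theorem~\ref{thm:chen-convex-flow} one at a time, with $m=m^\uparrow$ and a suitable $K=O_{K_0}(1)\cdot\log$-free constant exponent (in fact a constant exponent suffices, which is $\log^{O(1)}m$). The arc count is direct: $m^\uparrow=2P+|E|$, and $|E|\le P$ since every hyperedge is nonempty. Lemma~\ref{lem:chen-epigraph-compatibility} identifies \eqref{eq:open-epigraph-lift} as an instance of the black box, and Lemma~\ref{lem:lifted-arc-barriers} gives $\nu=6$. Since $P\le m\le 3P$, polynomial bounds in $P$ and in $m$ are interchangeable.

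\textbf{Items (1), (2), (5).}
\begin{itemize}
\item Item~(1) is immediate. Each barrier is an explicit sum of at most four logarithms of affine or quadratic functions in two variables, so its gradient and $2\times2$ Hessian are closed-form rational expressions computable in $O(1)$ arithmetic operations.
\item Item~(2) follows from the cap: finite domains lie in $[0,\Lambda]$ with $\Lambda=P^{K_{\rm cap}}$. Demands satisfy $\|b^\uparrow\|_\infty=\|s\|_\infty\le P^{K_0}$. On the domain, costs satisfy $0\le h_a(f)\le f^2/(2w_e)\le P^{K_0}f^2$, which is $O(m^K+|f|^K)$ for $K\ge K_0+2$.
\item Item~(5) concerns only Chen et al.'s internal parameters. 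It holds by choosing them below $1/6000$, which is allowed because $\nu$ is a uniform constant.
\end{itemize}

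\textbf{Item (3).} Take the infimum of each barrier over the box $|f|,|y|\le m^K$ and subtract it. On this box each logarithm's argument is at most $\max(\Lambda,m^K)^{O(1)}$. For example, $y-f^2/(2w_e)\le m^K$. So every term $-\log(\cdot)$ is bounded below by $-O(K\log m)$ and the infimum is finite. It is also attained in the closure, or approached, by a computable-in-principle constant. The shift only has to be known up to an additive $O(1)$ for item~(4). If an exact shift is required, it is worth noting that the transport barrier separates, giving infimum $-\log(\Lambda^2/4)-\log m^K$ when $\Lambda\le 2m^K$. For the quadratic barrier the $f$-dependence can be minimized in closed form.

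\textbf{Item (4).} Use the strictly feasible point $(f^{\rm int},y^{\rm int})$ of Lemma~\ref{lem:lift-interiorization}. There $f^{\rm int}_a\ge c_a\ge 1$, $\Lambda-f^{\rm int}_a\ge1$, and $f^{\rm int}_a\le P^{O(1)}$. Also $y^{\rm int}_a\ge1$ and the quadratic slack $y-f^2/(2w_e)$ equals $1$. Every argument of every logarithm therefore lies in $[1,P^{O(1)}]$. Each Hessian is a sum of terms of the form $g^{-1}\nabla^2(-g)+g^{-2}\nabla g\nabla g^\top$. The upper bound $\nabla^2\psi\preceq m^KI$ then follows from bounding derivatives of $g$ polynomially; $1/w_e\le P^{K_0}$ enters here.

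For the lower bound $\nabla^2\psi\succeq m^{-K}I$, I would use the diagonal pieces. The terms $-\log f-\log(\Lambda-f)$ contribute at least $\Lambda^{-2}$ in the $f$ direction, and $-\log y$ contributes $y^{-2}\ge P^{-O(1)}$ in the $y$ direction. All other summands are PSD, so the sum is $\succeq P^{-O(1)}I$.

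The value condition $\psi_a\le K$ is where I expect the main friction. After the item~(3) shift, $\psi_a(f^{\rm int},y^{\rm int})$ is at most $O(\log m)$ rather than $O(1)$, since each log term is between $-O(\log m)$ and $0$. This is acceptable because $K=\log^{O(1)}m$ is permitted, so I would take $K$ of order $\log m$ times a constant depending on $K_0$ and $K_{\rm cap}$. All polynomial bounds $P^{O(1)}\le m^{K}$ then hold a fortiori.

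\textbf{Item (6).} Fix a point in the box with $\psi_a\le\widetilde O(1)$. Because every log term is bounded above by $O(K\log m)$ on the box, the sublevel condition forces each log argument to be at least $\exp(-\widetilde O(1)-O(K\log m))$. This lower bound is $\exp(-\log^{O(1)}m)$. The inverse-square Hessian terms are then at most $\exp(\log^{O(1)}m)$, and the gradient factors are polynomially bounded, which gives the required Hessian bound.

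The main obstacle is bookkeeping the normalization of item~(3) consistently with items~(4) and~(6). That is resolved by taking $K$ polylogarithmic instead of constant.
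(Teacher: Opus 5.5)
Your proposal is essentially the paper's proof: the same item-by-item check, the same normalization by subtracting the infimum over the box, the same starting point $(f^{\rm int},y^{\rm int})$ from Lemma~\ref{lem:lift-interiorization}, and the same sublevel-set argument for item~(6); the only variation is that you get the Hessian lower bound in item~(4) from the PSD diagonal part $\operatorname{diag}\bigl(f^{-2}+(\Lambda-f)^{-2},\,y^{-2}\bigr)$, which is simpler than, and holds pointwise in contrast to, the paper's determinant-and-trace estimate. One caveat applies equally to you and to the paper: the item~(3) shift contains $-K\log m$ from the $-\log y$ term with $y\le m^K$, so the normalized value at $y^{(0)}=1$ is at least $K\log m$, not $O(\log m)$, and ``taking $K$ polylogarithmic'' works only if the start-value bound in item~(4) is read as a separate $\log^{O(1)}m$ bound rather than the same $K$ as the box exponent.
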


\begin{proof}
		Let $M:=m^\uparrow$. Since $P\le M\le3P$, polynomial and quasipolynomial bounds in $P$ and in $M$ are interchangeable. The verification below checks Chen et al.'s Assumption~8.2 for the lifted instance. The translation from Chen et al.'s theorem to the epigraph form used here, including notation and finite-precision conventions, is given in Appendix~\ref{app:chen-black-box}. We verify the six items in the order used in Theorem~\ref{thm:chen-convex-flow}.
		Lemma~\ref{lem:chen-epigraph-compatibility} identifies the exact Chen et al.
		instance, and Lemma~\ref{lem:lifted-arc-barriers} gives the uniform barrier
		parameter \(\nu=6\).  It remains only to verify the six quantitative items of
		Assumption~8.2.

	\begin{center}
	\renewcommand{\arraystretch}{1.15}
	\begin{tabular}{|p{0.20\linewidth}|p{0.49\linewidth}|p{0.23\linewidth}|}
	\hline
	Chen et al.\ item & Verification in the lifted instance & Reference \\
	\hline
	(1) Oracle access & For transport arcs we use the barrier \eqref{eq:transport-barrier}; for quadratic arcs we use \eqref{eq:quadratic-barrier} with $g:=y-f^2/(2w_e)$. The gradients and Hessians are the constant-dimensional formulas in \eqref{eq:transport-oracles} and \eqref{eq:quadratic-oracles}, evaluable to $\log^{O(1)}M$ bits of precision on the relevant level sets. & \eqref{eq:transport-barrier}, \eqref{eq:quadratic-barrier}, \eqref{eq:transport-oracles}, \eqref{eq:quadratic-oracles} \\
	\hline
	(2) Polynomially bounded capacities, demands, and costs & The capacity range is $|f_a|\le \Lambda=P^{K_{\rm cap}}$. Chen et al.'s demand vector is $d_{\rm Ch}=-b^\uparrow$, so $\|d_{\rm Ch}\|_\infty=\|s\|_\infty\le P^{K_0}$. Transport costs are identically zero, and quadratic costs satisfy $f^2/(2w_e)\le \frac12 P^{K_0}\Lambda^2$ on the finite domain. & \eqref{eq:lift-cap}, definition of $d_{\rm Ch}$ in \eqref{eq:open-epigraph-lift}, item~(2) below \\
	\hline
	(3) Barrier normalization & For each arc we subtract the finite constant $\gamma_a:=\inf\{\psi_a(f,y):(f,y)\in X_a,\ |f|,|y|\le M^K\}$. The shifted barrier $\widehat\psi_a=\psi_a-\gamma_a$ has the required infimum zero, with $|\gamma_a|\le\log^{O(1)}M$, while self-concordance, gradients, and Hessians are unchanged. & item~(3) below \\
	\hline
	(4) Strict feasible start and Hessian conditioning & Lemma~\ref{lem:positive-circulation} gives a positive circulation $c$. Lemma~\ref{lem:lift-interiorization} combines it with the path flow $f^{\rm path}$ to obtain the strictly feasible start \eqref{eq:lift-strict-start-flow}--\eqref{eq:lift-strict-start-epi}. At that point all denominators $f$, $\Lambda-f$, $y$, and $g$ are bounded below by $M^{-O_{K_0}(1)}$, and the explicit Hessian formulas yield $M^{-O_{K_0}(1)}I\preceq \nabla^2\psi_a(f^{(0)}_a,y^{(0)}_a)\preceq M^{O_{K_0}(1)}I$. After normalization, \(\widehat\psi_a(f_a^{(0)},y_a^{(0)})\le\log^{O(1)}M\). & Lemma~\ref{lem:positive-circulation}, Lemma~\ref{lem:lift-interiorization}, \eqref{eq:lift-strict-start-flow}, \eqref{eq:lift-strict-start-epi}, \eqref{eq:transport-oracles}, \eqref{eq:quadratic-oracles} \\
	\hline
	(5) Algorithmic parameters & The uniform barrier parameter is $\nu=6$, and we choose $\alpha,\varepsilon,\kappa<1/(1000\nu)=1/6000$. & Lemma~\ref{lem:lifted-arc-barriers}, item~(5) below \\
	\hline
	(6) Hessian bound on bounded barrier level sets & On the polynomial box, the inequalities $\widehat\psi_a\le \widetilde O(1)$ force every logarithmic denominator to be at least $\exp(-\log^{O(1)}M)$. Substituting these lower bounds into \eqref{eq:transport-oracles} and \eqref{eq:quadratic-oracles} gives $\nabla^2\psi_a\preceq \exp(\log^{O(1)}M)I$. & item~(6) below, \eqref{eq:transport-oracles}, \eqref{eq:quadratic-oracles} \\
	\hline
	\end{tabular}
	\end{center}

	\smallskip
	\noindent
		\textbf{1. Oracle access.}
		For transport arcs,
		\begin{equation}
		\label{eq:transport-oracles}
		\nabla\psi_a^{\mathrm{tr}}(f,y)
		=
		\begin{pmatrix}
		-f^{-1}+(\Lambda-f)^{-1}\\
		-y^{-1}
		\end{pmatrix},
		\qquad
		\nabla^2\psi_a^{\mathrm{tr}}(f,y)
		=
		\begin{pmatrix}
		f^{-2}+(\Lambda-f)^{-2} & 0\\
		0 & y^{-2}
	\end{pmatrix}.
	\end{equation}
	For quadratic arcs, abbreviate
	\[
	g:=y-\frac{f^2}{2w_e}.
	\]
	Then
	\begin{equation}
	\label{eq:quadratic-oracles}
	\begin{aligned}
		\nabla\psi_e^{\mathrm{quad}}(f,y)
		&=
		\begin{pmatrix}
		-f^{-1}+(\Lambda-f)^{-1}+\dfrac{f}{w_eg}\\[0.4em]
		-2y^{-1}-g^{-1}
		\end{pmatrix},
		\\
		\nabla^2\psi_e^{\mathrm{quad}}(f,y)
		&=
		\begin{pmatrix}
		f^{-2}+(\Lambda-f)^{-2}+(w_eg)^{-1}+f^2/(w_e^2g^2) & -f/(w_eg^2)\\[0.4em]
		-f/(w_eg^2) & 2y^{-2}+g^{-2}
	\end{pmatrix}.
	\end{aligned}
	\end{equation}
	These are constant-dimensional rational formulas, so the required gradient and Hessian oracle calls take $\widetilde O(1)$ time. On the level sets used in item~6 below, all denominators are at least $\exp(-\log^{O(1)}M)$; hence evaluating these formulas with $\log^{O(1)}M$ bits of precision gives the numerical accuracy required by Chen et al.'s oracle model. Additive barrier shifts used in item~3 do not affect these oracle outputs.

	\smallskip
	\noindent
		\textbf{2. Polynomially bounded capacities, demands, and costs.}
	The cap \eqref{eq:lift-cap} gives $|f_a|\le\Lambda=P^{K_{\rm cap}}\le M^{O_{K_0}(1)}$ on the finite domain of every arc cost. In Chen et al.'s notation the demand is $d_{\rm Ch}=-b^\uparrow$, so
	\[
	\|d_{\rm Ch}\|_\infty=\|b^\uparrow\|_\infty=\|s\|_\infty\le P^{K_0}\le M^{O_{K_0}(1)}.
	\]
Transport costs are identically zero on their finite domain. Quadratic costs satisfy
\[
0\le h_e^{\mathrm{quad}}(f)=\frac{f^2}{2w_e}
\le \frac12 P^{K_0}\Lambda^2
\le M^{O_{K_0}(1)}
\]
whenever they are finite. This verifies the polynomial boundedness condition.

	\smallskip
	\noindent
		\textbf{3. Barrier normalization.}
	For each arc, define
	\[
\gamma_a
:=
\inf\{\psi_a(f,y):(f,y)\in X_a,\ |f|,|y|\le M^K\},
\qquad
\widehat\psi_a:=\psi_a-\gamma_a,
\]
Choose $K=\log^{O(1)}M$ large enough that the polynomial box contains the cap
and the starting point from Lemma~\ref{lem:lift-interiorization}.  On this box,
all logarithmic arguments that can appear in the barriers are at most
$M^{\log^{O(1)}M}$, while the starting point has all logarithmic denominators
bounded below by $M^{-O_{K_0}(1)}$.  Therefore
\[
|\gamma_a|\le \log^{O(1)}M .
\]
The shift from $\psi_a$ to $\widehat\psi_a$ does not change
self-concordance, gradients, Hessians, or oracle complexity, and gives exactly
the normalization required by item~(3) of Assumption~8.2.

	\smallskip
	\noindent
		\textbf{4. Strictly feasible start with controlled Hessian.}
	Use the strictly feasible point $(f^{\rm int},y^{\rm int})$ constructed in Lemma~\ref{lem:lift-interiorization}, and set
\[
f^{(0)}:=f^{\rm int},
\qquad
 y^{(0)}:=y^{\rm int}.
	\]
	All coordinates are bounded by $M^{O_{K_0}(1)}$, and $B_{\rm Ch}^{\top}f^{(0)}=-A^\uparrow f^{(0)}=-b^\uparrow=d_{\rm Ch}$.

	At this point, every transport denominator $f^{(0)}$, $\Lambda-f^{(0)}$, and $y^{(0)}$ lies between $M^{-O_{K_0}(1)}$ and $M^{O_{K_0}(1)}$. Substituting these bounds into \eqref{eq:transport-oracles} gives
	\[
	M^{-O_{K_0}(1)}I
	\preceq
	\nabla^2\psi_a^{\mathrm{tr}}(f_a^{(0)},y_a^{(0)})
	\preceq
	M^{O_{K_0}(1)}I
	\]
	for every transport arc.

	For a quadratic arc, we have
	\[
	g_e^{(0)}:=y_e^{(0)}-\frac{(f_{(e^-,e^+)}^{(0)})^2}{2w_e}=1,
	\]
	and again all quantities $f^{(0)}$, $\Lambda-f^{(0)}$, $y^{(0)}$, $g_e^{(0)}$, $w_e$, and $w_e^{-1}$ are polynomially bounded in $M$. The explicit Hessian formula \eqref{eq:quadratic-oracles} gives the upper bound $\nabla^2\psi_e^{\mathrm{quad}}(f^{(0)},y^{(0)})\preceq M^{O_{K_0}(1)}I$. For the lower bound, dropping the nonnegative cap term from the $(1,1)$ entry only decreases the determinant, and the remaining two-dimensional Hessian satisfies
	\[
	\det \nabla^2 \psi_e^{\mathrm{quad}}(f,y)
	\ge
\Bigl(f^{-2}+(w_eg)^{-1}\Bigr)\Bigl(2y^{-2}+g^{-2}\Bigr)
+
\frac{2f^2}{w_e^2g^2y^2}
\ge
\frac{2}{f^2y^2}.
\]
At the starting point, this determinant is at least $M^{-O_{K_0}(1)}$, while the trace is at most $M^{O_{K_0}(1)}$. Thus the smallest eigenvalue is at least $M^{-O_{K_0}(1)}$. Therefore, for all arcs,
	\[
	M^{-O_{K_0}(1)}I
	\preceq
	\nabla^2\psi_a(f_a^{(0)},y_a^{(0)})
	\preceq
	M^{O_{K_0}(1)}I.
	\]
	Finally, because the unshifted logarithmic barriers have value at most $\log^{O(1)}M$ at $(f^{(0)},y^{(0)})$ and the normalizing constants satisfy $|\gamma_a|\le\log^{O(1)}M$ on the polynomial box, the normalized values obey
\[
\widehat\psi_a(f_a^{(0)},y_a^{(0)})\le\log^{O(1)}M.
\]
This is item~(4) with Chen et al.'s parameter $K=\log^{O(1)}M$.

\smallskip
\noindent
	\textbf{5. Small algorithmic parameters.}
The uniform barrier parameter is $\nu=6$. We use Chen et al.'s parameter choices with
\[
\alpha,\varepsilon,\kappa<\frac{1}{1000\nu}=\frac{1}{6000},
\]
which is exactly item~(5).

	\smallskip
	\noindent
		\textbf{6. Quasipolynomial Hessian bounds on bounded barrier level sets.}
	Consider first a transport arc. Suppose $|f|,|y|\le M^K$ and
	$\widehat\psi_a^{\mathrm{tr}}(f,y)\le\widetilde O(1)$. Since
	$|\gamma_a|\le\log^{O(1)}M$, the unshifted barrier value is also at most
	$\log^{O(1)}M$. On the polynomial box, no positive logarithmic argument can
	exceed $M^{\log^{O(1)}M}$. Hence no term $-\log f$, $-\log(\Lambda-f)$, or
	$-\log y$ can be larger than $\log^{O(1)}M$, and therefore
\[
f,\ \Lambda-f,\ y\ge\exp(-\log^{O(1)}M).
\]
Substitution into the transport Hessian gives
	\[
	\nabla^2\psi_a^{\mathrm{tr}}(f,y)
	\preceq
	\exp(\log^{O(1)}M)I.
	\]

	For a quadratic arc, the same argument applied to the four logarithmic terms in
	$\psi_e^{\mathrm{quad}}$ gives
	\[
	f,\ \Lambda-f,\ y,
	\quad
	g:=y-\frac{f^2}{2w_e}
	\ge
	\exp(-\log^{O(1)}M).
	\]
	Together with $|f|,|y|\le M^K$ and
	$w_e,w_e^{-1}\le M^{O_{K_0}(1)}$, the displayed quadratic Hessian formula
	implies
	\[
	\nabla^2\psi_e^{\mathrm{quad}}(f,y)
	\preceq
	\exp(\log^{O(1)}M)I.
	\]
	This is the quasipolynomial Hessian bound in item~(6).

	All six items of Assumption~8.2 have now been matched to explicit formulas, bounds, and constructions for the lifted instance, so Theorem~\ref{thm:chen-convex-flow} applies.
\end{proof}

\subsection{Black-box theorem for the solver's first stage and consequences}
\label{sec:main-theorem}

Having verified the required hypotheses, we can now state the guarantee for the first stage of the solver, which computes a high-accuracy dual flow before primal recovery.

\begin{definition}[Coordinate-oracle representation of the first-stage output]
\label{def:first-stage-coordinate-oracle-representation}
This specializes the output model in Theorem~\ref{thm:chen-convex-flow} to the
lifted graph \(G^\uparrow\). A first-stage coordinate-oracle representation
consists of a finite string together with a query routine. It represents a real
lifted flow $f\in\R^{A^\uparrow}$ if, for every arc $a\in A^\uparrow$ and every
requested absolute precision $\tau$ with $\log(1/\tau)=\log^{O(1)}P$, the query
routine returns a dyadic number $\widetilde f_a$ satisfying
\[
|\widetilde f_a-f_a|\le \tau.
\]
When the underlying flow $f$ defined by the oracle is used to define
\[
\mu_e:=f_{(e^-,e^+)},
\qquad
(\eta_e)_v:=f_{(e^+,v)}-f_{(v,e^-)}
\quad (v\in e),
\]
the same oracle gives coordinate access to the vectors $\mu$ and $\eta$. The feasibility statements $A^\uparrow f=b^\uparrow$ and $B\eta=s$ are statements about the underlying real vectors, not about the raw dyadic outputs returned by finite-precision queries. Whenever the final algorithm needs a finite dual certificate, it explicitly materializes and repairs the queried coordinates as in Lemma~\ref{lem:dual-certificate-repair}.
\end{definition}

The query precisions used below are all of the form $\exp(-\log^{O(1)}P)$, and only $O(P)$ coordinates are queried. These queries are included in the stated $P^{1+o(1)}$ running time. Under the dyadic input model of Definition~\ref{def:dyadic-input}, these oracle evaluations are implemented with $\log^{O(1)}P$-bit dyadic arithmetic.

For later use, write
\[
q(z):=\frac12\sum_{e\in E}\frac{z_e^2}{w_e}.
\]
Call $\mu\in\R_{\ge0}^E$ a feasible edge-mass vector for demand $s$ if it is the $\mu$-component of a feasible triple in \eqref{eq:lifted-flow}; denote the set of all such vectors by $\mathcal M(s)$.

\begin{theorem}[High-accuracy first stage]
\label{thm:black-box-dual-solver}
Assume that $H$ is connected, that $\1^\top s=0$, and that
\[
\norm{s}_\infty\le P^{K_0},
\qquad
P^{-K_0}\le w_e\le P^{K_0}\qquad (\forall e\in E)
\]
for some constant $K_0\ge 1$. Then for every fixed constant $C>0$, there is a randomized algorithm that runs in
\[
P^{1+o(1)}
\]
time and, with high probability, returns a first-stage coordinate-oracle representation of a real feasible lifted flow $f$ on $G^\uparrow$. The quadratic-arc mass vector $\mu$ and induced dual vector $\eta$, accessible through the same oracle, are defined by
\[
\mu_e:=f_{(e^-,e^+)}
\qquad (e\in E),
\]
and
\[
(\eta_e)_v:=f_{(e^+,v)}-f_{(v,e^-)}
\qquad (v\in e),
\]
with $\eta_e$ zero-padded outside $e$. Then the underlying vectors satisfy $A^\uparrow f=b^\uparrow$, $B\eta=s$, $\mu\in\mathcal M(s)$, and
\[
\mathcal D(\eta)
\le
q(\mu)=\mathcal H^\uparrow(f)
\le
\mathcal D^\star+
\exp(-\log^C P).
\]
\end{theorem}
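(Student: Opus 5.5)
The plan is to apply Theorem~\ref{thm:chen-convex-flow} to the capped open-epigraph instance \eqref{eq:open-epigraph-lift} and then translate its guarantee back through the chain of equivalences already established. First, Proposition~\ref{prop:convex-flow-assumptions} certifies all six items of Assumption~8.2 with $G=G^\uparrow$, $m=m^\uparrow\in[P,3P]$, $B_{\rm Ch}^\top=-A^\uparrow$, $d_{\rm Ch}=-b^\uparrow$, and $\nu=6$. Since $P\le m\le 3P$, we run the black box with a slightly larger constant $C'$ so that $\exp(-\log^{C'}m)\le\exp(-\log^C P)$; its running time $m^{1+o(1)}$ is then $P^{1+o(1)}$. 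With high probability it returns a string and query routine representing a real flow $f$ with $A^\uparrow f=b^\uparrow$ whose capped cost $\sum_a h_a(f_a)$ is within $\exp(-\log^C P)$ of the capped optimum. This is exactly the coordinate-oracle representation of Definition~\ref{def:first-stage-coordinate-oracle-representation}.

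Next, we read feasibility off the finiteness of the cost. The capped optimum is finite, since $f^{\rm path}$ is feasible within the cap. Hence $f$ has finite capped cost, which forces $0\le f_a\le\Lambda$ on every arc. So $f$ is feasible for the uncapped problem \eqref{eq:lifted-graph-flow}, and its capped cost equals $\mathcal H^\uparrow(f)=q(\mu)$. By Lemma~\ref{lem:lift-interiorization}, the capped optimum equals the uncapped optimum. By Proposition~\ref{prop:lifted-flow-equivalence}, the uncapped optimum equals $\mathcal D^\star$. Together these give $q(\mu)\le\mathcal D^\star+\exp(-\log^C P)$.

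Finally, setting $p_{ev}=f_{(e^+,v)}$, $n_{ev}=f_{(v,e^-)}$, and $\mu_e=f_{(e^-,e^+)}$, conservation at $v$, $e^+$ and $e^-$ is precisely the constraint system of \eqref{eq:lifted-flow}. Hence $(p,n,\mu)$ is feasible, so $\mu\in\mathcal M(s)$. Part~(i) of Proposition~\ref{prop:lifted-flow-equivalence} then yields $\eta_e\in U_e$, $B\eta=s$, and $\mathcal D(\eta)\le q(\mu)$.

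The main obstacle is the first step. We must make sure the black box's guarantee really concerns the closed separable cost with exact conservation, rather than an epigraph surrogate or a merely approximately feasible flow. We must also check that the oracle precision and query costs fit inside the $P^{1+o(1)}$ budget. I would handle both by deferring to the conventions fixed in Appendix~\ref{app:chen-black-box} and Lemma~\ref{lem:chen-epigraph-compatibility}. That lemma equates the open-epigraph and closed infima, and the theorem statement already asserts exact feasibility of the underlying real flow $f$.
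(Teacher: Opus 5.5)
Your proposal is correct and follows essentially the same route as the paper. It invokes Theorem~\ref{thm:chen-convex-flow} via Proposition~\ref{prop:convex-flow-assumptions}, uses finiteness of the capped cost to get $0\le f_a\le\Lambda$ and $\mathcal H^\uparrow(f)=q(\mu)$, removes the cap with Lemma~\ref{lem:lift-interiorization}, and reads off $\mu\in\mathcal M(s)$, $B\eta=s$, and $\mathcal D(\eta)\le q(\mu)$ from conservation and Proposition~\ref{prop:lifted-flow-equivalence}. The only difference is cosmetic: your enlarged constant $C'$ is unnecessary, since $m^\uparrow\ge P$ already gives $\exp(-\log^C m^\uparrow)\le\exp(-\log^C P)$.
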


\begin{proof}
By Lemmas~\ref{lem:chen-epigraph-compatibility} and
\ref{lem:lifted-arc-barriers}, and by
Proposition~\ref{prop:convex-flow-assumptions}, the capped lifted instance
satisfies the hypotheses of Theorem~\ref{thm:chen-convex-flow} with
$G=G^\uparrow$, $m=m^\uparrow$,
$B_{\rm Ch}^{\top}=-A^\uparrow$, and $d_{\rm Ch}=-b^\uparrow$. Applying
Theorem~\ref{thm:chen-convex-flow} in the finite-precision oracle model of Chen
et al., we obtain, with high probability, a finite string and query routine.
Together they form a first-stage coordinate-oracle representation in the sense
of Definition~\ref{def:first-stage-coordinate-oracle-representation} for an
underlying real flow $f$ on the lifted graph. The raw dyadic answers returned by
coordinate queries need not satisfy the lifted demand equation exactly; the
feasibility statement is about the underlying real flow. The call, including
the coordinate-query budget used below, takes time
\[
(m^\uparrow)^{1+o(1)}=P^{1+o(1)}.
\]
The underlying real flow satisfies
\[
A^\uparrow f=b^\uparrow
\]
and
\[
\sum_{a\in A^\uparrow_{\mathrm{tr}}} h_a^{\mathrm{tr}}(f_a)
+
\sum_{e\in E} h_e^{\mathrm{quad}}(f_{(e^-,e^+)})
\le
\min_{A^\uparrow f'=b^\uparrow}
\left\{
\sum_{a\in A^\uparrow_{\mathrm{tr}}} h_a^{\mathrm{tr}}(f'_a)
+
\sum_{e\in E} h_e^{\mathrm{quad}}(f'_{(e^-,e^+)})
\right\}
+
\exp(-\log^C m^\uparrow).
\]
All remaining equalities and inequalities in the proof are statements about this underlying real flow and its induced vectors, not about the raw dyadic outputs returned by finite-precision queries.
Since $m^\uparrow\ge P$, the additive term is at most $\exp(-\log^C P)$. Because the left-hand side is finite, every arc flow of $f$ satisfies the cap and is nonnegative, and by the definition of the arc costs the left-hand side equals $\mathcal H^\uparrow(f)$. By Lemma~\ref{lem:lift-interiorization}, the cap does not change the optimum value of the original lifted problem \eqref{eq:lifted-graph-flow}. Therefore
\[
\mathcal H^\uparrow(f)
\le
\min_{A^\uparrow f'=b^\uparrow,\;f'\ge0}\mathcal H^\uparrow(f')
+
\exp(-\log^C P).
\]
Define
\[
\mu_e:=f_{(e^-,e^+)}
\qquad (e\in E).
\]
Because the transport arcs have zero cost,
\[
q(\mu)=\sum_{e\in E}\frac{\mu_e^2}{2w_e}=\mathcal H^\uparrow(f).
\]
Writing
\[
p_{ev}:=f_{(e^+,v)},
\qquad
n_{ev}:=f_{(v,e^-)},
\]
the conservation constraints $A^\uparrow f=b^\uparrow$ give
\[
\sum_{e\ni v}(p_{ev}-n_{ev})=s_v,
\qquad
\sum_{v\in e}p_{ev}=\mu_e,
\qquad
\sum_{v\in e}n_{ev}=\mu_e.
\]
These are exactly the defining constraints of $\mathcal M(s)$, so $\mu\in\mathcal M(s)$. Map this underlying flow back to a hypergraph dual point by
\[
(\eta_e)_v:=f_{(e^+,v)}-f_{(v,e^-)}
\qquad (v\in e),
\]
and zero-pad outside $e$. By Proposition~\ref{prop:lifted-flow-equivalence}, this $\eta$ is feasible for \eqref{eq:dual-problem} and satisfies
\[
\mathcal D(\eta)\le \mathcal H^\uparrow(f)=q(\mu).
\]
Taking minima on the lifted side and using the value equivalence from Proposition~\ref{prop:lifted-flow-equivalence} yields
\[
q(\mu)=\mathcal H^\uparrow(f)\le \mathcal D^\star+\exp(-\log^C P).
\]
The probability of success is the high-probability guarantee inherited from Theorem~\ref{thm:chen-convex-flow}.
\end{proof}

\begin{corollary}[Additive estimate of the primal optimum value]
\label{cor:primal-value}
Under the assumptions of Theorem~\ref{thm:black-box-dual-solver}, on the same high-probability event, the quantity
\[
-q(\mu)
\]
associated with the returned mass vector satisfies
\[
\mathrm{OPT}-\exp(-\log^C P)
\le
-q(\mu)
\le
\mathrm{OPT}.
\]
\end{corollary}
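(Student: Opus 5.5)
The plan is to combine the two-sided bound from Theorem~\ref{thm:black-box-dual-solver} with strong duality from Theorem~\ref{thm:poisson-fenchel-dual}. We work on the high-probability event of Theorem~\ref{thm:black-box-dual-solver}. On that event the underlying mass vector satisfies $\mu\in\mathcal M(s)$ and
\[
q(\mu)\le \mathcal D^\star+\exp(-\log^C P).
\]
Negating and substituting $\mathcal D^\star=-\mathrm{OPT}$ gives
\[
-q(\mu)\ge \mathrm{OPT}-\exp(-\log^C P),
\]
which is the left inequality.

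For the right inequality we need $q(\mu)\ge \mathcal D^\star$. There are two equivalent routes.

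\begin{enumerate}[label=(\alph*),leftmargin=2em]
\item Theorem~\ref{thm:black-box-dual-solver} gives $\mathcal D(\eta)\le q(\mu)$ for the induced $\eta$. By Proposition~\ref{prop:lifted-flow-equivalence}(i), this $\eta$ is feasible for \eqref{eq:dual-problem}. Hence $\mathcal D^\star\le \mathcal D(\eta)\le q(\mu)$.
\item Directly: $\mu$ is the mass vector of a feasible lifted flow $f$, so $q(\mu)=\mathcal H^\uparrow(f)$ is at least the lifted optimum. By Proposition~\ref{prop:lifted-flow-equivalence}, that optimum equals $\mathcal D^\star$.
\end{enumerate}

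Either way, $-q(\mu)\le-\mathcal D^\star=\mathrm{OPT}$.

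No real obstacle remains. The only point to note is that all quantities are the underlying real vectors, not the rounded oracle outputs. The statement is about exactly those real vectors, so no finite-precision argument is needed.
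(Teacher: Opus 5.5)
Your proposal is correct and matches the paper's proof: you use strong duality $\mathrm{OPT}=-\mathcal D^\star$ and the sandwich $\mathcal D^\star\le\mathcal D(\eta)\le q(\mu)\le\mathcal D^\star+\exp(-\log^C P)$, which follows from feasibility of $\eta$ and Theorem~\ref{thm:black-box-dual-solver}, and then negate. Your alternative route (b), via the lifted-optimum equality of Proposition~\ref{prop:lifted-flow-equivalence}, is an equally valid way to get the lower bound $q(\mu)\ge\mathcal D^\star$.
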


\begin{proof}
By Theorem~\ref{thm:poisson-fenchel-dual},
\[
\mathrm{OPT}=-\mathcal D^\star.
\]
On the success event of Theorem~\ref{thm:black-box-dual-solver}, the induced vector $\eta$ is feasible for the dual and satisfies
\[
\mathcal D^\star\le \mathcal D(\eta)\le q(\mu).
\]
Combining this with Theorem~\ref{thm:black-box-dual-solver} gives
\[
\mathcal D^\star
\le
q(\mu)
\le
\mathcal D^\star+
\exp(-\log^C P),
\]
which is equivalent to the stated bound after multiplying by $-1$.
\end{proof}

Thus, with high probability, the first stage yields a $P^{1+o(1)}$ algorithm for the dual together with a feasible mass vector and an additive estimate of the primal optimum value. Section~\ref{sec:primal-recovery} uses the returned mass coordinates to recover an explicit primal point in the same asymptotic running time.
\section{Feasible Edge Masses and Support Queries}
\label{sec:mass-support}

This section carries out Steps~3--4 of the roadmap. Its input is the lifted formulation and first-stage mass vector from Step~2; it identifies the feasible mass set $\mathcal M(s)$ and reduces the support query to a linear-cost min-cost-flow problem on the same graph $G^\uparrow$. The output is the mass-set characterization, the support-query reduction, and the rule that a support-flow dual potential $\pi$ restricts to a support maximizer on $V$.

Crucially, the detailed transport variables $p,n$ are not used in primal recovery except through the scalar masses $\mu$: the first-stage routing inside each edge gadget is discarded, and support queries are driven only by budgets derived from $\mu$.

For $z\in\R^E$, recall the quadratic mass notation
\[
\norm{z}_{w^{-1}}^2:=\sum_{e\in E}\frac{z_e^2}{w_e},
\qquad
q(z):=\frac12\norm{z}_{w^{-1}}^2.
\]

Explicitly, the feasible edge-mass set is
\begin{equation}
\label{eq:mass-polytope}
\mathcal M(s):=
\left\{
\mu\in\R_{\ge 0}^E:
\begin{array}{l}
\exists\,p,n\ge 0\ \text{such that}
\sum_{e\ni v}(p_{ev}-n_{ev})=s_v \quad (\forall v\in V),\\[0.3em]
\sum_{v\in e}p_{ev}=\mu_e \quad (\forall e\in E),\\[0.3em]
\sum_{v\in e}n_{ev}=\mu_e \quad (\forall e\in E)
\end{array}
\right\}.
\end{equation}

On the primal side, we will need the following support query: under prescribed edge-range budgets $r$, how large can the response $\ip{s}{x}$ be?
For $r\in\R_{\ge 0}^E$, define
\begin{equation}
\label{eq:support-primal}
L_s(r):=
\max\bigl\{\ip{s}{x}:\ x\in\Xzero,\ R_e(x)\le r_e\ \forall e\in E\bigr\}.
\end{equation}

The next theorem states the facts about the mass set that we will use later. First, it rewrites the quadratic dual objective in terms of $\mathcal M(s)$. Second, it turns the support queries used in primal recovery into linear optimization over $\mathcal M(s)$, and equivalently into min-cost flow on the lifted graph. Third, it shows how an optimal dual potential of that min-cost flow recovers an optimal support maximizer.
\begin{theorem}[Feasible edge masses and support queries]
\label{thm:mass-support}
Let $\mathcal M(s)$ be the set of feasible edge-mass vectors defined in \eqref{eq:mass-polytope}. Let $b^\uparrow$ be the lifted demand vector on $G^\uparrow$ defined by
\[
b^\uparrow_v=s_v\quad (v\in V),
\qquad
b^\uparrow_{e^+}=b^\uparrow_{e^-}=0\quad (e\in E).
\]
Then
\[
\mathcal D^\star=\min_{\mu\in\mathcal M(s)} q(\mu).
\]
For every $r\in\R_{\ge 0}^E$, define the linear support costs on the lifted arcs by
\[
c^r_a:=
\begin{cases}
r_e, & a=(e^-,e^+)\text{ for some }e\in E,\\
0, & \text{otherwise}.
\end{cases}
\]
Then
\[
L_s(r)=\min_{\mu\in\mathcal M(s)}\sum_{e\in E} r_e\mu_e
=\min\Bigl\{\sum_{a\in A^\uparrow} c^r_a f_a: A^\uparrow f=b^\uparrow,\ f\ge 0\Bigr\}.
\]
Moreover, if $\pi\in\R^{V^\uparrow}$ is any optimal dual potential for this min-cost-flow LP, meaning
\[
\bigl((A^\uparrow)^\top\pi\bigr)_a\le c^r_a\quad(a\in A^\uparrow),
\qquad
\ip{b^\uparrow}{\pi}=L_s(r),
\]
then the vector
\[
x_v:=\pi_v-\frac{\sum_{u\in V} d_u\pi_u}{\sum_{u\in V} d_u}
\qquad (v\in V)
\]
is an optimal solution of the support maximization problem \eqref{eq:support-primal}.
\end{theorem}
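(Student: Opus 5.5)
The first claim, $\mathcal D^\star=\min_{\mu\in\mathcal M(s)}q(\mu)$, follows directly from Proposition~\ref{prop:lifted-flow-equivalence}. The objective of \eqref{eq:lifted-flow} depends only on $\mu$, so minimizing over $(p,n,\mu)$ is the same as minimizing $q$ over the projection $\mathcal M(s)$. The minimum is attained: $\mathcal M(s)$ is a projection of a polyhedron, hence closed; it is nonempty by Lemma~\ref{lem:B-range} together with part~(ii) of the proposition; and $q$ is coercive on it. The same projection argument gives the second equality in the support identity, namely $\min_{\mu\in\mathcal M(s)}\ip{r}{\mu}=\min\{\ip{c^r}{f}:A^\uparrow f=b^\uparrow,\ f\ge0\}$. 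Under the correspondence $p_{ev}=f_{(e^+,v)}$, $n_{ev}=f_{(v,e^-)}$, $\mu_e=f_{(e^-,e^+)}$, the feasible flows are exactly the feasible triples, and $\ip{c^r}{f}=\sum_e r_e\mu_e$. This LP is feasible and bounded below by $0$ because $r\ge0$, so it has an optimum.

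The substantive equality is $L_s(r)=\mathrm{LP}(r)$, which I will get from LP duality. The dual of the min-cost-flow LP is $\max\{\ip{b^\uparrow}{\pi}:((A^\uparrow)^\top\pi)_a\le c^r_a\}$. With the convention ``incoming minus outgoing,'' $((A^\uparrow)^\top\pi)_{(i,j)}=\pi_j-\pi_i$. The constraints therefore read $\pi_v\le\pi_{e^+}$ and $\pi_{e^-}\le\pi_v$ for $v\in e$, together with $\pi_{e^+}-\pi_{e^-}\le r_e$. Eliminating the auxiliary potentials, such $\pi_{e^\pm}$ exist iff $\max_{v\in e}\pi_v-\min_{v\in e}\pi_v\le r_e$: take $\pi_{e^+}=\max_e\pi$ and $\pi_{e^-}=\min_e\pi$. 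The dual objective is $\ip{s}{\pi|_V}$, since $b^\uparrow$ vanishes on the auxiliary nodes. The dual is thus $\max\{\ip{s}{y}:y\in\R^V,\ R_e(y)\le r_e\ \forall e\}$. Since $\1^\top s=0$ and $R_e$ is shift-invariant, we may shift $y$ into $\Xzero$ without changing the objective or feasibility, so this value equals $L_s(r)$. Strong LP duality, valid because the primal LP is feasible with a finite optimum, then gives $L_s(r)=\mathrm{LP}(r)$.

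The final claim is read off the same computation. Any optimal dual $\pi$ satisfies the arc inequalities, so $\pi|_V$ satisfies $R_e(\pi|_V)\le\pi_{e^+}-\pi_{e^-}\le r_e$. Also $\ip{s}{\pi|_V}=\ip{b^\uparrow}{\pi}=L_s(r)$. Subtracting the $D$-weighted mean puts $x$ in $\Xzero$, and this shift changes neither the ranges nor $\ip{s}{\cdot}$, the latter because $\1^\top s=0$. Hence $x$ is feasible for \eqref{eq:support-primal} and attains $L_s(r)$, so it is optimal.

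I expect the only delicate step to be the sign bookkeeping in the dual constraints. One must check that, under the incoming-minus-outgoing convention, transport arcs give $\pi_v\le\pi_{e^+}$ and $\pi_{e^-}\le\pi_v$ rather than the reverse inequalities. One must also confirm that $\1^\top s=0$ is exactly what allows the $\Xzero$ normalization to be added or dropped freely. Everything else is a direct application of Proposition~\ref{prop:lifted-flow-equivalence} and finite-dimensional LP duality.
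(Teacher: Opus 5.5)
Your proposal is correct and follows the paper's route. Both arguments use finite LP strong duality between the support LP and the lifted min-cost flow, with the auxiliary potentials $\pi_{e^\pm}$ playing the role of the paper's variables $u_e,\ell_e$, and both read off the final claim from the arc inequalities exactly as in Corollary~\ref{cor:support-maximizer-from-lifted-potentials}. The only difference is cosmetic: you dualize from the flow side and handle $\Xzero$ by shift-invariance, whereas the paper dualizes the $(x,u,\ell)$ support LP and shows, using $\1^\top s=0$, that the multiplier of $\ip{D\1}{x}=0$ vanishes.
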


\begin{proof}
The theorem is proved by the next two propositions. Proposition~\ref{prop:mass-projection} establishes the identity $\mathcal D^\star=\min_{\mu\in\mathcal M(s)} q(\mu)$. Proposition~\ref{prop:support-lift} proves the support formula, its lifted min-cost-flow realization, and the recovery of an optimal support maximizer from an optimal dual potential.
\end{proof}

Figure~\ref{fig:mass-support-recovery} illustrates the role of this theorem in the recovery stage: the quadratic minimization over $\mathcal M(s)$ supplies the normal vector $r=\mu/w$. For an exact minimizer this already identifies an optimal mass-side solution of the linearized problem over $\mathcal M(s)$, but it does not give the vertex potential $x$. The linear-cost min-cost-flow solve on the same lifted graph $G^\uparrow$ is used to obtain an optimal dual potential, whose restriction to $V$ is the recovered primal point.

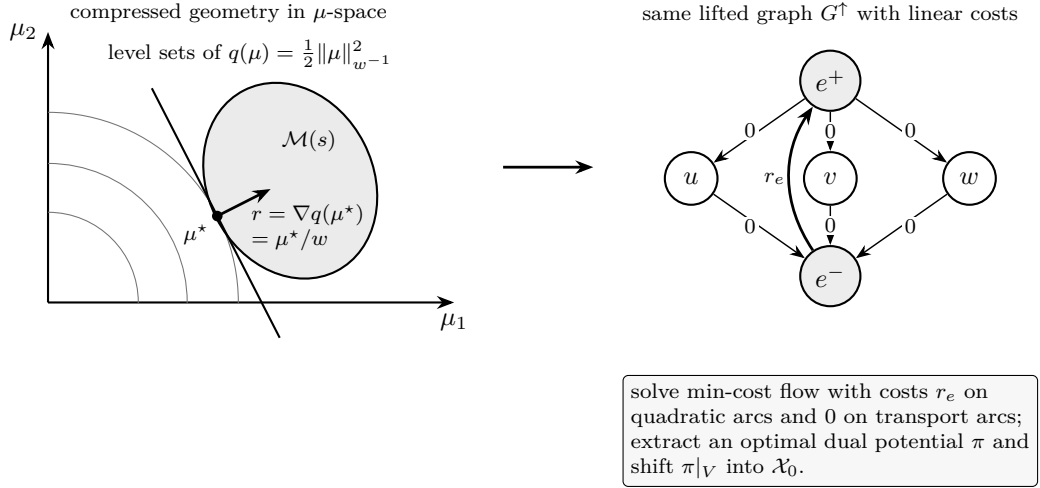
\begin{figure}[t]
\centering
\begin{tikzpicture}[
    >=Stealth,
    x=1cm,
    y=1cm,
    every node/.style={font=\small},
    note/.style={font=\scriptsize, fill=none, inner sep=1.1pt},
    vertex/.style={circle, draw, thick, fill=white, minimum size=7mm, inner sep=0pt},
    auxnode/.style={circle, draw, thick, fill=black!7, minimum size=8mm, inner sep=0pt},
    auxarrow/.style={-{Stealth[length=2.0mm]}, line width=0.55pt},
    mainarrow/.style={-{Stealth[length=2.4mm]}, line width=1.05pt},
    costlabel/.style={font=\scriptsize, fill=white, inner sep=1pt},
    tinybox/.style={draw, rounded corners=2pt, fill=black!3, font=\scriptsize, align=left, inner sep=3pt}
  ]
  \begin{scope}[shift={(0,0)}]
    \node[note] at (2.6,4.25) {compressed geometry in $\mu$-space};
    \coordinate (origin) at (0.2,0.4);
    \draw[->, thick] (origin) -- (5.6,0.4) node[below] {$\mu_1$};
    \draw[->, thick] (origin) -- (0.2,4.0) node[left] {$\mu_2$};

    \begin{scope}
      \clip (0.2,0.4) rectangle (5.6,4.0);
      \draw[thin, black!55] (origin) circle [radius=1.20];
      \draw[thin, black!55] (origin) circle [radius=1.85];
      \draw[thin, black!55] (origin) circle [radius=2.53];
    \end{scope}
    \node[note, anchor=west] at (0.95,3.70) {level sets of $q(\mu)=\tfrac12\|\mu\|_{w^{-1}}^2$};

    \coordinate (mustar) at (2.45,1.55);
    \begin{scope}[shift={(3.42,2.02)}, rotate=-62.9]
      \draw[thick, fill=black!8] (0,0) ellipse [x radius=1.35, y radius=1.10];
    \end{scope}
    \node[note] at (3.67,2.55) {$\mathcal M(s)$};

    \draw[thick] (1.58,3.25) -- (3.28,-0.07);
    \fill (mustar) circle (2pt);
    \node[note, anchor=north east] at (2.38,1.48) {$\mu^{\star}$};

    \draw[mainarrow] (mustar) -- (3.16,1.91);
    \node[note, anchor=west, align=left] at (2.86,1.4) {$r=\nabla q(\mu^{\star})$\\$=\mu^{\star}/w$};
  \end{scope}

  \draw[mainarrow] (6.25,2.2) -- (7.45,2.2);

  \begin{scope}[shift={(7.9,0.0)}]
    \node[note] at (2.7,4.25) {same lifted graph $G^{\uparrow}$ with linear costs};

    \node[auxnode] (ep) at (2.7,3.35) {$e^+$};
    \node[auxnode] (em) at (2.7,0.75) {$e^-$};
    \node[vertex] (a) at (0.85,2.05) {$u$};
    \node[vertex] (b) at (2.7,2.05) {$v$};
    \node[vertex] (c) at (4.55,2.05) {$w$};

    \draw[auxarrow] (ep) -- node[costlabel, left] {$0$} (a);
    \draw[auxarrow] (ep) -- node[costlabel] {$0$} (b);
    \draw[auxarrow] (ep) -- node[costlabel, right] {$0$} (c);
    \draw[auxarrow] (a) -- node[costlabel, left] {$0$} (em);
    \draw[auxarrow] (b) -- node[costlabel] {$0$} (em);
    \draw[auxarrow] (c) -- node[costlabel, right] {$0$} (em);
    \draw[mainarrow] (em) to[bend left=34] node[note, left] {$r_e$} (ep);

    \node[tinybox, below=9mm of em, text width=53mm] (potbox) {solve min-cost flow with costs $r_e$ on quadratic arcs and $0$ on transport arcs; extract an optimal dual potential $\pi$ and shift $\pi|_V$ into $\Xzero$.};
  \end{scope}
\end{tikzpicture}
\caption{From the feasible mass set to a primal certificate. Left: the first stage minimizes the quadratic objective $q(\mu)$ over the convex feasible mass set $\mathcal M(s)$. At an exact minimizer, the gradient $r=\mu^{\star}/w$ defines a supporting hyperplane of $\mathcal M(s)$, so the same normal vector makes $\mu^\star$ optimal for the linearized mass problem. This is still only mass-side information. Right: to recover a vertex potential, we solve the equivalent linear-cost min-cost-flow problem on the same lifted graph $G^\uparrow$; its dual node potential restricted to the original vertex set yields the support maximizer and hence the recovered primal potential.}
\label{fig:mass-support-recovery}
\end{figure}
 
\subsection{Feasible edge-mass vectors}

We begin with the first part of Theorem~\ref{thm:mass-support}, namely the reduction of the quadratic dual objective to the feasible mass set.
\begin{proposition}[Feasible edge masses from the lifted flow formulation of the dual]
\label{prop:mass-projection}
The set \eqref{eq:mass-polytope} records the edge masses that can arise in the lifted flow formulation of the dual, in the sense that
\[
\mathcal D^\star=\min_{\mu\in\mathcal M(s)} q(\mu).
\]
More precisely, if $\eta$ is feasible for the dual, then
\[
\mu^\eta_e:=\frac12\norm{\eta_e}_1
\]
belongs to $\mathcal M(s)$ and satisfies
\[
q(\mu^\eta)=\mathcal D(\eta).
\]
Conversely, every $\mu\in\mathcal M(s)$ gives a feasible dual point $\eta$ with
\[
\mathcal D(\eta)\le q(\mu).
\]
\end{proposition}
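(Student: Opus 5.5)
The plan is to derive Proposition~\ref{prop:mass-projection} directly from Proposition~\ref{prop:lifted-flow-equivalence}. Since $\mathcal M(s)$ is, by \eqref{eq:mass-polytope}, exactly the projection onto the $\mu$-coordinates of the feasible set of \eqref{eq:lifted-flow}, no new construction is needed.

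\emph{Forward direction.} Let $\eta$ be dual feasible, so $\eta_e\in U_e$ for every $e$ and $B\eta=s$. Define $p_{ev}:=((\eta_e)_v)_+$, $n_{ev}:=((\eta_e)_v)_-$ and $\mu^\eta_e:=\frac12\norm{\eta_e}_1$. Part~(ii) of Proposition~\ref{prop:lifted-flow-equivalence} shows that $(p,n,\mu^\eta)$ is feasible for \eqref{eq:lifted-flow}. The witnesses $p,n$ therefore certify $\mu^\eta\in\mathcal M(s)$. The same part gives
\[
\sum_e \frac{(\mu^\eta_e)^2}{2w_e}=\mathcal D(\eta),
\]
and the left-hand side is $q(\mu^\eta)$.

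\emph{Converse.} Let $\mu\in\mathcal M(s)$. By definition there are $p,n\ge0$ such that $(p,n,\mu)$ satisfies the constraints of \eqref{eq:lifted-flow}. Set $(\eta_e)_v:=p_{ev}-n_{ev}$ for $v\in e$ and zero-pad outside $e$. Part~(i) of Proposition~\ref{prop:lifted-flow-equivalence} then gives $\eta_e\in U_e$, $B\eta=s$, and $\mathcal D(\eta)\le q(\mu)$.

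\emph{The identity $\mathcal D^\star=\min_{\mu\in\mathcal M(s)} q(\mu)$.} The forward direction applied to a dual minimizer, which exists by Theorem~\ref{thm:poisson-fenchel-dual}, produces a point of $\mathcal M(s)$ with $q$-value $\mathcal D^\star$. Hence $\inf_{\mathcal M(s)}q\le\mathcal D^\star$. The converse shows that $q(\mu)\ge\mathcal D(\eta)\ge\mathcal D^\star$ for every $\mu\in\mathcal M(s)$. The infimum therefore equals $\mathcal D^\star$ and is attained at $\mu^{\eta^\star}$, which justifies writing $\min$.

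I do not expect a real obstacle. The only point needing care is attainment, and it is settled by using the existence of a dual minimizer from Theorem~\ref{thm:poisson-fenchel-dual} rather than arguing closedness of $\mathcal M(s)$. One could also note that $\mathcal M(s)$ is a polyhedron, being a linear projection of a polyhedron, and that $q$ is coercive, which gives attainment independently.
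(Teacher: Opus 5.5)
Your proposal is correct and matches the paper's proof: both directions come from parts (i) and (ii) of Proposition~\ref{prop:lifted-flow-equivalence}, and the two inequalities combine to give the identity. The one difference is that you justify attainment of the minimum explicitly, using the dual minimizer from Theorem~\ref{thm:poisson-fenchel-dual}, whereas the paper simply writes ``taking minima'' without addressing it.
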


\begin{proof}
If $\mu\in\mathcal M(s)$, choose witnesses $p,n$ in \eqref{eq:mass-polytope} and define $\eta$ from $(p,n,\mu)$ as in Proposition~\ref{prop:lifted-flow-equivalence}(i). Then $\eta$ is feasible for the dual and
\[
\mathcal D(\eta)\le \sum_{e\in E}\frac{\mu_e^2}{2w_e}=q(\mu).
\]
Taking minima over $\mu\in\mathcal M(s)$ yields
\[
\mathcal D^\star\le \min_{\mu\in\mathcal M(s)} q(\mu).
\]
Conversely, if $\eta$ is feasible for the dual, Proposition~\ref{prop:lifted-flow-equivalence}(ii) with
\[
p_{ev}:=((\eta_e)_v)^+,
\qquad
n_{ev}:=((\eta_e)_v)^-,
\qquad
\mu_e:=\frac12\norm{\eta_e}_1
\]
shows that $\mu=\mu^\eta$ lies in $\mathcal M(s)$ and satisfies
\[
q(\mu)=\sum_{e\in E}\frac{\mu_e^2}{2w_e}=\sum_{e\in E}\frac{\norm{\eta_e}_1^2}{8w_e}=\mathcal D(\eta).
\]
Taking minima over feasible $\eta$ gives the reverse inequality.
\end{proof}

\subsection{Support queries as linear optimization}

We now prove the support-query part of Theorem~\ref{thm:mass-support}: once we know the feasible mass set, the support problem reduces to linear optimization over $\mu$ and hence to min-cost flow on the lifted graph.

\begin{proposition}[Support LP and min-cost flow with linear arc costs]
\label{prop:support-lift}
For $r\in\R_{\ge 0}^E$, the support value $L_s(r)$ from \eqref{eq:support-primal} satisfies
\begin{equation}
\label{eq:support-dual}
L_s(r)=\min_{\mu\in\mathcal M(s)}\sum_{e\in E} r_e\mu_e.
\end{equation}
Equivalently, on the lifted graph $G^\uparrow$ of Section~\ref{sec:lifted-flow}, let
\[
c^r_a:=
\begin{cases}
r_e, & a=(e^-,e^+)\text{ for some }e\in E,\\
0, & \text{otherwise}.
\end{cases}
\]
Then
\begin{equation}
\label{eq:support-lifted-flow}
L_s(r)=
\min\Bigl\{
\sum_{a\in A^\uparrow} c^r_a f_a:
A^\uparrow f=b^\uparrow,\ f\ge 0
\Bigr\}.
\end{equation}
Finally, if $\pi\in\R^{V^\uparrow}$ is any optimal dual potential for \eqref{eq:support-lifted-flow}, then
\[
x_v:=\pi_v-\frac{\sum_{u\in V} d_u\pi_u}{\sum_{u\in V} d_u}
\qquad (v\in V)
\]
is an optimal solution of \eqref{eq:support-primal}.
\end{proposition}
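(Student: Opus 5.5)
The plan is to prove the three claims in reverse order of abstraction. We first establish the lifted min-cost-flow identity \eqref{eq:support-lifted-flow} directly by LP duality. Then we read off \eqref{eq:support-dual} from the definition of $\mathcal M(s)$, and finally extract the maximizer from an optimal dual potential.

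For the first identity, consider the primal LP
\[
\min\Bigl\{\textstyle\sum_a c^r_a f_a : A^\uparrow f=b^\uparrow,\ f\ge 0\Bigr\}.
\]
Its LP dual is
\[
\max\Bigl\{\ip{b^\uparrow}{\pi} : \bigl((A^\uparrow)^\top\pi\bigr)_a\le c^r_a\ \forall a\Bigr\}.
\]
With the convention ``incoming minus outgoing'', $((A^\uparrow)^\top\pi)_{(i,j)}=\pi_j-\pi_i$. The constraints therefore read $\pi_v\le\pi_{e^+}$ for $v\in e$, $\pi_{e^-}\le\pi_v$ for $v\in e$, and $\pi_{e^+}-\pi_{e^-}\le r_e$. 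Eliminating the auxiliary potentials, a vector $\pi|_V$ extends to a feasible $\pi$ if and only if $\max_{v\in e}\pi_v-\min_{v\in e}\pi_v\le r_e$ for every $e$: take $\pi_{e^+}:=\max_{v\in e}\pi_v$ and $\pi_{e^-}:=\min_{v\in e}\pi_v$, and note the converse is immediate from the chain of inequalities. The objective $\ip{b^\uparrow}{\pi}=\ip{s}{\pi|_V}$ is invariant under adding a constant to $\pi|_V$, because $\1^\top s=0$; the range constraints are invariant as well. Hence the dual value equals $\sup\{\ip{s}{x}:R_e(x)\le r_e\}$ over $x\in\R^V$, and shifting into $\Xzero$ shows this equals $L_s(r)$. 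The primal LP is feasible by Lemma~\ref{lem:lift-interiorization}'s path flow, or equivalently Lemma~\ref{lem:B-range} with Proposition~\ref{prop:lifted-flow-equivalence}. The dual is feasible, since $\pi=0$ works because $r\ge0$. By strong LP duality both optima are attained and equal, which gives \eqref{eq:support-lifted-flow}. Connectedness also guarantees boundedness, since ranges bounded along hyperedge paths bound $x$ on $\Xzero$; LP duality gives this automatically anyway.

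Next, \eqref{eq:support-dual} follows because the lifted objective $\sum_a c^r_af_a=\sum_e r_e f_{(e^-,e^+)}$ depends only on the quadratic-arc flows. Moreover, the set of quadratic-arc flow vectors of nonnegative feasible lifted flows is exactly $\mathcal M(s)$, by the identification $p_{ev}=f_{(e^+,v)}$, $n_{ev}=f_{(v,e^-)}$, $\mu_e=f_{(e^-,e^+)}$ with the conservation constraints in \eqref{eq:mass-polytope}. The two minima therefore coincide.

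Finally, let $\pi$ be an optimal dual potential. By the elimination argument above, $R_e(\pi|_V)\le \pi_{e^+}-\pi_{e^-}\le r_e$. The weighted-mean shift defining $x$ preserves every $R_e$, puts $x$ in $\Xzero$ since $\ip{Dx}{\1}=0$ by construction, and does not change $\ip{s}{\cdot}$ because $\1^\top s=0$. Thus $x$ is feasible for \eqref{eq:support-primal} with $\ip{s}{x}=\ip{b^\uparrow}{\pi}=L_s(r)$, so it is optimal. The only step needing real care is the sign bookkeeping of the incidence convention, so that the dual constraints come out as $\pi_v\le\pi_{e^+}$ and $\pi_{e^-}\le\pi_v$ rather than reversed. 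If the signs came out the other way, the natural fix would be to replace $\pi$ by $-\pi$, which would flip the objective to $-\ip{s}{\pi}$. So the convention must be checked against the arc orientation $(e^-,e^+)$ carrying cost $r_e$; with ``incoming minus outgoing'', the constraint on arc $(i,j)$ is $\pi_j-\pi_i\le c_{(i,j)}$, as used above.
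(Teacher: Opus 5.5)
Your proof is correct. It rests on the same LP duality as the paper's proof, but dualizes from the other side.

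The paper starts from the support problem, written as an LP in $(x,u,\ell)$ with the constraint $x\in\Xzero$ kept explicit. It dualizes with multipliers $p_{ev},n_{ev},\mu_e\ge 0$ and a free multiplier $\alpha$ for $\ip{D\1}{x}=0$, then uses $\1^\top s=0$ to force $\alpha=0$. What remains is exactly the defining system of $\mathcal M(s)$. So \eqref{eq:support-dual} comes first, and \eqref{eq:support-lifted-flow} is a relabeling of the variables as arc flows.

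You instead dualize the lifted min-cost-flow LP and eliminate $\pi_{e^\pm}$ by setting them to the maximum and minimum of $\pi|_V$ over $e$. You handle the $\Xzero$ normalization by shift invariance rather than a Lagrange multiplier. Both uses of $\1^\top s=0$ play the same role.

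What each direction buys:
\begin{itemize}
\item The paper's direction explains where the constraints of $\mathcal M(s)$ come from: they are the dual feasibility conditions of the support LP.
\item Your direction makes the final claim, that an optimal dual potential restricts and shifts to a support maximizer, an immediate byproduct of your elimination computation. The paper justifies that claim only briefly, saying the dual of \eqref{eq:support-lifted-flow} is the $(x,u,\ell)$ LP modulo additive constants. Your elimination argument is essentially what the paper spells out separately in Corollary~\ref{cor:support-maximizer-from-lifted-potentials}.
\end{itemize}

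Both routes need the same feasibility ingredients: the connectedness-based path flow on the flow side, and a trivial feasible point on the potential side (the paper's $x=0$, your $\pi=0$).
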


\begin{proof}
Introduce variables $u_e,\ell_e\in\R$ and rewrite \eqref{eq:support-primal} as
\[
L_s(r)=
\max\Bigl\{
\ip{s}{x}:
 x\in\Xzero,\ 
 x_v-u_e\le 0,\ 
 \ell_e-x_v\le 0\ (v\in e),\ 
 u_e-\ell_e\le r_e\ (e\in E)
\Bigr\}.
\]
This is an ordinary finite-dimensional linear program with a closed polyhedral feasible region. It is feasible whenever $r\ge 0$, for example at
\[
x=0,
\qquad
u=0,
\qquad
\ell=0.
\]
Its objective is finite even when some $r_e=0$: along any hypergraph path, the constraints $R_e(x)\le r_e$ bound every coordinate difference by the sum of the budgets on that path, and connectedness together with $x\in\Xzero$ therefore bounds $\norm{x}_\infty$ on the feasible region. The variables $u$ and $\ell$ are then bounded as well, so the primal LP has a finite attained optimum. We emphasize that the duality used below is LP strong duality, not a Slater-type argument; zero budgets $r_e=0$ cause no difficulty.

Dualize these inequalities with multipliers $p_{ev},n_{ev},\mu_e\ge 0$, and dualize the constraint $\ip{D\1}{x}=0$ with a free scalar $\alpha$. The Lagrangian is
\[
\mathcal L=
\ip{s}{x}
+\sum_{e\in E}\sum_{v\in e} p_{ev}(u_e-x_v)
+\sum_{e\in E}\sum_{v\in e} n_{ev}(x_v-\ell_e)
+\sum_{e\in E}\mu_e(r_e-u_e+\ell_e)
+\alpha\ip{D\1}{x}.
\]
Maximizing over $u$ and $\ell$ is finite if and only if
\[
\sum_{v\in e} p_{ev}=\mu_e,
\qquad
\sum_{v\in e} n_{ev}=\mu_e
\qquad (\forall e\in E).
\]
Maximizing over $x$ is finite if and only if
\[
s_v-\sum_{e\ni v} p_{ev}+\sum_{e\ni v} n_{ev}+\alpha d_v=0
\qquad (\forall v\in V).
\]
Summing over $v$ and using $\1^\top s=0$ as well as the edge equalities above gives
\[
\alpha\sum_{v\in V} d_v=0,
\]
so $\alpha=0$. Therefore the dual feasibility conditions reduce to
\[
\sum_{e\ni v}(p_{ev}-n_{ev})=s_v
\qquad (\forall v\in V),
\]
together with the two edge-mass equalities. The resulting dual objective is
\[
\sum_{e\in E} r_e\mu_e.
\]
The dual LP is feasible as well: by the same path-flow construction used in Lemma~\ref{lem:lift-interiorization}, there exists a nonnegative lifted flow satisfying $A^\uparrow f=b^\uparrow$ whenever $H$ is connected and $\1^\top s=0$, and translating that flow into $(p,n,\mu)$ yields a feasible point of the displayed dual system. Therefore both the primal and dual LPs are feasible and have finite optimum values, so standard finite-dimensional LP strong duality applies and both optima are attained.
This proves \eqref{eq:support-dual}. The lifted formulation \eqref{eq:support-lifted-flow} is the same system written with arc variables
\[
f_{(e^+,v)}=p_{ev},
\qquad
f_{(v,e^-)}=n_{ev},
\qquad
f_{(e^-,e^+)}=\mu_e.
\]
Finally, the LP dual of \eqref{eq:support-lifted-flow} is the $(x,u,\ell)$ formulation above, so the vertex restriction $\bar x_v:=\pi_v$ of any optimal dual potential $\pi$ gives an optimal support maximizer modulo additive constants. The formula in the proposition subtracts the $D$-weighted average of this restriction, which is exactly the constant shift needed to put the vector in $\Xzero$. Since both the support constraints and the objective are invariant under adding constants to $x$, this shifted vector remains optimal.
\end{proof}

\subsection{Finite-capacity reduction and potential recovery}

With Theorem~\ref{thm:mass-support} established, we record one further consequence needed later: the support-query min-cost-flow problem admits a finite-capacity reduction on the same lifted graph.
\begin{proposition}[Finite-capacity reduction for support queries]
\label{prop:support-finite-capacity}
Fix $r\in \R_{\ge 0}^E$ and $s\in \R^V$ with $\1^\top s=0$. Let
\[
U_0:=\sum_{v:s_v>0} s_v=\frac12\norm{s}_1.
\]
Then the lifted min-cost-flow problem with linear arc costs
\[
\min\Bigl\{\sum_{e\in E} r_e f_{(e^-,e^+)}: A^\uparrow f=b^\uparrow,\ f\ge 0\Bigr\}
\]
has an optimal solution $f^\star$ satisfying
\[
f^\star_a \le U_0 \qquad (\forall a\in A^\uparrow).
\]
Consequently, for any choice of capacities $u_a>U_0$ on every arc, adding the constraints $f_a\le u_a$ does not change the optimum. 
In particular, for dyadic data one may choose dyadic capacities and then scale by a common power of two to obtain an equivalent capacitated integral min-cost-flow instance on the same graph.
\end{proposition}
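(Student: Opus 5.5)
The plan is to start from an arbitrary optimal solution of the uncapacitated lifted min-cost-flow problem and modify it, without increasing its cost, into an acyclic optimal flow. We then bound each arc of that flow by the total supply. Optimal solutions exist by Proposition~\ref{prop:support-lift}: the LP is feasible via the path-flow construction of Lemma~\ref{lem:lift-interiorization}, and its value $L_s(r)$ is finite.

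\textbf{Step 1 (cycle cancellation).} Take an optimal $f$ whose support has the minimum number of arcs. Suppose the support of $f$ contains a directed cycle $Z$. All costs $c^r_a$ are nonnegative because $r\ge 0$, so the cost of $Z$ is nonnegative. Subtracting $\theta\1_Z$ with $\theta=\min_{a\in Z}f_a$ preserves conservation and nonnegativity and does not increase the cost. The result is still optimal, and its support is strictly smaller, which contradicts minimality. Hence $f^\star:=f$ is acyclic.

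\textbf{Step 2 (path decomposition).} By the standard flow decomposition theorem, an acyclic nonnegative flow with $A^\uparrow f^\star=b^\uparrow$ decomposes into directed paths only, with no cycles. Each path runs from a node with negative imbalance to a node with positive imbalance. Under the convention that $(A^\uparrow f)_u$ is inflow minus outflow, sources are the vertices with $s_v<0$ and sinks are those with $s_v>0$; the auxiliary nodes $e^\pm$ have zero demand. Path weights can be chosen so that the total weight entering sinks equals $\sum_{v:s_v>0}s_v=U_0$, and $U_0=\frac12\norm{s}_1$ because $\1^\top s=0$. Every arc value is the sum of the weights of the paths through it. A path uses each arc at most once, so $f^\star_a\le U_0$.

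\textbf{Step 3 (consequences).} Adding capacities $u_a>U_0$, or even $u_a\ge U_0$, only shrinks the feasible set while keeping $f^\star$ feasible, so the optimum value is unchanged. For dyadic data, choose dyadic capacities $u_a$ slightly above $U_0$. Scale $s$, $u$, and the costs by a common power of two to make them integral. Scaling demands and capacities by $2^k$ scales flows by $2^k$ and preserves optimality, and scaling costs only rescales the objective, so the result is an equivalent integral capacitated instance on the same graph. The only delicate point is getting the orientation of sources and sinks right under the paper's sign convention. The bound holds symmetrically either way, since the negative-part total also equals $U_0$.
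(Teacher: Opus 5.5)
Your proof is correct and follows the paper's argument essentially step for step. You cancel directed cycles, which is harmless because all costs are nonnegative; choosing a minimum-support optimal flow is just a compact form of the paper's repeated cancellation. You then bound every arc of the resulting acyclic flow by the total weight $U_0$ of a source-to-sink path decomposition, and read off the capacity and dyadic-scaling consequences directly, as the paper does.
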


\begin{proof}
Start from any optimal feasible flow $f$ for \eqref{eq:support-lifted-flow}. If $C$ is a directed cycle in the support of $f$, let
\[
\tau:=\min_{a\in C} f_a.
\]
Subtracting $\tau$ units of flow on every arc of $C$ preserves the constraints $A^\uparrow f=b^\uparrow$ and the nonnegativity constraints. Because every arc cost is nonnegative, namely $0$ on transport arcs and $r_e\ge 0$ on the arcs $(e^-,e^+)$, this cycle deletion does not increase the objective. Repeating finitely many times, we obtain an optimal feasible flow that is acyclic. Let $f$ again denote such an optimal acyclic feasible flow.

Decompose $f$ into weighted directed paths. Because the node-balance vector is $b^\uparrow=(s,0,0)$, every path starts at a vertex with negative demand and ends at a vertex with positive demand; with the convention $A^\uparrow f=b^\uparrow$, the positive-demand vertices are sinks and the negative-demand vertices are sources. The total path weight is therefore
\[
\sum_{v:s_v>0} s_v = \frac12\norm{s}_1 = U_0.
\]
Since the flow on any arc is the sum of the weights of the paths using that arc, every arc carries at most the total path weight. Hence
\[
f_a\le U_0
\qquad (\forall a\in A^\uparrow).
\]
This proves the existence of an optimal solution $f^\star$ with the stated uniform bound.

Now fix capacities $u_a>U_0$ on every arc. The optimal flow $f^\star$ above satisfies $f^\star_a<u_a$ for all $a$, so it is feasible for the capacitated problem and has the same objective value as in the uncapacitated problem. Therefore adding these upper bounds does not change the optimum. If the data are dyadic, one may choose dyadic capacities $u_a>U_0$ and then scale by a common power of two; this produces an equivalent capacitated integral min-cost-flow instance on the same lifted graph.
\end{proof}

The finite-capacity reduction is useful algorithmically because it lets us call an exact capacitated min-cost-flow routine. For primal recovery, however, the object we ultimately need is a dual potential for the original uncapacitated support-flow LP, since Corollary~\ref{cor:support-maximizer-from-lifted-potentials} will interpret that potential as a support maximizer.
The next lemma records the standard residual-potential extraction and shows that, when the added capacities are slack for some optimum, the extracted capacitated potential is already an optimal uncapacitated potential.

\begin{lemma}[Residual potentials for capacitated min-cost flow under the incoming-minus-outgoing convention]
\label{lem:residual-potentials-capacitated-mcf}
Let $G=(N,A)$ be a directed graph, and let $A_G$ denote its incoming-minus-outgoing incidence matrix. Thus, for an arc $a=(p,q)$,
\[
\bigl(A_G^\top \pi\bigr)_a=\pi_q-\pi_p .
\]
Consider the capacitated min-cost-flow problem
\[
\min\{c^\top f: A_G f=b,\ 0\le f\le u\}.
\]
Let $f$ be an exact optimal solution. Form the residual graph $G_f$ as follows. For every arc
$a=(p,q)$, include a forward residual arc $p\to q$ of cost $c_a$ if $f_a<u_a$, and include a reverse residual arc $q\to p$ of cost $-c_a$ if $f_a>0$.

Then $G_f$ has no negative-cost directed cycle. Let $d$ be a shortest-path distance label in the graph obtained from $G_f$ by adding a new super-source with zero-cost arcs to every node, and set $\pi_v:=d_v$ for $v\in N$. Define
\[
\lambda^+_a:=\max\{0,\bigl(A_G^\top\pi\bigr)_a-c_a\}
\qquad (a\in A).
\]
Then $(\pi,\lambda^+)$ is an optimal solution of the capacitated dual
\[
\max_{\pi,\lambda^+\ge0}
\left\{
\ip{b}{\pi}-\sum_{a\in A}u_a\lambda^+_a:
\bigl(A_G^\top\pi\bigr)_a-\lambda^+_a\le c_a\quad (a\in A)
\right\}.
\]
Moreover, suppose that the uncapacitated problem
\[
\min\{c^\top f: A_G f=b,\ f\ge0\}
\]
has an optimal solution $f^{\mathrm{sl}}$ satisfying $f^{\mathrm{sl}}_a<u_a$ for every arc $a$. Then every optimal capacitated dual solution has $\lambda^+=0$. Consequently, the node potential $\pi$ above is also an optimal dual potential for the uncapacitated problem.
\end{lemma}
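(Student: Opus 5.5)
We split the argument into three parts, following the three claims of Lemma~\ref{lem:residual-potentials-capacitated-mcf}: absence of negative cycles, optimality of $(\pi,\lambda^+)$ via complementary slackness, and vanishing of $\lambda^+$ when the capacities are slack for some uncapacitated optimum.

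\emph{No negative residual cycle.} If $G_f$ contained a directed cycle of negative cost, we would push a small amount $\varepsilon>0$ of flow around it. Forward residual arcs increase $f_a$, which is allowed since $f_a<u_a$. Reverse residual arcs decrease $f_a$, which is allowed since $f_a>0$. The push preserves $A_Gf=b$ and $0\le f\le u$, and it strictly decreases $c^\top f$. This contradicts the optimality of $f$. Consequently the graph with the added zero-cost super-source has well-defined finite shortest-path labels $d$. These labels satisfy $d_q\le d_p+\text{cost}$ on every residual arc $p\to q$.

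\emph{Dual feasibility and optimality.} Put $\pi:=d$. For a forward residual arc ($f_a<u_a$) the label inequality gives $\pi_q-\pi_p\le c_a$. Since $\bigl(A_G^\top\pi\bigr)_a=\pi_q-\pi_p$, the definition of $\lambda^+$ yields $\lambda^+_a=0$ on every arc with $f_a<u_a$. For a reverse residual arc ($f_a>0$) the label inequality gives $\pi_p\le\pi_q-c_a$, that is, $\bigl(A_G^\top\pi\bigr)_a\ge c_a$. Hence $\lambda^+_a=\bigl(A_G^\top\pi\bigr)_a-c_a$ on every arc with $f_a>0$, so the dual constraint is tight there. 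By construction $\lambda^+\ge0$ and $\bigl(A_G^\top\pi\bigr)_a-\lambda^+_a\le c_a$, so $(\pi,\lambda^+)$ is dual feasible. To show optimality we verify the duality gap, either by weak duality or by a direct computation:
\[
c^\top f-\Bigl(\ip{b}{\pi}-\sum_a u_a\lambda^+_a\Bigr)=\sum_a f_a\bigl(c_a-(A_G^\top\pi)_a+\lambda^+_a\bigr)+\sum_a(u_a-f_a)\lambda^+_a ,
\]
which uses $\ip{b}{\pi}=\ip{A_Gf}{\pi}$. Both sums vanish by the slackness facts above, so the gap is zero and $(\pi,\lambda^+)$ is optimal.

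\emph{Slack capacities.} This is the step needing the most care. We first show that the capacitated and uncapacitated problems have the same optimum value. Indeed $f^{\mathrm{sl}}$ is feasible for the capacitated problem, and the capacitated problem is a restriction of the uncapacitated one. Hence $f^{\mathrm{sl}}$ is optimal for the capacitated problem as well. Now take any optimal capacitated dual $(\pi',\lambda')$. Complementary slackness between the primal optimum $f^{\mathrm{sl}}$ and the dual optimum $(\pi',\lambda')$ forces $\lambda'_a(u_a-f^{\mathrm{sl}}_a)=0$. Since $f^{\mathrm{sl}}_a<u_a$ for every arc, we get $\lambda'=0$. It follows that $\pi'$ satisfies $A_G^\top\pi'\le c$, which is uncapacitated dual feasibility. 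Its objective is $\ip{b}{\pi'}=c^\top f^{\mathrm{sl}}$, the uncapacitated optimum, so by LP duality $\pi'$ is an optimal uncapacitated potential. Applying this to the particular dual $(\pi,\lambda^+)$ constructed above gives the final claim of the lemma.
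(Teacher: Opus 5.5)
Your proposal is correct and follows the paper's proof essentially step for step. It uses the augmentation argument for the absence of negative residual cycles, the forward/reverse residual-arc inequalities that give the two complementary-slackness relations and hence a zero duality gap, and complementary slackness against the strictly slack optimum $f^{\mathrm{sl}}$ to force $\lambda^+=0$; the only cosmetic difference is that you cite LP complementary slackness in the last step, whereas the paper writes out the gap identity $\sum_a\bar\sigma_a f^{\mathrm{sl}}_a+\sum_a\bar\lambda^+_a(u_a-f^{\mathrm{sl}}_a)=0$ explicitly.
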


\begin{proof}
If $G_f$ had a negative-cost directed cycle, augmenting $f$ by a sufficiently small positive amount along that cycle would preserve $A_Gf=b$, keep the flow inside the bounds $0\le f\le u$, and strictly decrease $c^\top f$. This contradicts the optimality of $f$. Hence the distance labels are finite and well-defined after adding the zero-cost super-source.

Fix an original arc $a=(p,q)$. If the forward residual arc is present, then the distance labels give
\[
\pi_q-\pi_p\le c_a,
\]
or equivalently $\bigl(A_G^\top\pi\bigr)_a\le c_a$. If the reverse residual arc is present, then
\[
\pi_p-\pi_q\le -c_a,
\]
or equivalently $\bigl(A_G^\top\pi\bigr)_a\ge c_a$.

By the definition of $\lambda^+$, the inequalities
\[
\bigl(A_G^\top\pi\bigr)_a-\lambda^+_a\le c_a,
\qquad
\lambda^+_a\ge0
\]
hold for all arcs. Define the lower-bound slack
\[
\sigma_a:=c_a-\bigl(A_G^\top\pi\bigr)_a+\lambda^+_a\ge0.
\]
The residual inequalities imply the complementary slackness relations
\[
\sigma_a f_a=0,
\qquad
\lambda^+_a(u_a-f_a)=0
\qquad (a\in A).
\]
Indeed, if $f_a>0$, the reverse residual arc is present and gives
$\bigl(A_G^\top\pi\bigr)_a\ge c_a$, which forces $\sigma_a=0$; if $f_a<u_a$, the forward residual arc is present and gives
$\bigl(A_G^\top\pi\bigr)_a\le c_a$, which forces $\lambda^+_a=0$.

Since
\[
c=A_G^\top\pi+\sigma-\lambda^+,
\]
we obtain
\[
c^\top f
=
\pi^\top A_Gf+\sigma^\top f-(\lambda^+)^\top f
=
\ip{b}{\pi}-(\lambda^+)^\top u,
\]
where the last step uses the complementary slackness relations. Therefore the primal and dual objective values coincide, and $(\pi,\lambda^+)$ is dual optimal.

For the final claim, let $(\bar\pi,\bar\lambda^+)$ be any optimal capacitated dual solution and set
\[
\bar\sigma_a:=c_a-\bigl(A_G^\top\bar\pi\bigr)_a+\bar\lambda^+_a\ge0.
\]
The strictly slack flow $f^{\mathrm{sl}}$ is optimal for the uncapacitated problem and feasible for the capacitated problem, so it is also optimal for the capacitated problem. Strong duality gives
\[
0
=
c^\top f^{\mathrm{sl}}
-
\left(
\ip{b}{\bar\pi}
-\sum_{a\in A}u_a\bar\lambda^+_a
\right)
=
\sum_{a\in A}\bar\sigma_a f^{\mathrm{sl}}_a
+
\sum_{a\in A}\bar\lambda^+_a\left(u_a-f^{\mathrm{sl}}_a\right).
\]
Every term on the right is nonnegative, and $u_a-f^{\mathrm{sl}}_a>0$ for every arc. Hence $\bar\lambda^+_a=0$ for all $a$. Applying this to the extracted optimal pair proves $\lambda^+=0$. Thus $\pi$ satisfies $A_G^\top\pi\le c$ and achieves the common uncapacitated/capacitated optimum, so it is an optimal uncapacitated dual potential.
\end{proof}

\begin{corollary}[Support maximizer from lifted-flow potentials]
\label{cor:support-maximizer-from-lifted-potentials}
Fix $r\in\R_{\ge0}^E$ and $s\in\R^V$ with $\1^\top s=0$, and let $b^\uparrow$ be the lifted demand vector obtained from $s$. Give the lifted graph the linear support costs
\[
c_a:=
\begin{cases}
r_e, & a=(e^-,e^+)\text{ for some }e\in E,\\
0, & \text{otherwise}.
\end{cases}
\]
Suppose that $\pi\in\R^{V^\uparrow}$ satisfies
\[
\bigl((A^\uparrow)^\top\pi\bigr)_a\le c_a
\qquad (a\in A^\uparrow)
\]
and
\[
\ip{b^\uparrow}{\pi}=L_s(r).
\]
Then the restriction of $\pi$ to the original vertices, shifted by a constant so that it lies in $\Xzero$, is an optimal solution of the support maximization problem defining $L_s(r)$.
\end{corollary}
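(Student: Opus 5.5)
The plan is to verify directly that the shifted restriction $x$ is feasible for the support problem \eqref{eq:support-primal} and attains the value $L_s(r)$. This avoids any appeal to uniqueness or to the specific potential-extraction procedure. Let $\bar x:=\pi|_V$, let $c_0:=\sum_u d_u\pi_u/\sum_u d_u$, and set $x:=\bar x-c_0\1$, so that $\ip{Dx}{\1}=0$ and hence $x\in\Xzero$.

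\emph{Feasibility.} We read the dual constraints arc by arc, using the incoming-minus-outgoing convention $((A^\uparrow)^\top\pi)_{(p,q)}=\pi_q-\pi_p$.
\begin{itemize}
\item For a transport arc $(e^+,v)$ with cost $0$, we get $\pi_v\le\pi_{e^+}$.
\item For a transport arc $(v,e^-)$, we get $\pi_{e^-}\le\pi_v$.
\item For the quadratic arc $(e^-,e^+)$, we get $\pi_{e^+}-\pi_{e^-}\le r_e$.
\end{itemize}
Chaining these gives, for all $u,v\in e$,
\[
\bar x_u-\bar x_v\le \pi_{e^+}-\pi_{e^-}\le r_e ,
\]
so $R_e(\bar x)\le r_e$. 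Since $R_e$ is invariant under constant shifts, the same bound holds for $x$.

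\emph{Objective value.} Here $b^\uparrow$ vanishes on the auxiliary nodes and equals $s$ on $V$. Hence $\ip{b^\uparrow}{\pi}=\ip{s}{\bar x}$. Because $\1^\top s=0$, we also have $\ip{s}{x}=\ip{s}{\bar x}-c_0\,\1^\top s=\ip{s}{\bar x}=L_s(r)$. So $x$ is feasible for the support maximization and its objective equals the optimal value $L_s(r)$. By definition of $L_s(r)$ as a maximum, $x$ is therefore optimal.

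\emph{Main point to check.} The only delicate step is the sign convention in the dual constraint. With the paper's convention, the three arc inequalities come out oriented exactly so that they sandwich the vertex potentials of $e$ between $\pi_{e^-}$ and $\pi_{e^+}$. If the convention were reversed, the chain would instead bound $\bar x_v-\bar x_u$, and the argument would need $-\pi$ in place of $\pi$. Checking against Lemma~\ref{lem:residual-potentials-capacitated-mcf}, which uses the identical convention, confirms the orientation used above. No LP duality is needed beyond the hypothesis $\ip{b^\uparrow}{\pi}=L_s(r)$, which is supplied by Proposition~\ref{prop:support-lift}.
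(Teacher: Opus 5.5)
Your proof is correct and follows the paper's argument essentially step for step. The three arc constraints sandwich each vertex potential of $e$ between $\pi_{e^-}$ and $\pi_{e^+}$, which gives $R_e(x)\le r_e$; the identity $\ip{b^\uparrow}{\pi}=\ip{s}{x}$, which is unchanged by the $\Xzero$ shift because $\1^\top s=0$, then shows that $x$ attains the value $L_s(r)$.
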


\begin{proof}
For every hyperedge $e$ and every $v\in e$, the zero-cost arc $(e^+,v)$ gives
\[
\pi_v-\pi_{e^+}\le 0,
\]
and the zero-cost arc $(v,e^-)$ gives
\[
\pi_{e^-}-\pi_v\le 0.
\]
The support-cost arc $(e^-,e^+)$ gives
\[
\pi_{e^+}-\pi_{e^-}\le r_e.
\]
Therefore, for any $u,v\in e$,
\[
\pi_u-\pi_v
\le
\pi_{e^+}-\pi_{e^-}
\le r_e.
\]
Exchanging $u$ and $v$ shows that the vertex restriction $x=(\pi_v)_{v\in V}$ satisfies
\[
R_e(x)\le r_e
\qquad (e\in E).
\]
Thus the shifted restriction is feasible for the support maximization problem. Since $b^\uparrow$ is supported on the original vertices,
\[
\ip{b^\uparrow}{\pi}=\ip{s}{x}.
\]
Shifting $x$ by a constant to enforce $x\in\Xzero$ does not change this objective because $\1^\top s=0$, and it does not change any edge range. Hence the shifted restriction attains value $L_s(r)$ and is optimal.
\end{proof}

\begin{corollary}[Support queries on the same lifted graph]
\label{cor:same-lift-support-oracle}
The lifted graph $G^\uparrow$ constructed in Proposition~\ref{prop:lifted-flow-equivalence} depends only on the hypergraph $H$, not on the query vector $r$. For every budget vector $r\in \R_{\ge 0}^E$, the support value $L_s(r)$ is obtained by solving a min-cost-flow problem with linear arc costs on this same $O(P)$-arc graph, with zero costs on the transport arcs and cost $r_e$ only on the arc $(e^-,e^+)$. By Proposition~\ref{prop:support-finite-capacity}, support queries reduce to capacitated min-cost flow after choosing arc capacities larger than $\norm{s}_1/2$.
\end{corollary}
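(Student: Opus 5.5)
The corollary combines three earlier ingredients: the independence of the lifted graph from $r$, the support identity of Proposition~\ref{prop:support-lift}, and the capacity bound of Proposition~\ref{prop:support-finite-capacity}. I would establish its three claims in that order.

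\emph{The graph does not depend on $r$.} I would reread the definition of $G^\uparrow$ in Section~\ref{sec:lifted-flow}. Its node set $V\cup\{e^+,e^-:e\in E\}$ is built from $V$ and $E$ alone. The same holds for its transport and quadratic arcs. The demand vector $b^\uparrow$ depends only on $s$. So $r$ enters only through the cost vector $c^r$, which is $r_e$ on $(e^-,e^+)$ and zero on every transport arc. The arc count $m^\uparrow=2P+|E|\le 3P$ was already recorded in Section~\ref{sec:lifted-flow}.

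\emph{The support value is a min-cost flow on this graph.} I would invoke \eqref{eq:support-lifted-flow} of Proposition~\ref{prop:support-lift} directly. Its hypotheses are $r\ge 0$, $\1^\top s=0$ and connectedness, which are the standing assumptions here. It gives $L_s(r)$ as the optimum of exactly this linear-cost flow problem. If one also wants the maximizer, Corollary~\ref{cor:support-maximizer-from-lifted-potentials} supplies it from an optimal dual potential.

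\emph{Reduction to capacitated min-cost flow.} I would apply Proposition~\ref{prop:support-finite-capacity} with $U_0=\norm{s}_1/2$. It says that any capacities $u_a>U_0$ leave the optimum unchanged. To keep the dual-potential interpretation as well, I would cite Lemma~\ref{lem:residual-potentials-capacitated-mcf}. The acyclic optimum $f^\star$ with $f^\star_a\le U_0<u_a$ is strictly slack, so the lemma shows that potentials extracted from the capacitated solve are optimal uncapacitated potentials.

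None of these steps is a real obstacle. The only care needed is to state the capacities with strict inequality, $u_a>\norm{s}_1/2$, so that the slackness hypothesis of the lemma holds exactly.
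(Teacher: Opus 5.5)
Your proposal is correct and follows the paper's route. The paper's proof is the one-line observation that the corollary follows immediately from Proposition~\ref{prop:support-lift} and Proposition~\ref{prop:support-finite-capacity}, and these are exactly the two ingredients you combine. Your extra appeal to Lemma~\ref{lem:residual-potentials-capacitated-mcf} for the dual potentials is not needed for the statement as written, but it is accurate and matches how the paper later uses the capacitated solve in Theorem~\ref{thm:primal-recovery}.
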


\begin{proof}
Immediate from Proposition~\ref{prop:support-lift} and Proposition~\ref{prop:support-finite-capacity}.
\end{proof}
\section{Recovering a Primal Solution and Bounding the Primal--Dual Gap}
\label{sec:primal-recovery}

This section carries out Step~5 of the roadmap. It takes a feasible lifted flow $f$ given by the first-stage oracle, its mass vector $\mu$, and the induced dual point $\eta$ from Theorem~\ref{thm:black-box-dual-solver}, together with the support-query description from Theorem~\ref{thm:mass-support}. The output is an explicit rational primal point $x$ and an explicit dyadic dual certificate $\widehat\eta$ with the stated additive guarantees, without changing the asymptotic running time.

Algorithm~\ref{alg:certified-poisson} below summarizes the complete certified solver. The technical recovery analysis focuses on three finite-precision issues inside that solver:
\begin{enumerate}[label=(\roman*),leftmargin=1.5em]
\item read the quadratic-arc mass vector $\mu_e=f_{(e^-,e^+)}$ from the first-stage lifted flow to sufficiently high absolute precision;
\item use the exact analytical budget $r_e=\mu_e/w_e$ only for the error analysis, and solve the algorithmic support problem with an upward-rounded dyadic budget $\hat r_e\ge r_e$ on the same lifted graph;
\item convert the resulting support maximizer into an additive bound on the primal objective.
\end{enumerate}
The first subsection supplies the key inequalities behind this pipeline, and the second explains how to interpret the resulting objective error as a nonlinear energy-distance.

\subsection{Recovering a primal point and bounding the gap}
\label{subsec:recovery-certification}

The first question is why the choice $r_e=\mu_e/w_e$ is the right one. The next lemma answers this point: if $\mu$ is nearly optimal for the quadratic mass objective, then the support problem with budget $r$ already carries essentially the same near-optimality information.
\begin{lemma}[Support gap from a near-optimal mass vector]
\label{lem:support-gap}
Let $\mu\in\mathcal M(s)$, set
\[
r_e:=\frac{\mu_e}{w_e}
\qquad (e\in E),
\]
and let $\mu^\star$ be any minimizer of $q$ over $\mathcal M(s)$. Let $\bar\mu$ be any minimizer of $\sum_e r_e\nu_e$ over $\mathcal M(s)$, and define
\[
\delta:=q(\mu)-q(\mu^\star).
\]
Then
\[
0\le 2q(\mu)-L_s(r)=\sum_{e\in E} r_e(\mu_e-\bar\mu_e)
\le \delta + \sqrt{2\delta}\,\norm{\mu-\bar\mu}_{w^{-1}}.
\]
If $x_r\in\Xzero$ is any maximizer of $L_s(r)$, then
\[
\mathcal P(x_r)-\mathrm{OPT}
\le
\sqrt{2\delta}\,\norm{\mu-\bar\mu}_{w^{-1}}.
\]
\end{lemma}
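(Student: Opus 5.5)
By Proposition~\ref{prop:support-lift}, $L_s(r)=\min_{\nu\in\mathcal M(s)}\sum_e r_e\nu_e=\sum_e r_e\bar\mu_e$. Since $\mu\in\mathcal M(s)$, we have $\sum_e r_e\mu_e\ge L_s(r)$. By two-homogeneity, $\sum_e r_e\mu_e=\sum_e\mu_e^2/w_e=2q(\mu)$. This gives both the identity $2q(\mu)-L_s(r)=\sum_e r_e(\mu_e-\bar\mu_e)$ and its nonnegativity.

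For the upper bound I would use the exact quadratic expansion of $q$ along the segment from $\mu$ to $\bar\mu$, which stays in $\mathcal M(s)$ by convexity. The set $\mathcal M(s)$ is convex because it is a linear projection of a polyhedron. For $t\in[0,1]$, put $\mu_t:=\mu+t(\bar\mu-\mu)\in\mathcal M(s)$. Since $q$ is quadratic with $\nabla q(\mu)=r$,
\[
q(\mu_t)=q(\mu)-t\,g+\tfrac{t^2}{2}\norm{\mu-\bar\mu}_{w^{-1}}^2,
\qquad g:=\sum_e r_e(\mu_e-\bar\mu_e)\ge 0 .
\]
Optimality of $\mu^\star$ gives $q(\mu_t)\ge q(\mu^\star)=q(\mu)-\delta$, so $t g\le \delta+\tfrac{t^2}{2}\Delta^2$ for all $t\in[0,1]$, where $\Delta:=\norm{\mu-\bar\mu}_{w^{-1}}$. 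Taking $t=1$ gives $g\le\delta+\Delta^2/2$, which is not the claimed bound. So I would instead optimize over $t$. If $\Delta\ge\sqrt{2\delta}$, choose $t=\sqrt{2\delta}/\Delta\in[0,1]$. Then $g\le \delta/t+t\Delta^2/2=\sqrt{2\delta}\,\Delta\le\delta+\sqrt{2\delta}\,\Delta$. If $\Delta<\sqrt{2\delta}$, take $t=1$ to get $g\le\delta+\Delta^2/2\le\delta+\sqrt{2\delta}\,\Delta/\sqrt2\cdot\sqrt2/2\cdot\ldots$; more simply, $\Delta^2/2<\sqrt{2\delta}\,\Delta/2\cdot\sqrt2\le\sqrt{2\delta}\,\Delta$. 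In both cases $g\le\delta+\sqrt{2\delta}\,\Delta$. The case $\Delta=0$ is trivial. The main care is this case split, which keeps the parameter $t$ in $[0,1]$, the range where $\mu_t$ remains feasible.

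For the primal bound, let $x_r$ maximize $L_s(r)$. Since $R_e(x_r)\le r_e$, we get $\mathcal P(x_r)\le\tfrac12\sum_e w_e r_e^2-L_s(r)=q(\mu)-L_s(r)$, using $\tfrac12 w_e r_e^2=\mu_e^2/(2w_e)$. By Theorem~\ref{thm:poisson-fenchel-dual} and Proposition~\ref{prop:mass-projection}, $\mathrm{OPT}=-\mathcal D^\star=-q(\mu^\star)=-(q(\mu)-\delta)$. Hence
\[
\mathcal P(x_r)-\mathrm{OPT}\le 2q(\mu)-L_s(r)-\delta\le\sqrt{2\delta}\,\Delta,
\]
where the last step substitutes the support-gap bound.
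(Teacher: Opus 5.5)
Your proof is correct, and you reach the key upper bound by a genuinely different route from the paper.

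The paper splits $r=\mu/w$ as $\mu^\star/w+(\mu-\mu^\star)/w$. It bounds the first piece by $\delta$, using the first-order (variational) optimality condition of $\mu^\star$ over $\mathcal M(s)$ together with the identity $\delta=\sum_e\mu_e^\star(\mu_e-\mu_e^\star)/w_e+\tfrac12\norm{\mu-\mu^\star}_{w^{-1}}^2$. It bounds the second piece by Cauchy--Schwarz, using $\norm{\mu-\mu^\star}_{w^{-1}}\le\sqrt{2\delta}$.

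You instead run an exact line search along the feasible segment $[\mu,\bar\mu]\subseteq\mathcal M(s)$. You use only the optimal \emph{value} $q(\mu^\star)$, never the point $\mu^\star$ or its optimality condition. This is the standard ``Frank--Wolfe gap versus suboptimality'' argument.

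The two routes buy different things:
\begin{itemize}
\item The paper's route needs no case analysis, and it makes explicit the supporting-hyperplane picture with normal $\nabla q(\mu^\star)$ that drives the recovery stage.
\item Your route is slightly sharper. It gives $g\le\sqrt{2\delta}\,\Delta$ when $\Delta\ge\sqrt{2\delta}$, and $g\le\delta+\Delta^2/2$ otherwise. The primal consequence therefore improves to $\sqrt{2\delta}\,\Delta-\delta$ or $\Delta^2/2$, respectively.
\end{itemize}

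Two small repairs are needed:
\begin{itemize}
\item When $\delta=0$ and $\Delta>0$, your choice $t=\sqrt{2\delta}/\Delta$ equals $0$, and you cannot divide by it. Instead note that $g\le t\Delta^2/2$ for all $t\in(0,1]$, and let $t\downarrow0$ to get $g\le 0$.
\item The garbled chain in your second case should simply read $\Delta^2/2<\sqrt{2\delta}\,\Delta/2\le\sqrt{2\delta}\,\Delta$, since $\Delta<\sqrt{2\delta}$.
\end{itemize}

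Your nonnegativity/identity step and the final primal bound coincide with the paper's.
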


\begin{proof}
By Proposition~\ref{prop:support-lift},
\[
L_s(r)=\sum_{e\in E} r_e\bar\mu_e,
\]
and since $\bar\mu$ minimizes $\sum_e r_e\nu_e$ over $\mathcal M(s)$ while $\mu\in\mathcal M(s)$,
\[
L_s(r)=\sum_{e\in E} r_e\bar\mu_e
\le \sum_{e\in E} r_e\mu_e
=2q(\mu),
\]
which gives the displayed nonnegativity. The equality
\[
2q(\mu)-L_s(r)=\sum_{e\in E} r_e(\mu_e-\bar\mu_e)
\]
is immediate from $2q(\mu)=\sum_e r_e\mu_e$.

For the upper bound, note that $q(z)=\frac12\sum_e z_e^2/w_e$, so
\[
r_e=\frac{\mu_e}{w_e}=\frac{\mu_e^\star}{w_e}+\frac{\mu_e-\mu_e^\star}{w_e}.
\]
Therefore
\[
\sum_{e\in E} r_e(\mu_e-\bar\mu_e)
=
\sum_{e\in E}\frac{\mu_e^\star(\mu_e-\bar\mu_e)}{w_e}
+
\sum_{e\in E}\frac{(\mu_e-\mu_e^\star)(\mu_e-\bar\mu_e)}{w_e}.
\]
Because $\mu^\star$ minimizes the convex function $q$ over the convex set $\mathcal M(s)$,
\[
\sum_{e\in E}\frac{\mu_e^\star(\nu_e-\mu_e^\star)}{w_e}\ge 0
\qquad (\forall\nu\in\mathcal M(s)).
\]
Applying this first with $\nu=\bar\mu$ and then with $\nu=\mu$ gives
\[
\sum_{e\in E}\frac{\mu_e^\star(\mu_e-\bar\mu_e)}{w_e}
\le
\sum_{e\in E}\frac{\mu_e^\star(\mu_e-\mu_e^\star)}{w_e}
\le \delta,
\]
because
\[
\delta
=q(\mu)-q(\mu^\star)
=\sum_{e\in E}\frac{\mu_e^\star(\mu_e-\mu_e^\star)}{w_e}
+\frac12\norm{\mu-\mu^\star}_{w^{-1}}^2.
\]
The same identity also gives
\[
\norm{\mu-\mu^\star}_{w^{-1}}\le \sqrt{2\delta}.
\]
Hence Cauchy--Schwarz yields
\[
\sum_{e\in E}\frac{(\mu_e-\mu_e^\star)(\mu_e-\bar\mu_e)}{w_e}
\le
\norm{\mu-\mu^\star}_{w^{-1}}\,\norm{\mu-\bar\mu}_{w^{-1}}
\le
\sqrt{2\delta}\,\norm{\mu-\bar\mu}_{w^{-1}},
\]
proving the first claim.

Now let $x_r$ maximize $L_s(r)$. Since $R_e(x_r)\le r_e$ for every edge,
\[
\frac12\sum_{e\in E} w_e R_e(x_r)^2
\le
\frac12\sum_{e\in E} w_e r_e^2
=q(\mu).
\]
Therefore
\[
\mathcal P(x_r)
\le q(\mu)-L_s(r).
\]
Using $\mathrm{OPT}=-q(\mu^\star)$ from Theorem~\ref{thm:poisson-fenchel-dual}, we obtain
\[
\mathcal P(x_r)-\mathrm{OPT}
\le
q(\mu)-L_s(r)+q(\mu^\star)
=
\bigl(2q(\mu)-L_s(r)\bigr)-\delta.
\]
Applying the previous bound gives the result.
\end{proof}

To turn the lemma into an algorithm, we also need polynomial bounds on the budgets and on the support maximizers that arise from them.
\begin{lemma}[Polynomial bounds for the support stage]
\label{lem:support-poly}
Assume the hypotheses of Theorem~\ref{thm:black-box-dual-solver}. Let $\mu\in\mathcal M(s)$ satisfy $q(\mu)=P^{O(1)}$, and define $r_e:=\mu_e/w_e$.
Then the following hold.
\begin{enumerate}
\item $\sum_{e\in E} r_e = P^{O(1)}$ and $\sum_{e\in E}\mu_e=P^{O(1)}$.
\item There exists a minimizer $\bar\mu$ of $\sum_e r_e\nu_e$ over $\mathcal M(s)$ such that $\norm{\bar\mu}_{w^{-1}}=P^{O(1)}$.
\item For every $\rho\in[0,1]$, every $x\in\Xzero$ satisfying $R_e(x)\le r_e+\rho$ for all $e\in E$ also satisfies $\norm{x}_\infty=P^{O(1)}$.
\end{enumerate}
\end{lemma}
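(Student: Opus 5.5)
Each item follows from the polynomial bounds $\|s\|_\infty\le P^{K_0}$ and $P^{-K_0}\le w_e\le P^{K_0}$, together with elementary facts about the lifted graph. I take the items in order.

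\emph{Item 1.} From $q(\mu)=\frac12\sum_e\mu_e^2/w_e=P^{O(1)}$, each mass satisfies $\mu_e\le\sqrt{2w_e q(\mu)}=P^{O(1)}$. Hence $r_e=\mu_e/w_e\le P^{K_0}\mu_e=P^{O(1)}$. Summing over at most $|E|\le P$ edges gives both claimed sums, since the per-edge bounds only lose a factor $P$.

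\emph{Item 2.} By Proposition~\ref{prop:support-lift} the linear problem $\min_{\nu\in\mathcal M(s)}\sum_e r_e\nu_e$ is the lifted min-cost flow, and its optimum is attained. By Proposition~\ref{prop:support-finite-capacity} some optimal flow $f^\star$ satisfies $f^\star_a\le U_0=\frac12\|s\|_1\le P^{K_0+1}$ on every arc. I take $\bar\mu_e:=f^\star_{(e^-,e^+)}$; this lies in $\mathcal M(s)$ and is a minimizer of the linear objective. Then
\[
\norm{\bar\mu}_{w^{-1}}^2\le |E|\,P^{K_0}U_0^2=P^{O(1)}.
\]

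\emph{Item 3.} This is the part needing a small argument, because $\Xzero$ fixes only a weighted average rather than a coordinate. Fix $u,v\in V$. By connectedness there is a hypergraph path $u=p_0,p_1,\dots,p_k=v$ through distinct edges $e_1,\dots,e_k$ with $p_{i-1},p_i\in e_i$. Each step satisfies $|x_{p_{i-1}}-x_{p_i}|\le R_{e_i}(x)\le r_{e_i}+\rho$. Summing along the path gives
\[
|x_u-x_v|\le \sum_e r_e+|E|\rho=:\Delta=P^{O(1)},
\]
using Item 1 and $\rho\le1$. Since $x\in\Xzero$, $\sum_u d_ux_u=0$ with all $d_u>0$. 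So $x$ cannot have all coordinates strictly positive or all strictly negative, and some coordinate lies on each side of $0$. Therefore every coordinate satisfies $|x_v|\le\Delta$, i.e.\ $\norm{x}_\infty\le\Delta=P^{O(1)}$.

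I do not expect a real obstacle here. The only point to check is Item 2: the minimizer must be chosen as the acyclic optimal flow from Proposition~\ref{prop:support-finite-capacity}, not an arbitrary minimizer, because the linear objective can have unbounded optimal faces when some $r_e=0$.
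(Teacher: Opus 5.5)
Your proof is correct and follows essentially the same route as the paper. The only variation is in Item~1, where you use per-edge bounds $\mu_e\le\sqrt{2w_eq(\mu)}$ instead of the paper's Cauchy--Schwarz estimate, which is immaterial. Item~2 rests on the same acyclic optimal support flow with $f_a\le\frac12\norm{s}_1$. Item~3 is the same path-summation bound combined with the observation that $\sum_u d_ux_u=0$ forces $0$ to lie between $\min_v x_v$ and $\max_v x_v$.
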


\begin{proof}
For the first claim,
\[
\sum_{e\in E} r_e
=\sum_{e\in E}\frac{\mu_e}{w_e}
=\sum_{e\in E}\frac{\mu_e}{\sqrt{w_e}}\frac{1}{\sqrt{w_e}}
\le
\sqrt{\sum_{e\in E}\frac{\mu_e^2}{w_e}}\,\sqrt{\sum_{e\in E}\frac1{w_e}}
=\sqrt{2q(\mu)}\,\sqrt{\sum_{e\in E}\frac1{w_e}}
=P^{O(1)}.
\]
Likewise,
\[
\sum_{e\in E}\mu_e
\le
\sqrt{\sum_{e\in E}\frac{\mu_e^2}{w_e}}\,\sqrt{\sum_{e\in E} w_e}
=\sqrt{2q(\mu)}\,\sqrt{\sum_{e\in E} w_e}
=P^{O(1)}.
\]

For the second claim, use Proposition~\ref{prop:support-lift} and choose an optimal lifted min-cost flow $f$ for \eqref{eq:support-lifted-flow}. Decompose $f$ into source-sink paths and directed cycles. Because all arc costs are nonnegative, deleting every cycle preserves feasibility and does not increase the objective, so there is an optimal cycle-free flow. In such a decomposition, the total path weight is the total positive demand,
\[
\sum_{v:s_v>0} s_v = \frac12\norm{s}_1,
\]
because $\1^\top s=0$. Hence every arc flow is at most $\norm{s}_1/2$, and in particular
\[
\bar\mu_e:=f_{(e^-,e^+)}\le \frac12\norm{s}_1
\qquad (\forall e\in E).
\]
Therefore
\[
\norm{\bar\mu}_{w^{-1}}^2
=\sum_{e\in E}\frac{\bar\mu_e^2}{w_e}
\le \frac14\norm{s}_1^2\sum_{e\in E}\frac1{w_e}
=P^{O(1)}.
\]

For the third claim, fix $u,v\in V$. Since the hypergraph is connected, there is a hypergraph path $e_1,\dots,e_k$ from $u$ to $v$ with $k\le |V|-1\le P$. Along that path,
\[
|x_u-x_v|\le \sum_{i=1}^k R_{e_i}(x)
\le \sum_{e\in E}(r_e+\rho)
=P^{O(1)}.
\]
Thus the range of $x$ over all vertices is $P^{O(1)}$. Because $x\in\Xzero$, its $D$-weighted average is zero, so every coordinate lies between the minimum and maximum values. Hence $\norm{x}_\infty=P^{O(1)}$.
\end{proof}

The algorithmic support query will use rounded data $(\hat r,\hat s)$ lying on a common $\rho$-grid. Proposition~\ref{prop:support-finite-capacity} tells us that sufficiently slack capacities do not change the support-flow optimum, but the exact min-cost-flow routine also needs the rounded instance to have controlled integer magnitudes after clearing denominators. The next lemma packages this denominator clearing and the resulting exact-solve bound for the rounded support stage.

\begin{lemma}[Rounded support-flow implementation bound]
\label{lem:rounded-support-exact-mcf}
Let $\rho=2^{-M}$ with $\log(1/\rho)=\log^{O(1)}P$. Let
\[
\hat r\in\rho\Z_{\ge0}^E,
\qquad
\hat s\in\rho\Z^V,
\qquad
\1^\top\hat s=0,
\]
satisfy
\[
\sum_{e\in E}\hat r_e=P^{O(1)},
\qquad
\norm{\hat s}_1=P^{O(1)}.
\]
Let $\hat b^\uparrow$ be the lifted demand vector obtained from $\hat s$. Consider the rounded support-flow instance on $G^\uparrow$ with costs
\[
c_{(e^-,e^+)}=\hat r_e\quad(e\in E),
\qquad
c_a=0\text{ on all transport arcs},
\]
and uniform capacities
\[
u_a:=\norm{\hat s}_1+\rho
\qquad (a\in A^\uparrow).
\]
Then, after scaling demands, capacities, costs, and flows by $\rho^{-1}$, the capacitated problem
\[
\min\{c^\top f:A^\uparrow f=\hat b^\uparrow,\ 0\le f\le u\}
\]
becomes an equivalent integral min-cost-flow instance on $O(P)$ arcs whose integer demand, capacity, and cost magnitudes are all bounded by
\[
\exp(\log^{O(1)}P).
\]
Consequently, this instance can be solved exactly, and a residual dual potential can be extracted, in $P^{1+o(1)}$ time with high probability.
\end{lemma}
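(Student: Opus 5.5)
The argument is a direct bookkeeping check followed by an appeal to an exact min-cost-flow algorithm. I will verify that the scaled data are integral with controlled magnitudes, that the scaled instance is equivalent to the original, and then invoke an almost-linear-time exact solver. The last step is the residual-potential extraction of Lemma~\ref{lem:residual-potentials-capacitated-mcf}.

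\textbf{Integrality.} Since $\hat r\in\rho\Z_{\ge0}^E$ and $\hat s\in\rho\Z^V$ with $\rho=2^{-M}$, multiplying by $\rho^{-1}=2^M$ makes every cost $\rho^{-1}\hat r_e$ and every demand $\rho^{-1}\hat s_v$ an integer. For the capacities, $\norm{\hat s}_1$ is a sum of elements of $\rho\Z$, so it lies in $\rho\Z$, and hence $u_a=\norm{\hat s}_1+\rho$ also lies in $\rho\Z$. The scaled capacities are therefore integral as well.

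\textbf{Magnitudes.} Each integer magnitude is at most $\rho^{-1}\cdot P^{O(1)}$. This uses $\sum_e\hat r_e=P^{O(1)}$, $\norm{\hat s}_1=P^{O(1)}$, and $u_a\le \norm{\hat s}_1+1$. Since $\log(1/\rho)=\log^{O(1)}P$, every magnitude is at most $\exp(\log^{O(1)}P)$.

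\textbf{Equivalence.} I will scale flows by $\rho^{-1}$ as well, writing $f=\rho g$. The constraint $A^\uparrow f=\hat b^\uparrow$ becomes $A^\uparrow g=\rho^{-1}\hat b^\uparrow$, and $0\le f\le u$ becomes $0\le g\le\rho^{-1}u$. The objective becomes $c^\top f=\rho\,(\rho^{-1}c)^\top g$, i.e.\ a fixed positive multiple $\rho^2=\rho\cdot\rho$ of the integral-cost objective $(\rho^{-1}c)^\top g$ when the costs are also scaled. Optimal solutions therefore correspond bijectively. Dual potentials for the integral instance rescale back as $\pi=\rho\,\pi^{\mathrm{int}}$, because the dual constraint $(A^\uparrow)^\top\pi\le c$ is homogeneous under simultaneous scaling of $\pi$ and $c$. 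Feasibility of the capacitated instance holds because $\1^\top\hat s=0$ and connectedness give the path flow of Lemma~\ref{lem:lift-interiorization}. By Proposition~\ref{prop:support-finite-capacity} applied with demand $\hat s$, some optimum has all arc flows at most $\norm{\hat s}_1/2<u_a$. The capacities are therefore slack, and this slackness is what Lemma~\ref{lem:residual-potentials-capacitated-mcf} will later need.

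\textbf{Solving.} The scaled instance is an integral capacitated min-cost flow on $m^\uparrow=O(P)$ arcs whose magnitudes are bounded by $U=\exp(\log^{O(1)}P)$. I will apply the almost-linear-time exact min-cost-flow algorithm of Chen et al.\ (the linear-cost case of their framework), which runs in $m^{1+o(1)}\log U$ time with high probability. Since $\log U=\log^{O(1)}P$, this is $P^{1+o(1)}$.

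The main obstacle is the interface with that black box. Its guarantee is an exact integral optimal flow, but it does not directly provide a dual potential. I will therefore extract the potential myself, following Lemma~\ref{lem:residual-potentials-capacitated-mcf}: build the residual graph and compute shortest-path distances from a super-source. This graph has no negative cycles but does contain negative arc costs, so I need a shortest-path computation with negative weights in almost-linear time. I plan to handle this either by reducing negative-weight single-source shortest paths to a min-cost-flow instance on the same graph, or by citing that the Chen et al.\ interior-point method returns an optimal dual, as in the appendix conventions. In either case the cost is $P^{1+o(1)}$ with integer distance labels of magnitude at most $m\cdot U$, so the labels still have $\log^{O(1)}P$ bits.
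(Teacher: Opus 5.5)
Your proposal is correct and follows the paper's proof step for step: the same integrality check (including $u_a/\rho=\norm{\hat s}_1/\rho+1\in\Z$), the same $\rho^{-1}P^{O(1)}=\exp(\log^{O(1)}P)$ magnitude bound, and Chen et al.'s exact min-cost-flow theorem. For the potential, the paper simply cites Chen et al.'s dual-extraction routine (Lemma~B.9 of the J.~ACM version, cost $O(T_{\mathrm{MCC}})$), which does the job of your residual shortest-path option, so your alternative of claiming the interior-point method directly returns an exact dual is unnecessary. Two cosmetic slips: the scaled objective is $c^\top f=\rho^2(\rho^{-1}c)^\top g$, not $\rho\,(\rho^{-1}c)^\top g$, and Chen et al.'s running time is $m^{1+o(1)}\log U\log C$, which is still $P^{1+o(1)}$ here.
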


\begin{proof}
Because $\hat s\in\rho\Z^V$ and $\hat r\in\rho\Z_{\ge0}^E$, the scaled demands $\hat b^\uparrow/\rho$ and scaled costs $c/\rho$ are integral. Moreover,
\[
\frac{u_a}{\rho}
=
\frac{\norm{\hat s}_1}{\rho}+1
\in\Z
\qquad (a\in A^\uparrow),
\]
so the scaled capacities are integral as well.

We now bound the magnitudes. The scaled demand and capacity magnitudes are at most
\[
\frac{\norm{\hat s}_1+\rho}{\rho}
\le
P^{O(1)}\rho^{-1}+1
=
\exp(\log^{O(1)}P).
\]
For the costs, all transport arcs have zero cost, and for a quadratic arc $(e^-,e^+)$,
\[
\frac{c_{(e^-,e^+)}}{\rho}
=
\frac{\hat r_e}{\rho}
\le
\frac{\sum_{e'\in E}\hat r_{e'}}{\rho}
=
P^{O(1)}\rho^{-1}
=
\exp(\log^{O(1)}P).
\]
Thus, if $U_{\rm int}$ is chosen to dominate the scaled demand and capacity magnitudes, and $C_{\rm int}$ is chosen to dominate the scaled cost magnitudes, with both at least $2$, then
\[
\log U_{\rm int},\log C_{\rm int}
=
\log^{O(1)}P.
\]
At this point we use the magnitude-sensitive dependence in Theorem~1.1 of Chen et al.: for integral min-cost-flow instances with $m$ arcs, integral vertex demands, capacities bounded by $U$, and cost magnitudes bounded by $C$, the exact running time is $m^{1+o(1)}\log U\log C$ with high probability.
Therefore the present scaled instance is solved in time
\[
(m^\uparrow)^{1+o(1)}\log U_{\rm int}\log C_{\rm int}
=
P^{1+o(1)}.
\]
The same magnitude bounds apply to the residual instances used in the dual-extraction routine of Lemma~B.9 of the J.~ACM version of Chen et al. (Lemma~C.9 in earlier arXiv versions), whose work is $O(T_{\rm MCC}(m^\uparrow,C_{\rm int},U_{\rm int}))$ up to the same logarithmic factors.
Hence the potential extraction also costs $P^{1+o(1)}$.
\end{proof}

The support-flow solve above produces the primal potential. The final solver also has to output an explicit dual certificate, not only the first-stage oracle representation of one. The next lemma shows how to materialize the transport coordinates returned by that oracle and repair the small accumulated imbalance so that the returned dyadic dual vector satisfies $B\widehat\eta=s$ exactly while losing only a prescribed additive amount in dual objective value.

\begin{lemma}[Finite materialization and exact repair of the dual certificate]
\label{lem:dual-certificate-repair}
Assume the hypotheses of Theorem~\ref{thm:black-box-dual-solver}, and suppose that the first-stage oracle defines a lifted flow $f$ whose induced dual vector $\eta$ satisfies
\[
\eta\in\mathcal Y,\qquad B\eta=s,\qquad \mathcal D(\eta)=P^{O(1)}.
\]
Let $\Gamma>0$ satisfy $\log(1/\Gamma)=\log^{O(1)}P$. Then, by making $O(P)$ coordinate queries to the first-stage oracle at precision $\Gamma P^{-K_{\rm rep}}$ for a sufficiently large constant $K_{\rm rep}$, one can compute in $P^{1+o(1)}$ time an explicit dyadic vector $\widehat\eta\in\mathcal Y$ such that
\[
B\widehat\eta=s
\]
exactly over the rationals and
\[
\mathcal D(\widehat\eta)\le \mathcal D(\eta)+\Gamma.
\]
Every coordinate of $\widehat\eta$ has bit length $\log^{O(1)}P$.
\end{lemma}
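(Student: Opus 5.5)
The plan is to query each transport coordinate of the first-stage flow to high precision, form the obvious dyadic vector $\widetilde\eta$, and then fix its small errors explicitly. First we fix edge-local zero-sum exactly, then the global balance $B\widehat\eta=s$. Finally we bound the loss in $\mathcal D$ using the polynomial size of $\eta$.

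\textbf{Step 1 (materialize).} For every $e$ and $v\in e$, query $f_{(e^+,v)}$ and $f_{(v,e^-)}$ at precision $\tau:=\Gamma P^{-K_{\rm rep}}$. Set $\widetilde\eta_{ev}$ to the difference of the two answers, so that $|\widetilde\eta_{ev}-(\eta_e)_v|\le 2\tau$. These are $2P$ queries at precision $\exp(-\log^{O(1)}P)$, so by Definition~\ref{def:first-stage-coordinate-oracle-representation} they fit in $P^{1+o(1)}$ time. Each answer is dyadic with $\log^{O(1)}P$ bits, so $\widetilde\eta$ is dyadic on a common grid $2^{-q}$ with $q=\log^{O(1)}P$; we may also round all coordinates to this grid, which keeps the error $O(\tau)$.

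\textbf{Step 2 (edge-local repair).} For each $e$, let $\epsilon_e:=\sum_{v\in e}\widetilde\eta_{ev}$. Since $\1^\top\eta_e=0$, we have $|\epsilon_e|\le 2|e|\tau$. Subtract $\epsilon_e$ from a single designated coordinate $\widetilde\eta_{e v_e}$, which makes each $\widetilde\eta_e\in U_e$ exactly while preserving dyadicity. Now $\sum_v (B\widetilde\eta)_v=0=\1^\top s$, and the residual $z:=s-B\widetilde\eta$ is dyadic, zero-sum, and satisfies $\|z\|_1\le O(P\tau)$. The input $s$ is $L$-dyadic by Definition~\ref{def:dyadic-input}, so we take the grid fine enough to contain $s$ as well.

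\textbf{Step 3 (global repair).} Route $z$ exactly through $\mathcal Y$ along a spanning tree, as in Lemma~\ref{lem:B-range}. Fix a spanning tree of the connectivity structure, for example a BFS tree in the bipartite incidence graph, which has $O(P)$ size. Each tree step from $p_{i-1}$ to $p_i$ through a hyperedge $e_i$ carries a vector $\ee_{p_{i-1}}-\ee_{p_i}\in U_{e_i}$, and we push the subtree sums of $z$ up the tree. Let $\xi$ be the resulting correction. It satisfies $\xi\in\mathcal Y$ and $B\xi=z$ exactly, and it is dyadic because subtree sums of grid values lie on the grid. Its size satisfies $\sum_e\|\xi_e\|_1\le O(|V|\,\|z\|_1)=O(P^2\tau)$, and the routing takes $O(P)$ arithmetic operations on $\log^{O(1)}P$-bit numbers. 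Set $\widehat\eta:=\widetilde\eta+\xi$, so $B\widehat\eta=s$ exactly. The bit length stays $\log^{O(1)}P$ because all magnitudes are $P^{O(1)}$: $\mathcal D(\eta)=P^{O(1)}$ and $w_e\le P^{K_0}$ bound each $\|\eta_e\|_1$.

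\textbf{Step 4 (objective bound).} Let $\Delta_e:=\widehat\eta_e-\eta_e$. Summing the errors from Steps 1–3, we get $\sum_e\|\Delta_e\|_1\le P^{O(1)}\tau$. Then
\[
\frac{\|\widehat\eta_e\|_1^2-\|\eta_e\|_1^2}{8w_e}\le \frac{(2\|\eta_e\|_1+\|\Delta_e\|_1)\|\Delta_e\|_1}{8w_e}\le P^{O(1)}\|\Delta_e\|_1.
\]
This uses $w_e^{-1}\le P^{K_0}$ and $\|\eta_e\|_1=P^{O(1)}$. Summing over $e$ gives $\mathcal D(\widehat\eta)-\mathcal D(\eta)\le P^{O(1)}\tau=P^{O(1)-K_{\rm rep}}\Gamma\le\Gamma$ once $K_{\rm rep}$ is large.

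The main obstacle is Step 3. We need an exact rational correction whose $\ell_1$ size is only polynomially larger than the residual, computed in near-linear time with all entries on a fixed dyadic grid. The spanning-tree routing handles all three requirements at once, but we must check that the grid exponent does not grow. It does not, because only additions of grid numbers occur.
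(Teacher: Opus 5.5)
Your proposal is correct and follows essentially the same route as the paper. Both arguments materialize the transport differences and enforce each edge-local zero sum by adjusting one designated coordinate. Both then route the small zero-sum residual $s-B\widetilde\eta$ exactly along a spanning tree via subtree sums, and bound the loss in $\mathcal D$ by $P^{O(1)}\tau$ using $\|\eta_e\|_1=P^{O(1)}$ and $w_e^{-1}\le P^{K_0}$. Your extra touches are harmless refinements of the paper's argument: rounding to a common dyadic grid that contains $s$, and building the tree in the bipartite incidence graph so its $O(P)$ size is explicit.
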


\begin{proof}
Let $\theta:=\Gamma P^{-K_{\rm rep}}$, where $K_{\rm rep}$ will be chosen large enough after the polynomial bounds below. For every incidence $v\in e$, query the two transport coordinates $f_{(e^+,v)}$ and $f_{(v,e^-)}$ to accuracy $\theta/8$, and let
\[
z_{e,v}:=\widetilde f_{(e^+,v)}-\widetilde f_{(v,e^-)}.
\]
After changing the constant in $\theta$, we may write
\[
|z_{e,v}-(\eta_e)_v|\le \theta
\qquad (v\in e).
\]

First enforce the edge-local zero-sum constraints exactly. For each $e\in E$, fix one representative vertex $a(e)\in e$, and define
\[
\bar\eta_{e,v}:=z_{e,v}\quad (v\in e,\ v\ne a(e)),
\qquad
\bar\eta_{e,a(e)}:=-\sum_{v\in e,\ v\ne a(e)}z_{e,v},
\]
with zero coordinates outside $e$. Then $\bar\eta_e\in U_e$ exactly for every $e$. Since $\eta_e$ itself is also in $U_e$,
\[
\sum_{e\in E}\|\bar\eta_e-\eta_e\|_1\le P^{O(1)}\theta.
\]
Let
\[
\Delta s:=s-B\bar\eta .
\]
Because $s$ is dyadic and $\bar\eta$ is dyadic, $\Delta s$ is dyadic. Also $\mathbf 1^\top \Delta s=0$, since $\mathbf 1^\top s=0$ and every $\bar\eta_e$ has zero sum. Moreover,
\[
\|\Delta s\|_1\le P^{O(1)}\theta.
\]

It remains to repair the global equation. Since $H$ is connected, the graph on $V$ that joins two vertices whenever they lie in a common hyperedge has a spanning tree $T$. Root $T$ at an arbitrary vertex $r_0$. For every non-root vertex $v$, let $p(v)$ be its parent and choose one hyperedge $e(v)$ containing both $v$ and $p(v)$. Let $T_v$ be the subtree rooted at $v$, and define the dyadic number
\[
t_v:=\sum_{u\in T_v}(\Delta s)_u .
\]
Initialize $\zeta=0$. For every non-root $v$, add the transfer
\[
t_v(\mathbf e_v-\mathbf e_{p(v)})
\]
to the $e(v)$-coordinate of $\zeta$. Each such transfer lies in $U_{e(v)}$, so $\zeta\in\mathcal Y$. A telescoping calculation on the rooted tree gives
\[
B\zeta=\Delta s .
\]
Indeed, at a non-root vertex $v$, the contribution is
\[
t_v-\sum_{c:\,p(c)=v}t_c=(\Delta s)_v,
\]
and at the root it is
\[
-\sum_{c:\,p(c)=r_0}t_c=(\Delta s)_{r_0},
\]
using $\sum_u(\Delta s)_u=0$. Also,
\[
\sum_{e\in E}\|\zeta_e\|_1
\le 2\sum_{v\ne r_0}|t_v|
\le 2|V|\|\Delta s\|_1
\le P^{O(1)}\theta .
\]

Now set
\[
\widehat\eta:=\bar\eta+\zeta.
\]
Then $\widehat\eta\in\mathcal Y$ and
\[
B\widehat\eta=B\bar\eta+B\zeta=B\bar\eta+\Delta s=s
\]
exactly. All coordinates are obtained from dyadic query answers and the dyadic input vector $s$ by exact rational arithmetic, so $\widehat\eta$ is dyadic. The bit length is $\log^{O(1)}P$, because the query precision has denominator exponent $\log^{O(1)}P$ and only $O(P)$ additions are performed.

It remains to bound the objective increase. Put $\Delta:=\widehat\eta-\eta$. From the estimates above,
\[
\sum_{e\in E}\|\Delta_e\|_1\le P^{O(1)}\theta.
\]
Since $\mathcal D(\eta)=P^{O(1)}$ and the weights are polynomially bounded, we have
\[
\sum_{e\in E}\|\eta_e\|_1=P^{O(1)}.
\]
Therefore, using $w_e^{-1}\le P^{O(1)}$,
\[
\begin{aligned}
\mathcal D(\widehat\eta)-\mathcal D(\eta)
&\le
\sum_{e\in E}
\frac{
2\|\eta_e\|_1\|\Delta_e\|_1+\|\Delta_e\|_1^2
}{8w_e}  \\
&\le P^{O(1)}\theta .
\end{aligned}
\]
Choosing $K_{\rm rep}$ sufficiently large makes the last quantity at most $\Gamma$. This proves the claim.
\end{proof}

\begin{algorithm}[t]
\caption{Certified cut-based hypergraph Poisson solver}
\label{alg:certified-poisson}
\begin{algorithmic}[1]
\Require A connected weighted hypergraph $H=(V,E,w)$, a dyadic demand $s\in\R^V$ with $\1^\top s=0$, and a target exponent $C>0$.
\Ensure A rational primal point $x\in\Xzero$ and an explicit dyadic dual certificate $\widehat\eta\in\mathcal Y$ with $B\widehat\eta=s$.
\State Build the lifted directed graph $G^\uparrow=(V^\uparrow,A^\uparrow)$ with transport arcs $(e^+,v)$ and $(v,e^-)$ for $v\in e$, and one quadratic arc $(e^-,e^+)$ for each $e\in E$.
\State Run the high-accuracy convex-flow first stage on $G^\uparrow$ to obtain a coordinate-oracle representation of a feasible lifted flow $f$, its masses $\mu_e=f_{(e^-,e^+)}$, and the oracle-represented induced dual point $\eta^{\mathrm{oracle}}$.
\State Query the mass coordinates $\mu_e$, form safe upper masses $\mu_e^\uparrow$, and set $\hat r_e=\rho\left\lceil(\mu_e^\uparrow/w_e)/\rho\right\rceil$.
\State Round $s$ to a zero-sum vector $\hat s\in\rho\Z^V$, then solve the resulting exact finite-capacity support min-cost-flow instance on $G^\uparrow$ with rounded budgets $\hat r$ and demand $\hat s$.
\State Extract an optimal support dual potential from the exact support-flow solution, restrict it to the original vertices, and shift the restriction into $\Xzero$ to obtain $x$.
\State Query all transport coordinates from the first-stage oracle and repair the materialized dyadic vector to obtain $\widehat\eta\in\mathcal Y$ satisfying $B\widehat\eta=s$ exactly.
\State \Return $(x,\widehat\eta)$.
\end{algorithmic}
\end{algorithm}

We can now state and prove the formal version of the informal main theorem from the introduction, namely Theorem~\ref{thm:intro-main}.

\begin{theorem}[Almost-linear-time solver for the hypergraph Poisson problem with $x$ restricted to $\Xzero$]
\label{thm:primal-recovery}
Assume the dyadic input model of Definition~\ref{def:dyadic-input}, that $H$ is connected, that $\1^\top s=0$, and that
\[
\norm{s}_\infty\le P^{K_0},
\qquad
P^{-K_0}\le w_e\le P^{K_0}
\qquad (\forall e\in E)
\]
for some constant $K_0\ge 1$. Then for every fixed constant $C>0$, there is a randomized algorithm that runs in $P^{1+o(1)}$ time and, with high probability over its internal randomness, returns an explicit rational primal point $x\in\Xzero$ and an explicit dyadic dual certificate $\widehat\eta\in\mathcal Y$, with every coordinate of $x$ and $\widehat\eta$ having bit length $\log^{O(1)}P$, satisfying $B\widehat\eta=s$ exactly, and such that
\[
\mathcal D(\widehat\eta)\le \mathcal D^\star + \exp(-\log^C P),
\qquad
\mathcal P(x)\le \mathrm{OPT} + \exp(-\log^C P).
\]
\end{theorem}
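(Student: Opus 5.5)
We run Algorithm~\ref{alg:certified-poisson} with an auxiliary exponent $C'\ge 2C+2$. We then analyse its three outputs separately: the dual certificate, the budgets, and the primal point. The running time adds the costs of Theorem~\ref{thm:black-box-dual-solver} (invoked with exponent $C'$), $O(P)$ oracle queries at precision $\exp(-\log^{O(1)}P)$, the exact rounded min-cost-flow solve and potential extraction of Lemma~\ref{lem:rounded-support-exact-mcf}, and the repair of Lemma~\ref{lem:dual-certificate-repair}. Each is $P^{1+o(1)}$. A union bound over the two randomized calls gives the high-probability event. On this event Theorem~\ref{thm:black-box-dual-solver} yields a real $\mu\in\mathcal M(s)$ with $\delta:=q(\mu)-\mathcal D^\star\le\exp(-\log^{C'}P)$, and the induced $\eta$ satisfies $\mathcal D(\eta)\le q(\mu)$. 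Since the path flow of Lemma~\ref{lem:lift-interiorization} shows $\mathcal D^\star=P^{O(1)}$, we also have $q(\mu)=P^{O(1)}$.

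\textbf{Dual certificate.} Apply Lemma~\ref{lem:dual-certificate-repair} with $\Gamma:=\tfrac12\exp(-\log^C P)$. This gives a dyadic $\widehat\eta\in\mathcal Y$ with $B\widehat\eta=s$ exactly, polylogarithmic bit length, and
\[
\mathcal D(\widehat\eta)\le \mathcal D^\star+\delta+\Gamma\le\mathcal D^\star+\exp(-\log^C P).
\]

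\textbf{Rounding.} Fix $\rho=2^{-M}$ with $M=\lceil\log^{C+1}P\rceil$, and also $M\ge L$, where $L$ is the dyadic input parameter. Because every $s_v$ is $L$-dyadic, $s\in\rho\Z^V$ already, so the algorithm takes $\hat s:=s$ and no demand perturbation occurs. This removes one error term entirely.

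For the budgets, query each $\mu_e$ to precision $\tau=\rho$ and set $\mu_e^\uparrow:=\widetilde\mu_e+\tau\ge\mu_e$. With $\hat r_e=\rho\lceil(\mu^\uparrow_e/w_e)/\rho\rceil$ we get
\[
r_e\le\hat r_e\le r_e+\rho+2\tau/w_e\le r_e+\rho',
\qquad \rho':=P^{O(1)}\rho .
\]
By Lemma~\ref{lem:support-poly}(1), $\sum_e\hat r_e=P^{O(1)}$. So Lemma~\ref{lem:rounded-support-exact-mcf} applies, and Lemma~\ref{lem:residual-potentials-capacitated-mcf} together with Proposition~\ref{prop:support-finite-capacity} guarantees that the extracted potential is an optimal uncapacitated dual potential. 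Corollary~\ref{cor:support-maximizer-from-lifted-potentials} then shows that its shifted restriction $x$ is a maximizer of $L_s(\hat r)$ lying in $\Xzero$. The coordinates of $x$ are rational with polylogarithmic bit length, since they are scaled integer potentials of magnitude $\exp(\log^{O(1)}P)$ minus a $D$-weighted average of dyadic data.

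\textbf{Primal bound.} This is the main obstacle, because Lemma~\ref{lem:support-gap} is stated for the exact budget $r=\mu/w$, not for $\hat r$. The comparison goes in two parts.

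First, the energy. Since $R_e(x)\le\hat r_e\le r_e+\rho'$, we have
\[
\mathcal E_H(x)\le \tfrac12\sum_e w_e(r_e+\rho')^2\le q(\mu)+P^{O(1)}\rho',
\]
where the last step uses $\sum_e w_e r_e=P^{O(1)}$ from Lemma~\ref{lem:support-poly}(1) and the weight bounds.

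Second, the linear term. Feasibility is monotone in the budget, so $\hat r\ge r$ gives $\ip{s}{x}=L_s(\hat r)\ge L_s(r)$.

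Combining the two,
\[
\mathcal P(x)\le q(\mu)-L_s(r)+P^{O(1)}\rho .
\]
The argument in the last part of the proof of Lemma~\ref{lem:support-gap} then gives
\[
\mathcal P(x)-\mathrm{OPT}\le\sqrt{2\delta}\,\norm{\mu-\bar\mu}_{w^{-1}}+P^{O(1)}\rho .
\]
By Lemma~\ref{lem:support-poly}(2) we have $\norm{\bar\mu}_{w^{-1}}=P^{O(1)}$, and $\norm{\mu}_{w^{-1}}=\sqrt{2q(\mu)}=P^{O(1)}$. Hence
\[
\mathcal P(x)-\mathrm{OPT}\le P^{O(1)}\bigl(\exp(-\tfrac12\log^{C'}P)+2^{-\log^{C+1}P}\bigr)\le\exp(-\log^C P)
\]
for large $P$, with small $P$ absorbed into constants.

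The remaining details to check carefully are that the polynomial exponents in Lemma~\ref{lem:support-poly} do not depend on $C$, and that the query precisions stay within the $\log(1/\tau)=\log^{O(1)}P$ regime required by Definition~\ref{def:first-stage-coordinate-oracle-representation}.
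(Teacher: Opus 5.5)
Your proof is correct and follows essentially the same route as the paper's. It runs the first stage with a larger internal exponent, uses upward-rounded dyadic budgets $\hat r\ge r=\mu/w$, and solves an exact capacitated support min-cost flow whose slack capacities force $\lambda^+=0$. It then applies Lemma~\ref{lem:support-gap} with a $P^{O(1)}\rho$ rounding loss and uses Lemma~\ref{lem:dual-certificate-repair} for the dual certificate.

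Your one real deviation is a valid simplification: you take $\hat s=s$ by choosing $M\ge L$, so the dyadic demands already lie on the $\rho$-grid. This removes the paper's demand-rounding error term and its use of Lemma~\ref{lem:support-poly}(3). Your remark that small $P$ is ``absorbed into constants'' should be made precise, as the paper does, with an exact solver branch for the finitely many bounded-size inputs.
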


\begin{proof}
Fix a target exponent $C>0$, and choose an internal exponent
\[
C_{\mathrm{in}}>\max\{1,C\}.
\]
We analyze Algorithm~\ref{alg:certified-poisson}. The algorithm box is schematic; the proof below fixes the internal precision parameters and verifies the primal guarantee, the exact finite dual certificate, the bit-length bounds, and the running time.

All probabilistic statements below are with respect to the internal randomness of the invoked randomized solvers. We condition on their success events; a union bound over the constant number of such calls preserves high probability.

\paragraph{Lines~1--2: first-stage oracle output and mass vector.}
We first run the first-stage algorithm of Theorem~\ref{thm:black-box-dual-solver} with this exponent. This returns a coordinate-oracle representation of a feasible lifted flow $f$, its mass vector $\mu$, and the oracle-represented induced feasible dual point $\eta^{\mathrm{oracle}}$ with
\[
\delta:=q(\mu)-q(\mu^\star)
\le \exp(-\log^{C_{\mathrm{in}}} P).
\]
Here $\mu^\star$ is any minimizer of $q$ over $\mathcal M(s)$, so $q(\mu^\star)=\mathcal D^\star$ by Theorem~\ref{thm:mass-support}. Define the exact analytical budget
\[
r_e:=\frac{\mu_e}{w_e}.
\]
Moreover, the proof of Proposition~\ref{prop:convex-flow-assumptions} constructs a feasible lifted flow of polynomial objective value, so $\mathcal D^\star=P^{O(1)}$ and therefore $q(\mu)=P^{O(1)}$. Hence
\[
\norm{\mu}_{w^{-1}}=\sqrt{2q(\mu)}=P^{O(1)}.
\]
Lemma~\ref{lem:support-poly}(1) gives
\[
\sum_{e\in E} r_e=P^{O(1)},
\]
and Lemma~\ref{lem:support-poly}(2) lets us choose a minimizer $\bar\mu$ of $\sum_e r_e\nu_e$ over $\mathcal M(s)$ such that
\[
\norm{\bar\mu}_{w^{-1}}=P^{O(1)}.
\]

If we could solve the support problem \eqref{eq:support-primal} with this unrounded $r$, then Lemma~\ref{lem:support-gap} with this choice of $\bar\mu$ would give
\[
\mathcal P(x_r)-\mathrm{OPT}
\le \sqrt{2\delta}\,\norm{\mu-\bar\mu}_{w^{-1}}
\le \sqrt{2\delta}\bigl(\norm{\mu}_{w^{-1}}+\norm{\bar\mu}_{w^{-1}}\bigr)
= P^{O(1)}\sqrt{\delta}
\]
for a maximizer $x_r$ of $L_s(r)$. It therefore remains to implement that support solve in almost-linear time.

\paragraph{Lines~3--4: mass queries, rounding, and exact support min-cost flow.}
Choose another constant $K_1\ge 2$, which we will fix later. Fix a distinguished vertex $v_0\in V$, and set
\[
M := \left\lceil \frac{K_1 \log P + \log^{C_{\mathrm{in}}} P}{\log 2} \right\rceil,
\qquad
\rho := 2^{-M}.
\]
Then
\[
\rho \le P^{-K_1}\exp(-\log^{C_{\mathrm{in}}} P),
\]
and $\rho$ is a dyadic rational with $\log^{O(1)} P$-bit encoding. By Definition~\ref{def:first-stage-coordinate-oracle-representation}, we query each quadratic-arc mass coordinate to absolute accuracy
\[
\tau:=\frac{\rho}{2} P^{-(K_0+2)},
\]
obtaining dyadic values $\tilde\mu_e$ with
\[
|\tilde\mu_e-\mu_e|\le \tau
\qquad (\forall e\in E).
\]
This adds only $P^{1+o(1)}$ time because we read only $|E|=O(P)$ coordinates at accuracy $\tau=\exp(-\log^{O(1)}P)$. Define safe upper masses
\[
\mu_e^{\uparrow}:=\max\{0,\tilde\mu_e\}+\tau
\qquad (e\in E),
\]
so that
\[
\mu_e\le \mu_e^{\uparrow}\le \mu_e+2\tau
\qquad (\forall e\in E).
\]
Now define the algorithmic rounded budget by
\[
\hat r_e := \rho \left\lceil \frac{\mu_e^{\uparrow}/w_e}{\rho} \right\rceil
\qquad (e \in E).
\]
The exact budget $r$ is used only in the analysis; the min-cost-flow instance below uses $\hat r$. Since $w_e\ge P^{-K_0}$,
\[
\frac{2\tau}{w_e}\le 2\rho P^{-2}\le \rho,
\]
and therefore
\[
r_e \le \hat r_e \le r_e+2\rho
\qquad (\forall e \in E).
\]
For each $v\neq v_0$, let $\hat s_v$ be any nearest multiple of $\rho$ to $s_v$, and set
\[
\hat s_{v_0} := -\sum_{v \neq v_0} \hat s_v.
\]
Then
\[
\hat r\in \rho\Z_{\ge 0}^E,
\qquad
\hat s\in \rho\Z^V,
\qquad
\1^\top \hat s=0,
\]
Moreover, for every $v\neq v_0$ we have $|\hat s_v-s_v|\le \rho/2$, while
\[
|\hat s_{v_0}-s_{v_0}|
=
\left|\sum_{v\neq v_0}(s_v-\hat s_v)\right|
\le \frac{|V|-1}{2}\rho,
\]
because $\1^\top s=0$. Therefore
\[
\norm{\hat s-s}_1 \le |V|\rho \le P\rho.
\]

Apply Proposition~\ref{prop:support-finite-capacity} to the rounded support instance $(\hat r,\hat s)$. There is an optimal lifted flow whose every arc value is at most $\norm{\hat s}_1/2$. Therefore we may impose the uniform capacities
\[
u_a:=\norm{\hat s}_1+\rho
\qquad (a\in A^\uparrow),
\]
which are strictly larger than the arc flows of some optimal solution and hence do not change the optimum. More explicitly, let $\hat b^\uparrow$ be the lifted demand vector obtained from $\hat s$, and define the rounded support costs
\[
c_a:=
\begin{cases}
\hat r_e, & a=(e^-,e^+)\text{ for some }e\in E,\\
0, & \text{otherwise}.
\end{cases}
\]
We distinguish the uncapacitated support-flow LP
\[
(\mathrm{U})
\qquad
\min\{c^\top f: A^\uparrow f=\hat b^\uparrow,\ f\ge 0\}
\]
from the finite-capacity LP actually passed to the exact min-cost-flow routine,
\[
(\mathrm{C})
\qquad
\min\{c^\top f: A^\uparrow f=\hat b^\uparrow,\ 0\le f\le u\}.
\]
The strictly slack optimal flow supplied by Proposition~\ref{prop:support-finite-capacity} is feasible for $(\mathrm{C})$, so $(\mathrm{U})$ and $(\mathrm{C})$ have the same optimum value.

The rounded support instance has the magnitude bounds required by the exact implementation lemma. First,
\[
\sum_{e\in E}\hat r_e
\le
\sum_{e\in E} r_e+2\rho |E|
=
P^{O(1)},
\]
because Lemma~\ref{lem:support-poly} gives $\sum_e r_e=P^{O(1)}$, $|E|\le P$, and $\rho\le1$. Also,
\[
\norm{\hat s}_1
\le
\norm{s}_1+\norm{\hat s-s}_1
\le
P^{O(1)}.
\]
Therefore Lemma~\ref{lem:rounded-support-exact-mcf} applies to the capacitated rounded instance $(\mathrm{C})$: after scaling by $\rho^{-1}$, the exact min-cost-flow solve and the residual-potential extraction both run in $P^{1+o(1)}$ time with high probability.

The exact min-cost-flow call gives an optimal primal flow for the scaled version of $(\mathrm{C})$. To obtain the support maximizer, we use the residual-potential extraction routine invoked in Lemma~\ref{lem:rounded-support-exact-mcf} on the same scaled instance.
Lemma~\ref{lem:residual-potentials-capacitated-mcf} records the correctness of this extraction in our incoming-minus-outgoing incidence convention.
After rescaling the extracted node potentials and upper-bound multipliers by $\rho$ back to the original units, we obtain an optimal capacitated dual pair
\[
\pi\in\R^{V^\uparrow},
\qquad
\lambda^+\in\R_{\ge0}^{A^\uparrow}
\]
for $(\mathrm{C})$. Since the extraction is performed on the $\rho^{-1}$-scaled integral instance, the node potentials before rescaling may be taken to be integer min-cost-flow dual potentials. After fixing one potential to be zero, their absolute values are bounded by the number of arcs times the scaled cost magnitude, hence by $\exp(\log^{O(1)}P)$. Multiplying by the dyadic factor $\rho$ therefore gives dyadic coordinates of $\pi$ with bit length $\log^{O(1)}P$.
That is, $(\pi,\lambda^+)$ solves
\[
(\mathrm{D_C})
\qquad
\max_{\pi,\lambda^+\ge0}
\left\{\ip{\hat b^\uparrow}{\pi}-\sum_{a\in A^\uparrow}u_a\lambda^+_a:
\bigl((A^\uparrow)^\top\pi\bigr)_a-\lambda^+_a\le c_a\ (a\in A^\uparrow)
\right\}.
\]
We now check that this capacitated dual potential is already a dual potential for the uncapacitated rounded support-flow problem $(\mathrm{U})$. Let $f^{\mathrm{sl}}$ be an optimal solution of $(\mathrm{U})$ satisfying
\[
f^{\mathrm{sl}}_a\le\norm{\hat s}_1/2<u_a
\qquad (a\in A^\uparrow),
\]
whose existence follows from Proposition~\ref{prop:support-finite-capacity}. Since $f^{\mathrm{sl}}$ is strictly below the imposed capacities and the capacities do not change the optimum, Lemma~\ref{lem:residual-potentials-capacitated-mcf} implies that the upper-bound multipliers of every optimal capacitated dual solution vanish. Hence
\[
\lambda^+_a=0
\qquad (a\in A^\uparrow).
\]
Therefore $\pi$ satisfies
\[
\bigl((A^\uparrow)^\top\pi\bigr)_a\le c_a
\qquad (a\in A^\uparrow)
\]
and
\[
\ip{\hat b^\uparrow}{\pi}
=\operatorname{opt}(\mathrm{C})
=\operatorname{opt}(\mathrm{U})
=L_{\hat s}(\hat r).
\]
\paragraph{Line~5: extracting and normalizing the support potential.}
By Corollary~\ref{cor:support-maximizer-from-lifted-potentials}, restricting $\pi$ to the original vertices and then shifting by a constant so that the restriction lies in $\Xzero$ gives an optimal support maximizer for the rounded support problem. Denote this shifted restriction by $x\in\Xzero$. The normalization shift is computed by exact rational arithmetic from the restricted potential and the fixed $D$-weights; under the dyadic input model this preserves the $\log^{O(1)}P$ bit-length bound for every coordinate of $x$. Since the lifted demand vector is supported on $V$ and has total mass zero, this shift does not change the objective. Consequently
\[
\ip{\hat s}{x}=L_{\hat s}(\hat r),
\qquad
R_e(x)\le \hat r_e
\qquad (\forall e\in E).
\]

\paragraph{Primal analysis for Lines~3--5.}
Let $x_r\in\Xzero$ be any maximizer of $L_s(r)$. Since $r\le \hat r$ coordinatewise, $x_r$ is feasible for the rounded support problem, so
\[
L_{\hat s}(\hat r)
\ge \ip{\hat s}{x_r}
\ge L_s(r)-\norm{\hat s-s}_1\,\norm{x_r}_\infty.
\]
Therefore
\[
\ip{s}{x}
=\ip{\hat s}{x}+\ip{s-\hat s}{x}
\ge L_s(r)-\norm{\hat s-s}_1\bigl(\norm{x}_\infty+\norm{x_r}_\infty\bigr).
\]
By Lemma~\ref{lem:support-poly}, the exact support maximizer $x_r$ satisfies $\norm{x_r}_\infty=P^{O(1)}$, and the rounded maximizer $x$ satisfies the same bound because $R_e(x)\le \hat r_e\le r_e+2\rho$ and $2\rho\le 1$. Hence
\[
\ip{s}{x}\ge L_s(r)-P^{O(1)}\rho.
\]
On the other hand,
\[
\frac12\sum_{e\in E} w_e R_e(x)^2
\le \frac12\sum_{e\in E} w_e \hat r_e^2
\le \frac12\sum_{e\in E} w_e(r_e+2\rho)^2
= q(\mu)+2\rho\sum_{e\in E}\mu_e+2\rho^2\sum_{e\in E} w_e
= q(\mu)+P^{O(1)}\rho,
\]
where the last step uses Lemma~\ref{lem:support-poly} and the polynomial weight bounds. Consequently,
\[
\mathcal P(x)
\le q(\mu)-L_s(r)+P^{O(1)}\rho.
\]

Subtracting $\mathrm{OPT}=-\mathcal D^\star=-q(\mu^\star)$ and applying Lemma~\ref{lem:support-gap} for the exact budget $r$ gives
\[
\mathcal P(x)-\mathrm{OPT}
\le \sqrt{2\delta}\,\norm{\mu-\bar\mu}_{w^{-1}} + P^{O(1)}\rho
\le P^{O(1)}\bigl(\sqrt{\delta}+\rho\bigr).
\]
Equivalently, there is a constant $A\ge 1$ such that
\[
\mathcal P(x)-\mathrm{OPT}
\le P^A\bigl(\sqrt{\delta}+\rho\bigr).
\]
Because $C_{\mathrm{in}}>\max\{1,C\}$, we have
\[
A\log P+\log^C P+\log 2=o(\log^{C_{\mathrm{in}}} P).
\]
Hence, for all sufficiently large $P$,
\[
P^A\sqrt{\delta}
\le
P^A\exp\!\left(-\tfrac12\log^{C_{\mathrm{in}}} P\right)
\le \frac12\exp(-\log^C P).
\]
Now choose $K_1\ge \max\{2,A+1\}$. Since
\[
\rho \le P^{-K_1}\exp(-\log^{C_{\mathrm{in}}} P),
\]
we also obtain, for all sufficiently large $P$,
\[
P^A\rho
\le
P^{A-K_1}\exp(-\log^{C_{\mathrm{in}}} P)
\le \frac12\exp(-\log^C P).
\]
Thus, for all sufficiently large $P$,
\[
\mathcal P(x)-\mathrm{OPT}
\le \exp(-\log^C P).
\]
\paragraph{Line~6: materializing and repairing the dual certificate.}
Let
\[
\Gamma:=\frac12\exp(-\log^C P).
\]
By Theorem~\ref{thm:black-box-dual-solver} and the polynomial bound above on $q(\mu)$, the first-stage dual vector $\eta^{\mathrm{oracle}}$ satisfies the hypotheses of Lemma~\ref{lem:dual-certificate-repair}.
Apply Lemma~\ref{lem:dual-certificate-repair} to the first-stage dual vector $\eta^{\mathrm{oracle}}$, querying all incidence transport coordinates at the precision prescribed there. This produces an explicit dyadic vector $\widehat\eta\in\mathcal Y$ satisfying
\[
B\widehat\eta=s
\]
exactly and
\[
\mathcal D(\widehat\eta)
\le
\mathcal D(\eta^{\mathrm{oracle}})+\Gamma .
\]
By increasing the internal exponent $C_{\mathrm{in}}$ if necessary, the first-stage guarantee gives
\[
\mathcal D(\eta^{\mathrm{oracle}})
\le
\mathcal D^\star+\Gamma
\]
for all sufficiently large $P$. Hence
\[
\mathcal D(\widehat\eta)
\le
\mathcal D^\star+\exp(-\log^C P).
\]

\paragraph{Running time and success probability.}
The total running time is still $P^{1+o(1)}$, since it is the sum of the convex-flow solve in the first stage, the $O(P)$ mass-coordinate queries at precision $\tau$, the rounded exact support-flow solve and residual-potential extraction from Lemma~\ref{lem:rounded-support-exact-mcf}, the exact rational materialization of the $\log^{O(1)}P$-bit primal point $x$, and the $O(P)$ transport-coordinate queries and deterministic rational repair from Lemma~\ref{lem:dual-certificate-repair}. The logarithmic dependence on the scaled integer magnitudes in the exact support stage is absorbed because those magnitudes are at most $\exp(\log^{O(1)}P)$. Lemma~\ref{lem:dual-certificate-repair} gives the same $\log^{O(1)}P$ bit-length bound for $\widehat\eta$. The high-probability success guarantee follows from the high-probability guarantees of the invoked randomized solvers and the union bound over the constant number of calls.

It remains only to remove the ``sufficiently large $P$'' qualification used in the primal and dual estimates above. Let $P_0=P_0(C,K_0)$ be large enough that those estimates hold for all $P\ge P_0$. For the remaining cases $P<P_0$, the incidence size and, under Definition~\ref{def:dyadic-input}, the input bit lengths are bounded by constants depending only on the fixed parameters. On this bounded family of sizes, we invoke any exact rational convex-optimization method for the rational finite-dimensional primal--dual pair. This returns exact rational optimal primal and dual solutions, and hence satisfies the required additive guarantees. Increasing the hidden constants in the $P^{1+o(1)}$ running-time bound and in the $\log^{O(1)}P$ bit-length bound absorbs the worst-case cost and output size of this finite-size branch.
\end{proof}

\begin{corollary}[Primal--dual pair with an explicit additive error bound]
\label{cor:primal-dual-gap}
Under the assumptions of Theorem~\ref{thm:primal-recovery}, the randomized algorithm of Theorem~\ref{thm:primal-recovery} returns, with high probability, $x\in\Xzero$ and an explicit dyadic dual certificate $\widehat\eta\in\mathcal Y$ with $B\widehat\eta=s$ exactly and
\[
0\le \mathcal P(x)+\mathcal D(\widehat\eta)\le 2\exp(-\log^C P).
\]
In particular, $(x,\widehat\eta)$ gives an explicit additive error bound for \eqref{eq:poisson-primal} and \eqref{eq:dual-problem}.
\end{corollary}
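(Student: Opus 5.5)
The plan is to add the two one-sided bounds of Theorem~\ref{thm:primal-recovery} and then use weak duality to get nonnegativity. Work on the high-probability success event of Theorem~\ref{thm:primal-recovery}. There the algorithm returns $x\in\Xzero$ and a dyadic $\widehat\eta\in\mathcal Y$ with $B\widehat\eta=s$ exactly, and
\[
\mathcal P(x)\le \mathrm{OPT}+\exp(-\log^C P),
\qquad
\mathcal D(\widehat\eta)\le \mathcal D^\star+\exp(-\log^C P).
\]
Adding these and invoking strong duality $\mathrm{OPT}+\mathcal D^\star=0$ from Theorem~\ref{thm:poisson-fenchel-dual} gives the upper bound $\mathcal P(x)+\mathcal D(\widehat\eta)\le 2\exp(-\log^C P)$.

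For the lower bound, note that exact feasibility is what matters. Since $x\in\Xzero$, we have $\mathcal P(x)\ge\mathrm{OPT}$. Since $\widehat\eta\in\mathcal Y$ with $B\widehat\eta=s$ holds exactly over the rationals, rather than only for an underlying real vector, $\widehat\eta$ is feasible for \eqref{eq:dual-problem}, so $\mathcal D(\widehat\eta)\ge\mathcal D^\star$. Summing and using $\mathrm{OPT}=-\mathcal D^\star$ gives $\mathcal P(x)+\mathcal D(\widehat\eta)\ge0$.

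Alternatively, one can verify weak duality directly from Proposition~\ref{prop:edge-conjugate}, using the Fenchel--Young inequality $\psi_e(x)\ge\ip{\widehat\eta_e}{x}-\norm{\widehat\eta_e}_1^2/(8w_e)$. Summing over $e$ and using $\sum_e\ip{\widehat\eta_e}{x}=\ip{B\widehat\eta}{x}=\ip{s}{x}$ yields $\mathcal E_H(x)-\ip{s}{x}+\mathcal D(\widehat\eta)\ge0$. This route shows that the certificate is checkable without knowing $\mathrm{OPT}$.

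There is no real obstacle. The only point needing care is that the dual bound must be about the explicit repaired vector $\widehat\eta$, not the oracle-represented $\eta$. That is exactly what Lemma~\ref{lem:dual-certificate-repair} and Theorem~\ref{thm:primal-recovery} guarantee, so the final remark about an explicit additive error bound follows immediately.
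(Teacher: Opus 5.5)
Your proposal is correct and follows the paper's proof exactly. Nonnegativity comes from the exact feasibility of $x\in\Xzero$ and $B\widehat\eta=s$ together with $\mathrm{OPT}=-\mathcal D^\star$ from Theorem~\ref{thm:poisson-fenchel-dual}, and the upper bound comes from adding the two additive guarantees of Theorem~\ref{thm:primal-recovery}; your alternative Fenchel--Young route to the lower bound is also valid (it even holds for every $x\in\R^V$) but is not needed.
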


\begin{proof}
Because $x\in\Xzero$ and $B\widehat\eta=s$, we have
\[
\mathcal P(x)\ge \mathrm{OPT},
\qquad
\mathcal D(\widehat\eta)\ge \mathcal D^\star.
\]
Theorem~\ref{thm:poisson-fenchel-dual} gives $\mathrm{OPT}=-\mathcal D^\star$, so
\[
\mathcal P(x)+\mathcal D(\widehat\eta)\ge 0.
\]
On the other hand, Theorem~\ref{thm:primal-recovery} gives
\[
\mathcal P(x)-\mathrm{OPT}\le \exp(-\log^C P),
\qquad
\mathcal D(\widehat\eta)-\mathcal D^\star\le \exp(-\log^C P),
\]
hence
\[
\mathcal P(x)+\mathcal D(\widehat\eta)
=(\mathcal P(x)-\mathrm{OPT})+(\mathcal D(\widehat\eta)-\mathcal D^\star)
\le 2\exp(-\log^C P).
\]
\end{proof}

\subsection{Objective gap as a nonlinear energy-distance}
\label{subsec:nonlinear-energy-distance}

The theorem above gives an additive objective guarantee. We now record the intrinsic interpretation of that guarantee on the nonlinear energy landscape: the same quantity is a Bregman divergence to the optimal set.

Let
\[
\mathcal X^\star := \argmin_{x\in \Xzero}\mathcal P(x).
\]
Because the energy $\mathcal E_H$ is generally not strongly convex on $\Xzero$, the
optimal set $\mathcal X^\star$ need not be a singleton. For this reason, the natural analogue of
the usual energy-norm error used for graph Laplacians is not a norm distance to a distinguished solution,
but rather a Bregman divergence generated by $\mathcal E_H$.

For $x^\star \in \mathcal X^\star$ and $\xi^\star \in \partial \mathcal E_H(x^\star)$, define the
subgradient Bregman divergence
\[
D_{\mathcal E_H}^{\xi^\star}(x,x^\star)
:=
\mathcal E_H(x)-\mathcal E_H(x^\star)-\langle \xi^\star, x-x^\star\rangle,
\qquad x\in \R^V.
\]

\begin{proposition}[Objective gap as a Bregman divergence]
\label{prop:objective-gap-bregman}
Let $x^\star \in \mathcal X^\star$. Then there exists
$\xi^\star \in \partial \mathcal E_H(x^\star)$ such that
\[
s-\xi^\star \in \operatorname{span}(D\1).
\]
For every such choice and every $x\in \Xzero$,
\[
\mathcal P(x)-\mathrm{OPT}
=
D_{\mathcal E_H}^{\xi^\star}(x,x^\star)
=
\mathcal E_H(x)-\mathcal E_H(x^\star)-\langle \xi^\star, x-x^\star\rangle.
\]
Moreover, if $\xi_1^\star,\xi_2^\star \in \partial \mathcal E_H(x^\star)$ both satisfy
\[
s-\xi_i^\star \in \operatorname{span}(D\1),
\]
then
\[
\langle \xi_1^\star-\xi_2^\star,\, x-x^\star\rangle = 0
\qquad (\forall x\in \Xzero),
\]
so the right-hand side above is independent of the compatible subgradient. In particular,
\[
D_{\mathcal E_H}^{\xi^\star}(x,x^\star)\ge 0
\qquad (\forall x\in \Xzero).
\]
\end{proposition}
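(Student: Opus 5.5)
The plan is to obtain everything from the first-order optimality condition of the constrained convex problem \eqref{eq:poisson-primal} plus a direct algebraic expansion.

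\emph{Step 1: existence of a compatible subgradient.} Since $\mathcal E_H$ is finite and convex on all of $\R^V$, and $\Xzero$ is a linear subspace, the subdifferential sum rule applies without any qualification issue. Concretely, $\partial(\mathcal E_H+\delta_{\Xzero})=\partial\mathcal E_H+\Xzero^\perp$, and $\Xzero^\perp=\operatorname{span}(D\1)$. Optimality of $x^\star$ for $\mathcal P+\delta_{\Xzero}$ means $0\in\partial\mathcal E_H(x^\star)-s+\operatorname{span}(D\1)$. This is exactly Remark~\ref{rem:xzero-poisson}, and it yields $\xi^\star\in\partial\mathcal E_H(x^\star)$ with $s-\xi^\star\in\operatorname{span}(D\1)$.

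\emph{Step 2: the identity.} Fix such a $\xi^\star$ and write $s=\xi^\star+cD\1$. For $x\in\Xzero$ we have $x-x^\star\in\Xzero$, so $\ip{D\1}{x-x^\star}=0$ and hence $\ip{s}{x-x^\star}=\ip{\xi^\star}{x-x^\star}$. Since $\mathrm{OPT}=\mathcal P(x^\star)$,
\[
\mathcal P(x)-\mathrm{OPT}=\mathcal E_H(x)-\mathcal E_H(x^\star)-\ip{s}{x-x^\star}=\mathcal E_H(x)-\mathcal E_H(x^\star)-\ip{\xi^\star}{x-x^\star}.
\]
This is the claimed Bregman form.

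\emph{Step 3: independence and nonnegativity.} If $\xi_1^\star,\xi_2^\star$ are both compatible, then $\xi_1^\star-\xi_2^\star=(s-\xi_2^\star)-(s-\xi_1^\star)\in\operatorname{span}(D\1)$. This vector is therefore orthogonal to $x-x^\star\in\Xzero$, which gives the independence claim. Nonnegativity follows either from the subgradient inequality $\mathcal E_H(x)\ge\mathcal E_H(x^\star)+\ip{\xi^\star}{x-x^\star}$, which holds for all $x\in\R^V$, or directly from $\mathcal P(x)\ge\mathrm{OPT}$ on $\Xzero$.

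The only step needing care is Step 1, specifically the justification of the sum rule for the constrained problem. It holds because $\mathcal E_H$ is continuous everywhere on $\R^V$ (Rockafellar, Thm.~23.8). The remaining steps are routine.
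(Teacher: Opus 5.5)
Your proposal is correct and follows the paper's proof essentially step for step. You obtain the compatible subgradient from the optimality condition in Remark~\ref{rem:xzero-poisson}, for which you supply the sum-rule justification that the paper leaves implicit. You then use $\ip{s-\xi^\star}{x-x^\star}=0$ on $\Xzero$ for the identity, orthogonality of $\xi_1^\star-\xi_2^\star\in\operatorname{span}(D\1)$ to $\Xzero$ for independence, and the subgradient inequality for nonnegativity.
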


\begin{proof}
By Remark~\ref{rem:xzero-poisson}, the optimal solutions of the hypergraph
Laplacian problem are the points $x^\star\in \Xzero$ satisfying
\[
s \in L_H(x^\star)+\operatorname{span}(D\1)
=
\partial \mathcal E_H(x^\star)+\operatorname{span}(D\1).
\]
Hence there exists $\xi^\star\in \partial \mathcal E_H(x^\star)$ such that
\[
s-\xi^\star \in \operatorname{span}(D\1)=\Xzero^\perp.
\]
Now let $x\in \Xzero$. Since $x-x^\star\in \Xzero$, we have
\[
\langle s-\xi^\star,\, x-x^\star\rangle = 0.
\]
Therefore
\[
\begin{aligned}
\mathcal P(x)-\mathrm{OPT}
&= \mathcal E_H(x)-\mathcal E_H(x^\star)-\langle s, x-x^\star\rangle \\
&= \mathcal E_H(x)-\mathcal E_H(x^\star)-\langle \xi^\star, x-x^\star\rangle \\
&= D_{\mathcal E_H}^{\xi^\star}(x,x^\star).
\end{aligned}
\]
If $\xi_1^\star,\xi_2^\star$ are two compatible choices, then
\[
\xi_1^\star-\xi_2^\star \in \operatorname{span}(D\1)=\Xzero^\perp,
\]
so
\[
\langle \xi_1^\star-\xi_2^\star,\, x-x^\star\rangle=0
\qquad (\forall x\in \Xzero),
\]
which proves independence of the compatible subgradient. Finally, nonnegativity follows from the
subgradient inequality for $\xi^\star \in \partial \mathcal E_H(x^\star)$.
\end{proof}

\begin{remark}[Distance to the optimal set]
\label{rem:bregman-distance-optimal-set}
A natural set-valued notion of solver error is
\[
\operatorname{Dist}^{\mathrm{Br}}_{H,s}(x,\mathcal X^\star)
:=
\inf\Bigl\{
D_{\mathcal E_H}^{\xi^\star}(x,x^\star)
:
x^\star\in \mathcal X^\star,\quad
\xi^\star\in \partial \mathcal E_H(x^\star),\quad
s-\xi^\star\in \operatorname{span}(D\1)
\Bigr\}.
\]
Proposition~\ref{prop:objective-gap-bregman} shows that this infimum is attained and equals
\[
\operatorname{Dist}^{\mathrm{Br}}_{H,s}(x,\mathcal X^\star)=\mathcal P(x)-\mathrm{OPT}.
\]
Thus the additive primal guarantee proved above is therefore a nonlinear energy-distance guarantee
to the optimal set.
\end{remark}

\begin{corollary}[Nonlinear energy-distance guarantee]
\label{cor:nonlinear-energy-distance-guarantee}
Under the assumptions of Theorem~\ref{thm:primal-recovery}, with the same high probability, the returned primal point $x\in \Xzero$ satisfies
\[
0 \le D_{\mathcal E_H}^{\xi^\star}(x,x^\star)
= \mathcal P(x)-\mathrm{OPT}
\le \exp(-\log^C P)
\]
for every optimal point $x^\star\in \mathcal X^\star$ and every compatible subgradient
$\xi^\star\in \partial \mathcal E_H(x^\star)$ with
\[
s-\xi^\star \in \operatorname{span}(D\1).
\]
Equivalently,
\[
\operatorname{Dist}^{\mathrm{Br}}_{H,s}(x,\mathcal X^\star)
\le
\exp(-\log^C P).
\]
\end{corollary}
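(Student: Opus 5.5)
The corollary follows by combining the guarantee of Theorem~\ref{thm:primal-recovery} with the identity of Proposition~\ref{prop:objective-gap-bregman}. We condition on the high-probability success event of Theorem~\ref{thm:primal-recovery}. On that event, the returned point satisfies $x\in\Xzero$ and
\[
\mathcal P(x)-\mathrm{OPT}\le\exp(-\log^C P).
\]
No further algorithmic work is needed; the remaining steps are purely analytic.

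Next, fix an arbitrary optimal point $x^\star\in\mathcal X^\star$ and a compatible subgradient $\xi^\star\in\partial\mathcal E_H(x^\star)$ with $s-\xi^\star\in\operatorname{span}(D\1)$. Proposition~\ref{prop:objective-gap-bregman} guarantees that at least one such $\xi^\star$ exists, so the quantifier in the corollary is not vacuous. The same proposition gives, for every $x\in\Xzero$, the identity
\[
D_{\mathcal E_H}^{\xi^\star}(x,x^\star)=\mathcal P(x)-\mathrm{OPT}.
\]
It also gives the nonnegativity of this quantity, which comes from the subgradient inequality. Substituting the returned $x$ yields the chain
\[
0\le D_{\mathcal E_H}^{\xi^\star}(x,x^\star)=\mathcal P(x)-\mathrm{OPT}\le\exp(-\log^C P).
\]
Since the middle expression does not depend on the pair $(x^\star,\xi^\star)$, this holds uniformly over all such pairs on the single success event.

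For the equivalent formulation, apply Remark~\ref{rem:bregman-distance-optimal-set}. Every term in the infimum defining $\operatorname{Dist}^{\mathrm{Br}}_{H,s}(x,\mathcal X^\star)$ equals $\mathcal P(x)-\mathrm{OPT}$. The index set of the infimum is nonempty, because $\mathcal X^\star$ is nonempty (the optimum is attained) and compatible subgradients exist. Hence the infimum equals $\mathcal P(x)-\mathrm{OPT}$, and the same bound follows.

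There is no real obstacle here. The only points to check are that $x\in\Xzero$ holds exactly, so the proposition applies, and that $\mathcal X^\star\neq\emptyset$. The latter follows from the duality theorem, or from coercivity of $\mathcal P$ on $\Xzero$ modulo the linear term when $\1^\top s=0$ and $H$ is connected.
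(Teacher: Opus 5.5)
Your proposal is correct and matches the paper's proof: condition on the success event of Theorem~\ref{thm:primal-recovery}, then substitute the returned $x\in\Xzero$ into the identity and nonnegativity of Proposition~\ref{prop:objective-gap-bregman}, and use Remark~\ref{rem:bregman-distance-optimal-set} for the $\operatorname{Dist}^{\mathrm{Br}}_{H,s}$ form. Your extra checks, that $\mathcal X^\star$ is nonempty and that a compatible subgradient exists, are sound details the paper leaves implicit.
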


\begin{proof}
Combine Theorem~\ref{thm:primal-recovery} with Proposition~\ref{prop:objective-gap-bregman}.
\end{proof}

\begin{corollary}[2-uniform graph specialization]
\label{cor:graph-specialization-energy-norm}
Assume every hyperedge has size two, and let \(L\) denote the weighted graph Laplacian of the
resulting graph. Then
\[
\mathcal E_H(x)=\frac12 x^\top Lx,
\]
and the weighted-zero-mean minimizer \(x^\star\in \Xzero\) is the unique solution of
\[
Lx=s,
\qquad
x\in \Xzero.
\]
Moreover, for every \(x\in \Xzero\),
\[
\mathcal P(x)-\mathrm{OPT}
=
\frac12 (x-x^\star)^\top L(x-x^\star).
\]
Equivalently, if
\[
\|z\|_L := \sqrt{z^\top L z},
\]
then
\[
\mathcal P(x)-\mathrm{OPT} = \frac12 \|x-x^\star\|_L^2.
\]
In particular, with the same high probability, the output of Theorem~\ref{thm:primal-recovery} satisfies
\[
\frac12 \|x-x^\star\|_L^2
\le
\exp(-\log^C P).
\]
Thus, in the ordinary graph case, the additive primal guarantee of
Theorem~\ref{thm:primal-recovery} reduces to an additive energy-norm error guarantee.
\end{corollary}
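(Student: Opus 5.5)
The plan is to treat the 2-uniform case in three parts: identify the energy with the Laplacian quadratic form, characterize the minimizer as the unique solution of a linear system on $\Xzero$, and then compute the objective gap exactly by a Taylor expansion. Bregman machinery is not needed, although Proposition~\ref{prop:objective-gap-bregman} gives the same identity.

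\textbf{Step 1 (quadratic form).} For a hyperedge $e=\{u,v\}$ we have $R_e(x)=|x_u-x_v|$. Hence
\[
\mathcal E_H(x)=\frac12\sum_{\{u,v\}\in E}w_e(x_u-x_v)^2=\frac12x^\top Lx .
\]
It follows that $\mathcal E_H$ is differentiable, $L_H(x)=\{Lx\}$, and $\mathcal P(x)=\frac12x^\top Lx-\ip{s}{x}$.

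\textbf{Step 2 (the minimizer).} By Remark~\ref{rem:xzero-poisson}, $x^\star\in\Xzero$ is optimal if and only if $Lx^\star-s=cD\1$ for some scalar $c$. Taking the coordinate sum and using $\1^\top L=0$ and $\1^\top s=0$ gives $c\,\1^\top D\1=0$, so $c=0$ and $Lx^\star=s$. Existence of an optimum follows from Theorem~\ref{thm:poisson-fenchel-dual}, which gives a finite $\mathrm{OPT}$; directly, it also follows from $s\perp\ker L=\operatorname{span}(\1)$.

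For uniqueness, since $H$ is connected, $\ker L=\operatorname{span}(\1)$. If $x_1,x_2\in\Xzero$ both solve $Lx=s$, then $x_1-x_2=t\1$ for some scalar $t$. Because $\ip{D\1}{x_1-x_2}=t\,\1^\top D\1$ must vanish and $\1^\top D\1>0$, we get $t=0$.

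\textbf{Step 3 (exact gap).} For $x\in\Xzero$, write $z=x-x^\star$ and expand:
\[
\mathcal P(x)=\mathcal P(x^\star)+\ip{Lx^\star-s}{z}+\tfrac12z^\top Lz=\mathrm{OPT}+\tfrac12\|x-x^\star\|_L^2 ,
\]
where the linear term vanishes because $Lx^\star=s$. This is exactly $D^{\xi^\star}_{\mathcal E_H}$ with $\xi^\star=Lx^\star$, consistent with Proposition~\ref{prop:objective-gap-bregman}. The final inequality $\tfrac12\|x-x^\star\|_L^2\le\exp(-\log^CP)$ then follows by substituting the guarantee $\mathcal P(x)-\mathrm{OPT}\le\exp(-\log^CP)$ from Theorem~\ref{thm:primal-recovery}.

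I do not expect a real obstacle here. The only point requiring care is the kernel argument for uniqueness in Step 2, which uses connectivity and the positivity of $\1^\top D\1$.
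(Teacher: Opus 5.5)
Your proposal is correct and follows essentially the same route as the paper: identify $\mathcal E_H$ with $\frac12 x^\top Lx$, use the optimality condition $s-Lx^\star\in\operatorname{span}(D\1)$ together with a coordinate-sum argument to get $Lx^\star=s$, and expand the quadratic to obtain the exact gap $\frac12\|x-x^\star\|_L^2$. The differences are minor. You cite Remark~\ref{rem:xzero-poisson} rather than Proposition~\ref{prop:objective-gap-bregman} for the optimality condition, and you spell out the kernel argument for uniqueness, which the paper states in one line.
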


\begin{proof}
When every hyperedge has size two, \(R_{\{u,v\}}(x)=|x_u-x_v|\), so
\[
\mathcal E_H(x)
=
\frac12 \sum_{\{u,v\}\in E} w_{uv}(x_u-x_v)^2
=
\frac12 x^\top Lx.
\]
Hence
\[
\partial \mathcal E_H(x)=\{Lx\}.
\]
If \(x^\star\in \Xzero\) is optimal, Proposition~\ref{prop:objective-gap-bregman} gives
\[
s-Lx^\star \in \operatorname{span}(D\1).
\]
Taking the inner product with \(\1\), and using
\[
\1^\top s = 0,\qquad \1^\top Lx^\star = 0,\qquad \1^\top D\1 > 0,
\]
we obtain
\[
Lx^\star = s.
\]
Since the graph is connected and the condition \(x^\star\in \Xzero\) fixes the additive constant, this
solution is unique.

Finally, using \(Lx^\star=s\),
\[
\begin{aligned}
\mathcal P(x)-\mathrm{OPT}
&=
\frac12 x^\top Lx
-\frac12 (x^\star)^\top Lx^\star
-\langle Lx^\star, x-x^\star\rangle \\
&=
\frac12 (x-x^\star)^\top L(x-x^\star),
\end{aligned}
\]
which is the claimed identity.
\end{proof}
\section{Regularized Problems via a Ground Vertex}
\label{sec:applications-limitations}

This section shows how to reduce regularized versions of the hypergraph Poisson problem to one augmented Poisson instance by adding a ground vertex. We then note a few direct consequences of the resulting solver.

\subsection{Solving regularized problems by adding a ground vertex}
\label{subsec:shifted-grounded}

For \(\lambda>0\), consider the regularized Poisson problem
\[
s\in L_H(x)+\lambda Dx,
\qquad
x\in\mathbb R^V.
\]
Equivalently, consider the variational objective
\[
\mathcal P_\lambda(x)
:=
\frac12 \sum_{e\in E} w_e R_e(x)^2
+\frac{\lambda}{2}\langle Dx,x\rangle
-\langle s,x\rangle,
\qquad x\in\mathbb R^V.
\]
In this subsection, we do not require \(H\) to be connected. Instead, assume that every vertex has positive weighted degree,
\[
d_v=\sum_{e\ni v} w_e>0
\qquad (\forall v\in V).
\]
Under this assumption, the regularized objective is coercive and no additive-constant normalization is needed.

Rather than design a separate solver for this regularized Poisson problem, we reduce it to the Poisson solver from the previous sections by adding one new ground vertex. The quadratic term $\frac{\lambda}{2}\langle Dx,x\rangle$ then becomes a family of two-vertex edges: once the new ground coordinate is fixed to zero, the range on \(\{v,g\}\) is \(|x_v|\). Variants of this construction also cover unbalanced source vectors and, at the level of modeling, suggest a natural way to impose grounded boundary conditions: any net imbalance in \(s\) is absorbed at \(g\), while grounding a prescribed terminal set would amount to collapsing it to the new vertex \(g\). The next proposition shows that this turns the regularized problem into the same Poisson problem on an augmented hypergraph.

\begin{proposition}[Reducing the regularized Poisson problem by adding a ground vertex]
\label{prop:grounded-shifted-reduction}
Let \(H=(V,E,w)\) be a weighted hypergraph with positive weighted degrees \(d_v=\sum_{e\ni v}w_e>0\). Let \(\lambda>0\) and \(s\in\mathbb R^V\), and let \(\mathcal P_\lambda\) be the regularized objective above.
Let \(g\notin V\) be a new ground vertex and define
\[
\bar V := V\cup\{g\},
\qquad
\bar E := E \cup \{\{v,g\}: v\in V\}.
\]
Assign weights
\[
\bar w_e := w_e \quad (e\in E),
\qquad
\bar w_{\{v,g\}} := \lambda d_v \quad (v\in V),
\]
and define \(\bar s\in\mathbb R^{\bar V}\) by
\[
\bar s_v := s_v \quad (v\in V),
\qquad
\bar s_g := -\1^\top s.
\]
Let
\[
\mathcal P_{\bar H}(\bar x)
:=
\frac12 \sum_{\bar e\in \bar E} \bar w_{\bar e} R_{\bar e}(\bar x)^2
-\langle \bar s,\bar x\rangle,
\qquad
\bar x\in\mathbb R^{\bar V}.
\]
Then
\[
\min_{x\in\mathbb R^V} \mathcal P_\lambda(x)
=
\min\bigl\{\mathcal P_{\bar H}(\bar x): \bar x\in\mathbb R^{\bar V},\ \bar x_g=0\bigr\}.
\]
Equivalently, solving the regularized problem on \(H\) is the same as solving the corresponding problem on the augmented hypergraph \(\bar H=(\bar V,\bar E,\bar w)\) with the ground pinned to zero.
\end{proposition}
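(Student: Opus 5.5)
The argument is a pointwise identity of objectives, followed by a check that both minima are attained. Identify $x\in\R^V$ with the augmented vector $\bar x\in\R^{\bar V}$ given by $\bar x_v=x_v$ for $v\in V$ and $\bar x_g=0$. This is a bijection between $\R^V$ and $\{\bar x:\bar x_g=0\}$. It therefore suffices to show that $\mathcal P_{\bar H}(\bar x)=\mathcal P_\lambda(x)$ for every such pair, since equal functions on bijective domains have equal infima.

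To verify the identity, split the sum over $\bar E$ into the original edges $E$ and the new ground edges. For $e\in E$ the edge $e$ avoids $g$, so $R_e(\bar x)=R_e(x)$ and $\bar w_e=w_e$. These terms therefore reproduce $\frac12\sum_{e\in E}w_eR_e(x)^2$ exactly. For a ground edge $\{v,g\}$ we have $R_{\{v,g\}}(\bar x)=|x_v-0|=|x_v|$, so its contribution is $\frac12\lambda d_v x_v^2$. Summing over $v$ gives $\frac{\lambda}{2}\ip{Dx}{x}$. For the linear term, $\ip{\bar s}{\bar x}=\sum_{v\in V}s_vx_v+\bar s_g\cdot 0=\ip{s}{x}$, so the particular value $\bar s_g=-\1^\top s$ plays no role here. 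Its only purpose is to make $\1^\top\bar s=0$, as the Poisson solver requires. Adding the three pieces gives $\mathcal P_{\bar H}(\bar x)=\mathcal P_\lambda(x)$.

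It remains to justify writing ``$\min$'' rather than ``$\inf$''. Since $d_v>0$ and $\lambda>0$, we have $\mathcal P_\lambda(x)\ge\frac{\lambda}{2}\min_v d_v\norm{x}_2^2-\norm{s}_2\norm{x}_2$. Hence $\mathcal P_\lambda$ is continuous and coercive on $\R^V$, and its minimum is attained. By the identity, the constrained minimum on the augmented side is attained at the corresponding point.

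None of these steps is a real obstacle; the argument is a direct verification. The only point needing care is the bookkeeping at the ground vertex: one must confirm that pinning $\bar x_g=0$ turns each two-vertex range into $|x_v|$, and that $\bar s_g$ drops out of the objective. One might also remark that $\bar H$ is connected through $g$, which is what later allows the unconstrained Poisson solver to be applied after shifting. That remark is not needed for the displayed equality, however.
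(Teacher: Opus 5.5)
Your proposal is correct and is essentially the paper's proof: the paper likewise identifies $x$ with $(x,0)$ and checks term by term that $\mathcal P_{\bar H}(x,0)=\mathcal P_\lambda(x)$. Your coercivity argument showing that the minima are attained (justifying $\min$ rather than $\inf$) is a small addition that the paper leaves implicit.
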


\begin{proof}
For any \(x\in\mathbb R^V\), set \(\bar x=(x,0)\). Then for every original edge \(e\in E\),
\[
R_e(\bar x)=R_e(x),
\]
while for each added two-vertex edge \(\{v,g\}\),
\[
R_{\{v,g\}}(\bar x)=|x_v-0|=|x_v|.
\]
Therefore
\[
\mathcal P_{\bar H}(x,0)
=
\frac12 \sum_{e\in E} w_e R_e(x)^2
+\frac12 \sum_{v\in V} \lambda d_v |x_v|^2
-\langle s,x\rangle
=
\mathcal P_\lambda(x).
\]
This proves the claimed identity of the two optimization problems.
\end{proof}

\begin{theorem}[Dual of the regularized Poisson problem]
\label{thm:shifted-dual}
Assume that \(d_v>0\) for every \(v\in V\). Let
\[
Y := \prod_{e\in E} U_e,
\qquad
B\eta := \sum_{e\in E}\eta_e,
\]
and define
\[
\mathcal D_\lambda(\eta)
:=
\sum_{e\in E}\frac{\|\eta_e\|_1^2}{8w_e}
+
\frac{1}{2\lambda}\sum_{v\in V}\frac{(s_v-(B\eta)_v)^2}{d_v},
\qquad
\eta\in Y.
\]
Then
\[
\mathrm{OPT}_\lambda
:=
\min_{x\in\mathbb R^V} \mathcal P_\lambda(x)
=
-\min_{\eta\in Y} \mathcal D_\lambda(\eta).
\]
In particular,
\[
\mathcal D_\lambda^\star:=\min_{\eta\in Y} \mathcal D_\lambda(\eta)=-\mathrm{OPT}_\lambda.
\]
If \(\eta^\star\) minimizes \(\mathcal D_\lambda\), then the unique primal minimizer is
\[
x_v^\star
=
\frac{s_v-(B\eta^\star)_v}{\lambda d_v}
\qquad
(v\in V).
\]
\end{theorem}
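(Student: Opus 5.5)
The plan is to run the Fenchel--Rockafellar argument of Theorem~\ref{thm:poisson-fenchel-dual} again, now with the regularizer as an extra separable term. Write
\[
\mathcal P_\lambda(x)=\Phi(Ax)+\frac{\lambda}{2}\langle Dx,x\rangle-\langle s,x\rangle,
\]
where \(A\) and \(\Phi\) are the edge-copy map and the sum of edge terms \(\psi_e\) from that proof, and set \(f_\lambda(x):=\frac{\lambda}{2}\langle Dx,x\rangle-\langle s,x\rangle\). Since \(d_v>0\), the function \(f_\lambda\) is a strictly convex quadratic. Its conjugate is explicit:
\[
f_\lambda^\ast(y)=\frac{1}{2\lambda}\sum_{v}\frac{(y_v+s_v)^2}{d_v}.
\]
The conjugate \(\Phi^\ast\) is given by Proposition~\ref{prop:edge-conjugate}, and the adjoint satisfies \(A^\ast\eta=B\eta\) on \(Y\).

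Both \(f_\lambda\) and \(\Phi\) are finite everywhere, so the qualification condition is trivially satisfied. The duality theorem then gives
\[
\mathrm{OPT}_\lambda=\sup_{\eta}\bigl\{-f_\lambda^\ast(-A^\ast\eta)-\Phi^\ast(\eta)\bigr\}=\sup_{\eta\in Y}\bigl(-\mathcal D_\lambda(\eta)\bigr),
\]
because \(f_\lambda^\ast(-B\eta)=\frac1{2\lambda}\sum_v (s_v-(B\eta)_v)^2/d_v\). No compatibility condition appears, so there is no need for \(\1^\top s=0\) or for connectivity; this matches the paper's remark that the regularized result needs neither.

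The next step is to show the dual infimum is attained. For this I would use coercivity: \(\mathcal D_\lambda\) is bounded below by \(\sum_e\|\eta_e\|_1^2/(8w_e)\), which is coercive on the finite-dimensional space \(Y\), as in the earlier proof. Hence a minimizer \(\eta^\star\) exists and \(\mathrm{OPT}_\lambda=-\mathcal D_\lambda^\star\). The primal minimum is attained and unique because \(\mathcal P_\lambda\) is strongly convex, since \(\lambda D\succ0\).

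For the recovery formula I would use the optimality (extremality) conditions of Fenchel duality, which hold at any pair of a primal minimizer \(x^\star\) and a dual minimizer \(\eta^\star\) once there is zero duality gap. They give \(-B\eta^\star\in\partial f_\lambda(x^\star)\), that is,
\[
\lambda Dx^\star-s=-B\eta^\star,
\qquad\text{so}\qquad
x^\star_v=\frac{s_v-(B\eta^\star)_v}{\lambda d_v}.
\]
A more hands-on route to the same conclusion is to write the zero-gap equality as a sum of two Fenchel--Young gaps, one for \(\Phi\) and one for \(f_\lambda\). Both gaps are nonnegative, so both vanish, and equality in Fenchel--Young for the differentiable \(f_\lambda\) forces \(\nabla f_\lambda(x^\star)=-B\eta^\star\).

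The only point needing care is sign bookkeeping in the conjugate of \(f_\lambda\) evaluated at \(-B\eta\). It also requires confirming that the recovery holds for every dual minimizer, not merely for some. This follows from the zero-gap argument just described, since it applies to any optimal pair.
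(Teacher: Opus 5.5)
Your proposal is correct and follows the same route as the paper: Fenchel--Rockafellar duality with the edgewise conjugate of Proposition~\ref{prop:edge-conjugate}, where the conjugate of the regularized quadratic (the paper's explicit inner minimization over $x$) produces the term $\frac{1}{2\lambda}\sum_v (s_v-(B\eta)_v)^2/d_v$. Your coercivity argument for attainment of the dual minimum and your Fenchel--Young equality argument for the recovery formula fill in details that the paper's proof leaves implicit.
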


\begin{proof}
The dual calculation is the same edgewise Fenchel duality as in Proposition~\ref{prop:edge-conjugate}, now applied to each original hyperedge and combined with the additional quadratic term. Indeed,
\[
\frac12 \sum_{e\in E} w_e R_e(x)^2
=
\sup_{\eta\in Y}
\left\{
\langle B\eta,x\rangle
-
\sum_{e\in E}\frac{\|\eta_e\|_1^2}{8w_e}
\right\}.
\]
Hence
\[
\mathcal P_\lambda(x)
=
\sup_{\eta\in Y}
\left\{
\frac{\lambda}{2}\langle Dx,x\rangle
+
\langle B\eta-s,x\rangle
-
\sum_{e\in E}\frac{\|\eta_e\|_1^2}{8w_e}
\right\}.
\]
Because \(d_v>0\) for every \(v\in V\) by hypothesis and \(\lambda>0\), the quadratic term is strongly convex. Standard Fenchel--Rockafellar duality therefore gives
\[
\mathrm{OPT}_\lambda
=
\sup_{\eta\in Y}
\inf_{x\in\mathbb R^V}
\left\{
\frac{\lambda}{2}\langle Dx,x\rangle
+
\langle B\eta-s,x\rangle
-
\sum_{e\in E}\frac{\|\eta_e\|_1^2}{8w_e}
\right\}.
\]
For fixed \(\eta\), the inner minimization is separable over vertices and equals
\[
-\frac{1}{2\lambda}\sum_{v\in V}\frac{(s_v-(B\eta)_v)^2}{d_v}
-
\sum_{e\in E}\frac{\|\eta_e\|_1^2}{8w_e},
\]
attained at
\[
x_v
=
\frac{s_v-(B\eta)_v}{\lambda d_v}.
\]
Taking the supremum over \(\eta\in Y\) proves the claim.
\end{proof}

\paragraph{Remark (ground-edge dual interpretation).}
The dual above can also be read directly from the ground-vertex reduction in Proposition~\ref{prop:grounded-shifted-reduction}. Apply the unregularized dual to the augmented hypergraph \(\bar H\), and write an augmented dual point as the original-edge variables \(\eta\in Y\) together with one added-edge variable \(\xi_v\in U_{\{v,g\}}\) for each \(v\in V\). Every such added-edge variable has the form
\[
\xi_v=\alpha_v(\ee_v-\ee_g).
\]
Augmented feasibility at the original vertex \(v\) forces
\[
\alpha_v=s_v-(B\eta)_v,
\]
and the feasibility equation at \(g\) then follows automatically because each original-edge variable has zero coordinate sum. Thus the contribution of the added edge \(\{v,g\}\), whose weight is \(\lambda d_v\), is
\[
\frac{\|\xi_v\|_1^2}{8\lambda d_v}
=
\frac{(s_v-(B\eta)_v)^2}{2\lambda d_v}.
\]
Eliminating the added-edge variables in this way gives exactly the quadratic term in \(\mathcal D_\lambda\). This is the certificate correspondence used in the proof of Theorem~\ref{thm:shifted-cut-based-solver}.

We can now state the formal version of the companion theorem from the introduction, namely Theorem~\ref{thm:intro-shifted}.

\begin{theorem}[Almost-linear-time solver for the regularized Poisson problem]
\label{thm:shifted-cut-based-solver}
Assume the dyadic input model of Definition~\ref{def:dyadic-input}, that \(d_v=\sum_{e\ni v} w_e>0\) for every \(v\in V\), that
\[
\norm{s}_\infty \le P^{K_0},
\qquad
P^{-K_0}\le w_e\le P^{K_0}\quad(\forall e\in E),
\]
and that \(\lambda>0\) is \(L_\lambda\)-dyadic for some \(L_\lambda=\log^{O(1)}P\) and
\[
P^{-K_\lambda}\le \lambda\le P^{K_\lambda}
\]
for some constants \(K_0,K_\lambda\ge 1\). Then for every fixed constant \(C>0\), there is a randomized algorithm that runs in \(P^{1+o(1)}\) time and, with high probability over its internal randomness, returns an explicit rational primal point \(x\in\mathbb R^V\) and an explicit dyadic dual certificate \(\widehat\eta\in Y\), with every coordinate of \(x\) and \(\widehat\eta\) having bit length \(\log^{O(1)}P\), such that
\[
\mathcal D_\lambda(\widehat\eta)\le \mathcal D_\lambda^\star+\exp(-\log^C P),
\qquad
\mathcal P_\lambda(x)\le \mathrm{OPT}_\lambda+\exp(-\log^C P).
\]
Consequently, on the same high-probability event,
\[
0\le \mathcal P_\lambda(x)+\mathcal D_\lambda(\widehat\eta)\le 2\exp(-\log^C P).
\]
\end{theorem}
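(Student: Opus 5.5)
Our plan is to reduce to Theorem~\ref{thm:primal-recovery} via the ground-vertex construction of Proposition~\ref{prop:grounded-shifted-reduction}, and then to translate both certificates back. First, we build $\bar H=(\bar V,\bar E,\bar w)$ with $\bar w_{\{v,g\}}=\lambda d_v$ and $\bar s_g=-\1^\top s$. Then we check the hypotheses of Theorem~\ref{thm:primal-recovery}. The hypergraph $\bar H$ is connected, since every vertex is joined to $g$, even when $H$ is not. We also have $\1^\top\bar s=0$ and $\bar P=P+2|V|\le 3P$. The added weights $\lambda d_v$ are products and sums of $O(P)$ dyadic numbers, so they are $\log^{O(1)}P$-dyadic. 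They are polynomially bounded above and below, using $P^{-K_0}\le d_v\le P^{K_0+1}$ (since $d_v>0$, some edge contains $v$) together with the bounds on $\lambda$. Finally $|\bar s_g|\le P^{K_0+1}$. Hence some constant $\bar K_0$ works, and the solver runs in $\bar P^{1+o(1)}=P^{1+o(1)}$ time with any target exponent $C'>C$.

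Next we handle the primal point. The solver returns $\bar x\in\bar{\mathcal X}_0$ with $\mathcal P_{\bar H}(\bar x)\le \mathrm{OPT}_{\bar H}+\exp(-\log^{C'}P)$. We output $x_v:=\bar x_v-\bar x_g$. Both $R_{\bar e}$ and $\langle\bar s,\cdot\rangle$ are invariant under adding constants, the latter because $\1^\top\bar s=0$. Therefore $\mathcal P_\lambda(x)=\mathcal P_{\bar H}(x,0)=\mathcal P_{\bar H}(\bar x)$. Moreover, the minimum of $\mathcal P_{\bar H}$ over $\bar{\mathcal X}_0$ equals its minimum over $\{\bar x_g=0\}$, and by Proposition~\ref{prop:grounded-shifted-reduction} the latter equals $\mathrm{OPT}_\lambda$. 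The bit length of $x$ stays $\log^{O(1)}P$ because we only take a rational difference.

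Then we handle the dual certificate. The solver returns $(\widehat\eta,\widehat\xi)$ with $\bar B(\widehat\eta,\widehat\xi)=\bar s$ exactly. Following the remark on the ground-edge dual interpretation, feasibility at each $v$ forces $\widehat\xi_v=(s_v-(B\widehat\eta)_v)(\ee_v-\ee_g)$. The augmented dual objective then equals exactly $\mathcal D_\lambda(\widehat\eta)$. The same elimination applied to optimal points shows $\bar{\mathcal D}^\star=\mathcal D_\lambda^\star$; alternatively, this follows from $\mathrm{OPT}_{\bar H}=\mathrm{OPT}_\lambda$, Theorem~\ref{thm:poisson-fenchel-dual}, and Theorem~\ref{thm:shifted-dual}. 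Hence $\widehat\eta$, the original-edge part, satisfies $\mathcal D_\lambda(\widehat\eta)\le\mathcal D_\lambda^\star+\exp(-\log^{C}P)$. The gap statement then follows from $\mathrm{OPT}_\lambda=-\mathcal D_\lambda^\star$ exactly as in Corollary~\ref{cor:primal-dual-gap}.

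The main obstacle is the bookkeeping that the augmented instance truly satisfies the dyadic and polynomial-boundedness hypotheses, which is where $d_v>0$ and the bounds on $\lambda$ are used. A secondary point is the exact identification $\bar{\mathcal D}^\star=\mathcal D_\lambda^\star$ and the equality of augmented and regularized dual objectives, which the elimination argument handles cleanly. The switch from $C$ to $C'$ absorbs the change $\bar P\le 3P$ in the error exponent.
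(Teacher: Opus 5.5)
Your proposal is correct and follows essentially the same route as the paper. You build the ground-vertex augmentation $\bar H$ and verify connectivity, zero-sum demand, $\bar P=O(P)$, and the dyadic and polynomial bounds on $\lambda d_v$ and $\bar s_g$; you invoke Theorem~\ref{thm:primal-recovery} and shift by $\bar x_g$ to get the primal point; and you eliminate the added-edge variables via $\widehat\xi_v=(s_v-(B\widehat\eta)_v)(\ee_v-\ee_g)$ to obtain $\bar{\mathcal D}(\widehat{\bar\eta})=\mathcal D_\lambda(\widehat\eta)$ and $\bar{\mathcal D}^\star=\mathcal D_\lambda^\star$. One minor bookkeeping point: the error exponent actually relies on $\bar P\ge P$, which already gives $\exp(-\log^{C}\bar P)\le\exp(-\log^{C}P)$, whereas $\bar P\le 3P$ is what keeps the running time at $P^{1+o(1)}$.
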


\begin{proof}
Fix the requested exponent \(C>0\), and let \(C_{\mathrm{in}}>C\) be a larger fixed constant to be chosen below. Define the augmented hypergraph
\[
\bar V:=V\cup\{g\},
\qquad
\bar E:=E\cup\{\{v,g\}:v\in V\},
\]
with weights
\[
\bar w_e:=w_e \quad (e\in E),
\qquad
\bar w_{\{v,g\}}:=\lambda d_v \quad (v\in V).
\]
Define \(\bar s\in\R^{\bar V}\) by
\[
\bar s_v:=s_v \quad (v\in V),
\qquad
\bar s_g:=-\1^\top s.
\]
Then \(\1^\top \bar s=0\). Since \(d_v>0\) and \(\lambda>0\), every added edge \(\{v,g\}\) has positive weight \(\lambda d_v\), so the augmented hypergraph \(\bar H=(\bar V,\bar E,\bar w)\) is connected. Its incidence size is
	\[
	\bar P=P+2|V|=O(P).
	\]

	Let \(L=\log^{O(1)}P\) be such that every input weight and demand is \(L\)-dyadic. Because \(d_v>0\), every vertex lies in at least one positive-weight incident edge. Hence
	\[
	d_v=\sum_{e\ni v} w_e
	\]
	satisfies
	\[
	P^{-K_0}\le d_v\le P^{K_0+1}
	\qquad (\forall v\in V),
	\]
	where the upper bound uses that at most \(P\) incidences can meet a fixed vertex.

	Now fix \(v\in V\), and write each incident weight as
	\[
	w_e=a_e2^{-q_e}
	\qquad (e\ni v),
	\]
	with \(q_e\le L\). Let
	\[
	Q_v:=\max_{e\ni v} q_e\le L.
	\]
	Then
	\[
	d_v=\sum_{e\ni v} w_e=2^{-Q_v}\sum_{e\ni v} b_{e,v},
	\qquad
	b_{e,v}:=2^{Q_v}w_e\in\mathbb Z.
	\]
	Because \(w_e\le P^{K_0}\) and \(Q_v\le L\),
	\[
	|b_{e,v}|\le 2^{L}P^{K_0},
	\]
	so each \(b_{e,v}\) has bit length \(L+O(\log P)\). There are at most \(P\) terms in the sum, hence
	\[
	d_v=B_v2^{-Q_v}
	\]
	for an integer \(B_v\) with bit length \(L+O(\log P)\). In particular, each \(d_v\) is \((L+O(\log P))\)-dyadic.

	Since \(\lambda\) is \(L_\lambda\)-dyadic, the product \(\lambda d_v\) is dyadic with denominator exponent at most \(L+L_\lambda\). Moreover,
	\[
	P^{-(K_0+K_\lambda)} \le \lambda d_v \le P^{K_0+K_\lambda+1}
	\qquad (\forall v\in V),
	\]
	so the numerator of \(\lambda d_v\) has bit length \(L+L_\lambda+O(\log P)\). Therefore every added weight
	\[
	\bar w_{\{v,g\}}:=\lambda d_v
	\]
	is dyadic with \(\log^{O(1)}P\)-bit encoding.
	Also,
	\[
	\norm{\bar s}_\infty\le \max\{\norm{s}_\infty,\ |\1^\top s|\}
	\le \max\{P^{K_0},|V|P^{K_0}\}
	=P^{O(1)}.
	\]
	Since \(\bar P=O(P)\), these are polynomial bounds in the augmented input size as well, and the dyadic bit lengths remain \(\log^{O(1)}\bar P\). Thus the augmented instance satisfies the same dyadic input model needed to apply Theorem~\ref{thm:primal-recovery}. Let \(\bar D\) be the degree matrix of \(\bar H\), and write
	\[
	\bar X_0:=\{\bar x\in\R^{\bar V}:\ip{\bar D\bar x}{\1}=0\}.
	\]

Apply Theorem~\ref{thm:primal-recovery} to the augmented instance \((\bar H,\bar s)\) with exponent parameter \(C_{\mathrm{in}}\). On the high-probability success event of that randomized algorithm, this returns \(\bar x\in \bar X_0\) and an explicit dyadic dual certificate \(\widehat{\bar\eta}\) that is feasible for the augmented instance \(\bar H\). Define
\[
\bar{\mathcal D}(\widehat{\bar\eta}):=\sum_{\bar e\in \bar E}\frac{\|\widehat{\bar\eta}_{\bar e}\|_1^2}{8\bar w_{\bar e}},
\]
and let \(\overline{\mathrm{OPT}}:=\min_{\bar y\in \bar X_0} \mathcal P_{\bar H}(\bar y)\) and \(\bar{\mathcal D}^\star\) denote the dual optimum on \(\bar H\). Then
\[
\mathcal P_{\bar H}(\bar x)\le \overline{\mathrm{OPT}}+\exp(-\log^{C_{\mathrm{in}}}\bar P),
\qquad
\bar{\mathcal D}(\widehat{\bar\eta})\le \bar{\mathcal D}^\star+\exp(-\log^{C_{\mathrm{in}}}\bar P).
\]
Because \(\bar P=O(P)\) and \(C_{\mathrm{in}}>C\), we may choose \(C_{\mathrm{in}}\) so that
\[
\exp(-\log^{C_{\mathrm{in}}}\bar P)\le \exp(-\log^C P).
\]

Now shift the returned primal point by
\[
\tilde x:=\bar x-\bar x_g\1.
\]
Since \(\1^\top \bar s=0\), the objective \(\mathcal P_{\bar H}\) is invariant under adding constants, so
\[
\mathcal P_{\bar H}(\tilde x)=\mathcal P_{\bar H}(\bar x).
\]
Also \(\tilde x_g=0\). Let \(x:=\tilde x|_V\). By Proposition~\ref{prop:grounded-shifted-reduction},
\[
\mathcal P_\lambda(x)=\mathcal P_{\bar H}(\tilde x).
\]

Write the augmented dual certificate \(\widehat{\bar\eta}\) as follows: for each original edge \(e\in E\), let \(\widehat\eta_e\) be the corresponding component, and for each added edge \(\{v,g\}\), let \(\widehat\xi_v\in U_{\{v,g\}}\) be the corresponding component. Then \(\widehat\eta:=(\widehat\eta_e)_{e\in E}\in Y\). Since every vector in \(U_{\{v,g\}}\) has the form
\[
\widehat\xi_v=\widehat\alpha_v(\ee_v-\ee_g)
\]
for some scalar \(\widehat\alpha_v\), exact feasibility of \(\widehat{\bar\eta}\) at the original vertex \(v\) gives
\[
\widehat\alpha_v=s_v-(B\widehat\eta)_v.
\]
Therefore
\[
\frac{\|\widehat\xi_v\|_1^2}{8\lambda d_v}
=\frac{\widehat\alpha_v^2}{2\lambda d_v}
=\frac{(s_v-(B\widehat\eta)_v)^2}{2\lambda d_v}.
\]
Summing over \(v\in V\), we obtain
\[
\bar{\mathcal D}(\widehat{\bar\eta})
=
\sum_{e\in E}\frac{\|\widehat\eta_e\|_1^2}{8w_e}
+
\sum_{v\in V}\frac{\|\widehat\xi_v\|_1^2}{8\lambda d_v}
=
\mathcal D_\lambda(\widehat\eta).
\]
By Theorem~\ref{thm:primal-recovery}, every coordinate of \(\bar x\) and \(\widehat{\bar\eta}\) has bit length \(\log^{O(1)}\bar P=\log^{O(1)}P\). Restricting \(\widehat{\bar\eta}\) to the original edges preserves this bound for \(\widehat\eta\), and each coordinate of \(x\) is the difference \(\bar x_v-\bar x_g\) of two such dyadic numbers, so every coordinate of \(x\) also has bit length \(\log^{O(1)}P\).

By Proposition~\ref{prop:grounded-shifted-reduction}, minimizing \(\mathcal P_{\bar H}\) over the affine slice \(\{\bar x:\bar x_g=0\}\) gives \(\mathrm{OPT}_\lambda\). Since \(\1^\top \bar s=0\), adding constants does not change \(\mathcal P_{\bar H}\), so
\[
\overline{\mathrm{OPT}}=\mathrm{OPT}_\lambda.
\]
Applying Theorem~\ref{thm:poisson-fenchel-dual} to the augmented instance gives
\[
\bar{\mathcal D}^\star=-\overline{\mathrm{OPT}}.
\]
By Theorem~\ref{thm:shifted-dual},
\[
\mathcal D_\lambda^\star=-\mathrm{OPT}_\lambda,
\]
so \(\bar{\mathcal D}^\star=\mathcal D_\lambda^\star\). Together with \(\bar{\mathcal D}(\widehat{\bar\eta})=\mathcal D_\lambda(\widehat\eta)\), this yields
\[
\mathcal P_\lambda(x)\le \mathrm{OPT}_\lambda+\exp(-\log^C P),
\qquad
\mathcal D_\lambda(\widehat\eta)\le \mathcal D_\lambda^\star+\exp(-\log^C P).
\]
Finally,
\[
\mathcal P_\lambda(x)\ge \mathrm{OPT}_\lambda,
\qquad
\mathcal D_\lambda(\widehat\eta)\ge \mathcal D_\lambda^\star,
\]
so duality gives
\[
0\le \mathcal P_\lambda(x)+\mathcal D_\lambda(\widehat\eta)
=(\mathcal P_\lambda(x)-\mathrm{OPT}_\lambda)+(\mathcal D_\lambda(\widehat\eta)-\mathcal D_\lambda^\star)
\le 2\exp(-\log^C P).
\]
The running time is \(\bar P^{1+o(1)}=P^{1+o(1)}\), and the high-probability success guarantee is inherited from Theorem~\ref{thm:primal-recovery}.
\end{proof}

\subsection{Direct consequences}
\label{sec:applications}

We keep this discussion brief. The main theorems above yield a few immediate query procedures, but this paper does not analyze conductance, local-clustering, or learning guarantees built on top of them.

\paragraph{Pairwise response queries.}
The most direct special case is obtained by taking $s=e_u-e_v$ for a pair $u,v\in V$. Then any optimal point $x^\star\in\Xzero$ may be interpreted as the potential induced by injecting one unit at $u$ and removing one unit at $v$, paralleling the graph Poisson problem. Although the optimal potential need not be unique, the scalar
\[
\mathcal R_H(u,v):=\langle e_u-e_v,x^\star\rangle
\qquad
\text{for any optimal }x^\star\in\Xzero
\]
is independent of the choice of optimal point. Indeed, Proposition~\ref{prop:objective-gap-bregman} gives a compatible subgradient
\(\xi^\star\in\partial\mathcal E_H(x^\star)\) with
\[
s-\xi^\star\in\operatorname{span}(D\1).
\]
Since \(\mathcal E_H\) is convex and positively homogeneous of degree \(2\), the subgradient Euler identity gives
\[
\langle \xi^\star,x^\star\rangle=2\mathcal E_H(x^\star).
\]
Because \(x^\star\in\Xzero\), the compatibility condition implies
\[
\langle s,x^\star\rangle=\langle \xi^\star,x^\star\rangle.
\]
Thus
\[
\mathcal P(x^\star)=\mathcal E_H(x^\star)-\langle s,x^\star\rangle
=-\mathcal E_H(x^\star).
\]
All optimal points have the same value \(\mathcal P(x^\star)=\mathrm{OPT}\), hence the same \(\mathcal E_H(x^\star)\) and therefore the same response \(\langle s,x^\star\rangle\). Equivalently,
\[
\mathcal R_H(u,v)=x_u^\star-x_v^\star.
\]
Thus the exact Poisson problem defines a well-defined effective-resistance-type response quantity in the cut-based model. Theorem~\ref{thm:primal-recovery} gives an additive objective-gap certificate for the returned approximate primal point; without additional curvature, we do not claim a bound on
\[
\left|\langle e_u-e_v,x\rangle-\mathcal R_H(u,v)\right|
\]
for the approximate unregularized output \(x\). When every hyperedge has size two, this recovers the usual graph analogue.

\paragraph{Resolvents.}
A particularly useful reformulation of the regularized solver is obtained by fixing a reference vector $y\in\R^V$ for which the resulting demand vector $s=\lambda Dy$ satisfies the dyadic-input and polynomial-boundedness assumptions of Theorem~\ref{thm:shifted-cut-based-solver}. With this choice of $s$,
\[
\mathcal P_\lambda(x)
=
\frac12\sum_{e\in E} w_eR_e(x)^2
+
\frac{\lambda}{2}\langle D(x-y),x-y\rangle
-
\frac{\lambda}{2}\langle Dy,y\rangle,
\]
so Theorem~\ref{thm:shifted-cut-based-solver} computes the proximal point
\[
J_\lambda(y)
:=
\arg\min_{x\in\R^V}
\left\{
\frac12\sum_{e\in E} w_eR_e(x)^2
+
\frac{\lambda}{2}\langle D(x-y),x-y\rangle
\right\}.
\]
Equivalently,
\[
\lambda D\bigl(y-J_\lambda(y)\bigr)\in L_H\bigl(J_\lambda(y)\bigr).
\]

\begin{corollary}[Vector accuracy for regularized resolvents]
\label{cor:regularized-resolvent-vector-error}
Under the assumptions of Theorem~\ref{thm:shifted-cut-based-solver}, let \(x^\star\) be the unique minimizer of \(\mathcal P_\lambda\). On the same high-probability event, the returned primal point \(x\) satisfies
\[
\frac{\lambda}{2}\normD{x-x^\star}^2
\le
\mathcal P_\lambda(x)-\mathrm{OPT}_\lambda
\le
\exp(-\log^C P).
\]
Equivalently,
\[
\normD{x-x^\star}
\le
\sqrt{\frac{2}{\lambda}}\,
\exp\!\left(-\frac12\log^C P\right).
\]
In the resolvent specialization \(s=\lambda Dy\), this unique minimizer is \(J_\lambda(y)\), and hence
\[
\frac{\lambda}{2}\normD{x-J_\lambda(y)}^2
\le
\exp(-\log^C P).
\]
\end{corollary}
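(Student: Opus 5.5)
The proof rests on strong convexity of \(\mathcal P_\lambda\) with respect to \(\normD{\cdot}\). The right-hand inequality \(\mathcal P_\lambda(x)-\mathrm{OPT}_\lambda\le\exp(-\log^C P)\) is exactly the primal guarantee of Theorem~\ref{thm:shifted-cut-based-solver} on its high-probability event, so the only work is the left-hand inequality.

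Write \(\mathcal P_\lambda=F+\frac{\lambda}{2}\langle Dx,x\rangle\), where \(F(x):=\mathcal E_H(x)-\langle s,x\rangle\) is convex. Since \(d_v>0\) for all \(v\) and \(\lambda>0\), the quadratic part is strongly convex, so \(\mathcal P_\lambda\) is coercive and has a unique minimizer \(x^\star\). Optimality gives \(0\in\partial\mathcal P_\lambda(x^\star)=\partial F(x^\star)+\lambda Dx^\star\); the sum rule holds because the quadratic is finite and differentiable. Hence there is \(\xi\in\partial F(x^\star)\) with \(\xi=-\lambda Dx^\star\).

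Next, use the subgradient inequality \(F(x)\ge F(x^\star)+\langle\xi,x-x^\star\rangle\) together with the exact expansion
\[
\tfrac{\lambda}{2}\langle Dx,x\rangle=\tfrac{\lambda}{2}\langle Dx^\star,x^\star\rangle+\lambda\langle Dx^\star,x-x^\star\rangle+\tfrac{\lambda}{2}\normD{x-x^\star}^2 .
\]
Adding these, the linear terms cancel because \(\xi=-\lambda Dx^\star\). This yields \(\mathcal P_\lambda(x)-\mathcal P_\lambda(x^\star)\ge\frac{\lambda}{2}\normD{x-x^\star}^2\), where \(\normD{z}^2=\langle Dz,z\rangle\). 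Since \(\mathcal P_\lambda(x^\star)=\mathrm{OPT}_\lambda\), this is the first inequality, and rearranging gives the stated bound on \(\normD{x-x^\star}\).

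For the resolvent case \(s=\lambda Dy\), the displayed identity for \(\mathcal P_\lambda\) shows that it differs from the objective defining \(J_\lambda(y)\) by a constant. The unique minimizer is therefore \(J_\lambda(y)\), and the final bound follows by substituting \(x^\star=J_\lambda(y)\). There is no real obstacle here; the only care needed is to justify the subdifferential sum rule and the uniqueness of \(x^\star\), both of which follow from \(d_v>0\) and \(\lambda>0\).
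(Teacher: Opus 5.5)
Your proposal is correct and follows the paper's proof. Both arguments use the $\lambda$-strong convexity of $\mathcal P_\lambda$ in $\normD{\cdot}$ to get quadratic growth at the minimizer, combine it with the primal guarantee of Theorem~\ref{thm:shifted-cut-based-solver}, and identify $x^\star=J_\lambda(y)$ through the constant offset; the only difference is that you derive the growth inequality explicitly from $-\lambda Dx^\star\in\partial F(x^\star)$, where the paper cites strong convexity as standard.
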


\begin{proof}
The hypergraph energy \(x\mapsto \frac12\sum_{e\in E}w_eR_e(x)^2\) is convex, and the term \(\frac{\lambda}{2}\langle Dx,x\rangle\) is \(\lambda\)-strongly convex with respect to \(\normD{\cdot}\). Therefore \(\mathcal P_\lambda\) is \(\lambda\)-strongly convex with respect to \(\normD{\cdot}\). Since \(x^\star\) minimizes \(\mathcal P_\lambda\),
\[
\mathcal P_\lambda(x)
\ge
\mathcal P_\lambda(x^\star)
+
\frac{\lambda}{2}\normD{x-x^\star}^2
=
\mathrm{OPT}_\lambda
+
\frac{\lambda}{2}\normD{x-x^\star}^2.
\]
Combining this inequality with the primal objective guarantee in Theorem~\ref{thm:shifted-cut-based-solver} proves the first display and the equivalent norm bound. If \(s=\lambda Dy\), the identity defining \(J_\lambda(y)\) differs from \(\mathcal P_\lambda\) only by the constant \(\frac{\lambda}{2}\langle Dy,y\rangle\), so \(x^\star=J_\lambda(y)\).
\end{proof}

Accordingly, Theorem~\ref{thm:shifted-cut-based-solver} and Corollary~\ref{cor:regularized-resolvent-vector-error} yield an almost-linear-time resolvent evaluation that returns an approximate proximal point with an explicit \(D\)-norm vector error guarantee, and hence one certified backward-Euler / proximal step, for the same cut-based hypergraph Laplacian.
 
\bibliographystyle{abbrv}
\bibliography{main}

\appendix

\section{Details on the Chen--Kyng--Liu--Peng--Probst Gutenberg--Sachdeva black box}
\label{app:chen-black-box}

This appendix records precisely how the general convex-flow theorem of
Chen, Kyng, Liu, Peng, Probst Gutenberg, and Sachdeva is used in
Section~\ref{sec:convex-flow}.  It has two roles.  First,
Subsection~\ref{app:chen-statement} records the theorem and the output
convention in the notation and epigraph orientation used here.  Second,
Subsection~\ref{app:derive-theorem-42} explains how the black-box theorem
stated in Section~\ref{sec:convex-flow} follows from this convention and the
sign conversion between incidence conventions.  The verification of Chen
et al.'s Assumption~8.2 for the lifted instance is carried out in
Proposition~\ref{prop:convex-flow-assumptions}, and the finite-precision
materialization of the final certificate is handled in
Lemma~\ref{lem:dual-certificate-repair}.  No new algorithmic ingredient is
introduced here.

\subsection{The Chen et al. theorem and the output convention used here}
\label{app:chen-statement}

Chen et al. consider a directed graph \(G=(V,A)\), \(|A|=m\), with their
arc--vertex incidence matrix \(B_{\rm Ch}\in\mathbb{R}^{A\times V}\).  Their
convention is that \(B_{\rm Ch}\) has \(+1\) at the tail and \(-1\) at the head
of an arc, so the demand constraint is
\[
        B_{\rm Ch}^{\top} f = d .
\]
For convex arc functions
\(h_a:\mathbb{R}\to\mathbb{R}\cup\{+\infty\}\), write
\[
        h(f) := \sum_{a\in A} h_a(f_a),
        \qquad
        F^* := \inf\{h(f): B_{\rm Ch}^{\top}f=d\}.
\]
In the epigraph orientation used in this paper, the local domain for the
barrier is
\[
        \mathcal X_a
        := \operatorname{int}\operatorname{epi} h_a
        =
        \{(x,y): x\in\operatorname{int}(\operatorname{dom}h_a),\ y>h_a(x)\}.
\]
Thus an arc with finite domain \([0,\Lambda]\) is handled through the open
interval \(0<x<\Lambda\) in the barrier domain, while the closed capacity
restriction remains encoded by \(h_a(x)=+\infty\) outside \([0,\Lambda]\).
The coordinates \(y_a\) are local epigraph variables, not network-flow
variables: the only conserved variable in Chen et al.'s framework is the
arc-flow vector \(f\).

We use the following output convention, obtained from the high-accuracy
implementation of Chen et al.\ and specialized to our notation.  The source
guarantee we use from Chen et al.'s Theorem~8.13 is the existence of a
high-accuracy feasible flow.  The finite string and coordinate-query routine
below are an implementation convention for accessing the coordinates of that
flow to the precision needed by our later certificate construction; they are
not a separate claim that raw dyadic query answers satisfy conservation
exactly.

\begin{theorem}[Chen et al., JACM Theorem~8.13 under Assumption~8.2, specialized high-accuracy output convention]
\label{thm:chen-general-convex-flow-app}
Let \(G=(V,A)\) have \(m\) arcs, let \(d\in\mathbb{R}^V\) be a demand vector,
and let \(h_a:\mathbb{R}\to\mathbb{R}\cup\{+\infty\}\) be convex arc
functions.  Assume that, for each arc \(a\), the open epigraph
\(\mathcal X_a\) is equipped with a \(\nu\)-self-concordant barrier
\(\psi_a:\mathcal X_a\to\mathbb{R}\), and that the following six conditions
hold for some \(K=\log^{O(1)}m\):
\begin{enumerate}
    \item Gradients and Hessians of \(\psi_a\) are available in
    \(\widetilde O(1)\) time per arc query, to the precision used by the
    algorithm.
    \item The finite arc domain is contained in a polynomial box,
    \(|x|\le m^K\) whenever \(h_a(x)<+\infty\); the demand satisfies
    \(\|d\|_\infty\le m^K\); and the finite values of \(h_a\) obey
    \(|h_a(x)|\le O(m^K+|x|^K)\).
    \item Each barrier has been shifted by an additive constant so that
    \[
        \inf\{\psi_a(x,y):(x,y)\in\mathcal X_a,\ |x|,|y|\le m^K\}=0 .
    \]
    \item There are strictly feasible variables \((f^{(0)},y^{(0)})\), with
    \(B_{\rm Ch}^{\top}f^{(0)}=d\) and
    \((f^{(0)}_a,y^{(0)}_a)\in\mathcal X_a\), such that
    \(|y^{(0)}_a|\le m^K\),
    \[
        m^{-K}I\preceq
        \nabla^2\psi_a(f^{(0)}_a,y^{(0)}_a)
        \preceq m^K I,
        \qquad
        \psi_a(f^{(0)}_a,y^{(0)}_a)\le K
    \]
    for every arc \(a\).
    \item The parameters \(\alpha,\varepsilon,\kappa\) appearing in the
    algorithm are chosen below \(1/(1000\nu)\).
    \item For every point in the polynomial box \(|x|,|y|\le m^K\) with
    \(\psi_a(x,y)\le\widetilde O(1)\),
    \[
        \nabla^2\psi_a(x,y)\preceq \exp(\log^{O(1)}m)I .
    \]
\end{enumerate}
Then, for every fixed constant \(C>0\), the high-accuracy implementation of
Chen et al.\ can be used as follows.  It runs in \(m^{1+o(1)}\) time and, with
high probability, returns a finite string \(S\) together with a query routine
\(Q_S\).  There is an underlying real flow \(f\) satisfying
\(B_{\rm Ch}^{\top}f=d\) such that
\[
        h(f)
        \le
        \inf_{B_{\rm Ch}^{\top}f'=d} h(f')
        +
        \exp(-\log^C m).
\]
For every arc \(a\in A\) and every requested absolute precision \(\tau\) with
\(\log(1/\tau)=\log^{O(1)}m\), the query routine returns a dyadic number
\(Q_S(a,\tau)=\widetilde f_a\) satisfying
\[
        |\widetilde f_a-f_a|\le\tau .
\]
The work for \(O(m)\) such coordinate queries at these precisions is included
in the \(m^{1+o(1)}\) running time in the applications below.  The dyadic
coordinate values read from \(S\) are approximate answers and need not
themselves satisfy \(B_{\rm Ch}^{\top}f=d\) exactly.
\end{theorem}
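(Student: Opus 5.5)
The plan is to obtain the statement by specializing Chen et al.'s Theorem~8.13, not by re-deriving any part of their interior-point method. First I will check that the six numbered hypotheses are, item by item, Assumption~8.2 of Chen et al. The only changes are notational. The barrier domain is written as the open epigraph $\mathcal X_a=\operatorname{int}\operatorname{epi}h_a$, which is the orientation in which minimizing $\sum_a y_a$ represents minimizing $h(f)$. Capacities are encoded by $h_a=+\infty$ outside a closed interval. The incidence convention is theirs, $B_{\rm Ch}^\top f=d$, so no sign conversion is needed at this stage. Under these identifications, Theorem~8.13 gives a randomized $m^{1+o(1)}$-time algorithm that, with high probability, produces a flow $f$ with $B_{\rm Ch}^\top f=d$ and $h(f)\le F^*+\exp(-\log^C m)$ for any fixed $C$. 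The error $\exp(-\log^C m)$ is quasipolynomially small, so it costs only $\log^{O(1)}m$ extra iterations or bits, which are absorbed into $m^{1+o(1)}$.

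The second step is to build the pair $(S,Q_S)$ from that output. I will take $S$ to be the finite encoding of the flow the algorithm maintains: either an explicit vector of finite-precision numbers, or the implicit data structure together with the final correction. The query routine $Q_S(a,\tau)$ will read or evaluate coordinate $a$ and round it to a dyadic number with precision $\tau$. Because $\log(1/\tau)=\log^{O(1)}m$, each query needs only polylogarithmic bits. Hence $O(m)$ queries cost $m\log^{O(1)}m$, which is inside the stated time bound.

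The main obstacle is to separate the exactly feasible real flow $f$ from the approximate dyadic values stored in $S$. I would first use the fact that Chen et al.\ end with an exact-feasibility repair, routing the residual demand along a spanning tree. I then define $f$ to be the real flow obtained after this repair. If the stored numbers do not already satisfy the demand equation exactly, I define $f$ as the stored approximate flow $\tilde f$ plus a tree routing of the residual $d-B_{\rm Ch}^\top\tilde f$. This residual has norm $\exp(-\log^{O(1)}m)$, and the tree routing then keeps $f$ inside the domains.

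I will then bound the resulting change in $h$. The stored flow lies on a bounded barrier level set, so item~(6) bounds the Hessian there by $\exp(\log^{O(1)}m)$. Combined with the polynomial box of item~(2), this bounds the local Lipschitz constant of $h$ by $\exp(\log^{O(1)}m)$, so the increase in cost is $\exp(\log^{O(1)}m)$ times the residual norm. Running the algorithm internally with a larger exponent $C'>C$ makes this product at most $\exp(-\log^C m)$.

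Finally, queries to $f$ must be accurate to within $\tau$, and the tree correction is much smaller than $\tau$. This explains why the statement only asserts that the raw coordinates stored in $S$ are approximate, while the underlying $f$ satisfies $B_{\rm Ch}^\top f=d$ exactly.
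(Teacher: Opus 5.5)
Your first paragraph is, in substance, the paper's entire argument. The paper takes Theorem~8.13 under Assumption~8.2 as the source of an exactly feasible real flow with the $\exp(-\log^C m)$ objective bound. It reads the six items as Assumption~8.2 in the open-epigraph orientation: capacities are encoded by $h_a=+\infty$, and Chen et al.'s own incidence $B_{\rm Ch}$ is used, so no sign flip is needed. It then adds one remark: the strict-epigraph problem has the same infimum as the closed one, provided boundary points of the finite arc domains can be approached from the interior.

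You should say this too, since ``minimizing $\sum_a y_a$ represents minimizing $h$'' is exactly that claim. In this general setting it follows from item~(4): $(1-\theta)f+\theta f^{(0)}$ is interior on every arc, and by convexity its cost is at most $(1-\theta)h(f)+\theta h(f^{(0)})$. Beyond that, the paper does not construct $(S,Q_S)$ from the algorithm at all. It declares them an access convention for the coordinates of the cited real flow and performs no repair inside the black box.

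Your remaining paragraphs take a different, more ambitious route: you store approximate values and restore $B_{\rm Ch}^\top f=d$ with a spanning-tree correction. Given your own first step this is redundant: if Theorem~8.13 already delivers an exactly feasible $f$, let $S$ encode it and let $Q_S$ round. If the repair had to carry weight, it would rest on three facts the cited theorem does not state: that Chen et al.\ finish with a tree repair, that the residual is $\exp(-\log^{O(1)}m)$, and that every arc of the final iterate lies in the box with $\psi_a\le\widetilde O(1)$.

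Two inferences are also too quick.
\begin{itemize}
\item Item~(6) bounds $\nabla^2\psi_a$, not the slope of $h_a$. The missing link is the Dikin-ellipsoid property: $\nabla^2\psi_a\preceq\beta I$ puts the Euclidean ball of radius $\beta^{-1/2}$ about $(f_a,y_a)$ inside $\mathcal X_a$. Hence $f_a+\delta$ stays in the domain with $h_a(f_a+\delta)<y_a\le m^K$ for $|\delta|<\beta^{-1/2}$, and convexity plus the lower bound in item~(2) bounds the slope on half that interval. This is also what justifies ``the tree routing keeps $f$ inside the domains.''
\item The residual is controlled by the arithmetic precision, not by the accuracy exponent $C'$. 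Likewise, ``the correction is much smaller than $\tau$'' fails for requested $\tau$ below the internal precision. Let $Q_S$ evaluate the corrected flow exactly instead of returning stored values.
\end{itemize}

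The paper does use your tree-repair idea, but one level downstream, in Lemma~\ref{lem:dual-certificate-repair}. There the repaired object is the signed dual $\eta\in\mathcal Y$, which has no sign or capacity constraints and an explicit quadratic objective, so none of the barrier-margin issues arise.
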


\paragraph{Finite-precision output convention.}
The theorem above is the general edge-separable convex-flow, high-accuracy
part of Chen et al.'s result, not the exact integral min-cost-flow part.  The
source theorem is used for the high-accuracy real feasible flow and objective
bound; the finite string and query routine are the output convention under
which we access that flow.  Thus we never use Chen et al.'s theorem as a
source of an explicit finite dyadic flow certificate.  The only precision
regime used later is
\(\tau=\exp(-\log^{O(1)}m)\), and the theorem statement above records the
corresponding query guarantee and running-time convention.

After the incidence conversion in Subsection~\ref{app:derive-theorem-42}, the
constraint \(B_{\rm Ch}^{\top}f=d_{\rm Ch}\) is the statement
\(A^\uparrow f=b^\uparrow\) for the underlying real lifted flow.  It is not a
claim that the raw queried dyadic coordinates obey the demand constraints
exactly.  Whenever the final solver needs an explicit finite certificate, it
queries the required coordinates at precision \(\exp(-\log^{O(1)}P)\) and then
performs the deterministic materialization and repair of
Lemma~\ref{lem:dual-certificate-repair}; only the repaired vector is asserted
to satisfy the dual feasibility equations exactly over the rationals.

The strict inequality in the epigraph is only an algorithmic interior
condition.  For any fixed flow \(f\) with finite arc costs and any
\(\delta>0\), the choice \(y_a=h_a(f_a)+\delta\) is strictly feasible in the
epigraph and has objective \(\sum_a y_a=h(f)+m\delta\).  Conversely, every
strict epigraph point satisfies \(\sum_a y_a>h(f)\).  Hence the strict
epigraph formulation has the same infimum as the closed convex-flow objective,
provided boundary points of the finite arc domains can be approached from the
interior.  Lemma~\ref{lem:lift-interiorization} verifies this for the capped
lifted instance used in this paper.

\subsection{Deriving Theorem~\ref{thm:chen-convex-flow}}
\label{app:derive-theorem-42}

Theorem~\ref{thm:chen-convex-flow} is the same high-accuracy black-box
convention as Theorem~\ref{thm:chen-general-convex-flow-app}, with the
notation specialized to the graph \(G=(V,A)\), demand \(d\), and arc objective
\[
        h(f)=\sum_{a\in A} h_a(f_a).
\]
The only domain convention to remember is that the barrier lives on
\(\operatorname{int}\operatorname{epi}h_a\), not on the closed boundary of
\(\operatorname{epi}h_a\).  Thus an arc with finite domain \([0,\Lambda]\)
uses \(0<x<\Lambda\) in its barrier domain.

For the lifted graph of Section~\ref{sec:lifted-flow}, our incidence matrix
\(A^\uparrow\) uses incoming minus outgoing, whereas Chen et al. use tail
minus head.  Therefore the black-box call uses
\[
        B_{\rm Ch}^{\top}=-A^\uparrow,
        \qquad
        d_{\rm Ch}=-b^\uparrow .
\]
With this choice,
\[
        B_{\rm Ch}^{\top}f=d_{\rm Ch}
        \quad\Longleftrightarrow\quad
        A^\uparrow f=b^\uparrow .
\]
Lemma~\ref{lem:chen-epigraph-compatibility} records the corresponding
open-epigraph specialization for the lifted cost functions: transport arcs have
open epigraph \(0<x<\Lambda,\ y>0\), quadratic arcs have open epigraph
\(0<x<\Lambda,\ y>x^2/(2w_e)\), the objective is \(\sum_a y_a\), and the
strict epigraph formulation has the same infimum as the closed separable-cost
problem.  In particular, the variables \(y_a\) are local epigraph variables,
not additional conserved flow variables.  This proves the black-box statement
used in Theorem~\ref{thm:chen-convex-flow}.

\end{document}